%% file: main.tex
\documentclass[journal, final]{IEEEtran}

\IEEEoverridecommandlockouts

\input{preamble_IEEE.tex}

\begin{document}

\title{Tail-Calibrated Soft-Output GRAND for\\Finite-Memory Noise-Effect Posteriors}
\author{Behrooz~Razeghi~$\!{\orcid{0000-0001-9568-4166}}$,~
\IEEEmembership{Senior~Member,~IEEE}%
\thanks{B.~Razeghi is with Harvard University, Cambridge, MA, USA (e-mail: behroozrazeghi@seas.harvard.edu).}
\thanks{This work was supported by the Swiss National Science Foundation (SNSF) under Grant No.~222339.}%
}

\maketitle

\begin{abstract}
In guessing random additive noise decoding (GRAND), memory in the hard-decision noise effect changes the likelihood order of candidate noise effects. In soft-output decoding, the same memory also affects the finite-block quantity determining the missing-list probability: the codebook-restricted posterior mass outside the current list. Existing correlation-aware GRAND methods exploit local dependence without interleaving, but their stopping and soft-output rules are not derived from finite-memory posterior tails. Soft-output GRAND (SOGRAND) derives random-codebook a posteriori probability (APP) estimates for GRAND lists, but does not provide finite-memory algorithms for posterior weights, partition functions, tail masses, or bitwise tail marginals for correlated noise-effect posteriors. We introduce Tail-Calibrated SOGRAND for binary additive channels whose ambient hard-decision noise-effect posterior, conditioned on received soft information, is represented by a finite-memory energy. The decoder enumerates candidate noise effects in nondecreasing posterior energy, queries codebook membership as in GRAND, computes posterior weights and tail masses by finite-state recursions, and estimates the unqueried codebook-restricted denominator as $p_qT_q$, where $T_q$ is the ambient posterior tail mass and $p_q$ is the remaining random-codebook occupancy probability. With exact enumeration and no abandonment, the first listed codeword is ML under the likelihood model defining the posterior energy. We also prove an ambient posterior-tail abandonment bound and, separately, conditional unbiasedness, variance, and concentration bounds for the random-codebook missing-list estimator. The same posterior-tail decomposition gives blockwise APP estimates, missing-list probabilities, and bitwise APP log-likelihood ratios (LLRs) for finite-memory noise-effect posteriors.
\end{abstract}

\begin{IEEEkeywords}
Guessing random additive noise decoding, soft-output decoding, channels with memory, posterior tail probability, random coding.
\end{IEEEkeywords}

\vspace{26pt}

\section{Introduction}

\IEEEPARstart{M}{any} soft-input channel decoders are derived for, or are commonly used with, a memoryless channel law in which the likelihood factors as $P_{R^n|X^n}(r^n|x^n) = \prod_{i=1}^n P_{R_i|X_i}(r_i|x_i)$. This factorization enters through the channel evidence supplied to the decoder, either explicitly through a memoryless-channel model or implicitly through per-symbol reliability metrics used to initialize or order the decoding procedure. Examples include belief propagation on low-density parity-check (LDPC) code factor graphs, iterative turbo decoding, cyclic-redundancy-check-aided successive-cancellation list decoding of polar codes, and ordered-reliability GRAND-type decoding \cite{gallager1962low, kschischang2001factor,berrou1993near, arikan2009channel, niu2012crc, tal2015list, duffy2022ordered}. The memoryless likelihood model can be mismatched when the physical channel or the receiver front end induces memory in the effective noise seen by the decoder. Time-correlated fading, residual equalization error, and burst noise can produce temporally correlated hard-decision noise effects and correlated reliability values. A common system-level response is interleaving, which reduces local dependence at the decoder input and makes memoryless reliability metrics more appropriate. Interleaving, however, adds delay and disperses local dependence across separated decoder coordinates, thereby obscuring structure that a correlation-aware decoder could exploit \cite{an2022keep, duffy2023using}. In low-latency regimes, this delay is part of the reliability--latency tradeoff rather than a negligible preprocessing cost \cite{an2022keep, duffy2023using}.

\vspace{1pt}

GRAND reformulates maximum-likelihood decoding as a search over noise effects rather than over codewords \cite{duffy2019capacity}. Given a hard-decision received word $y^n$, GRAND queries candidate noise effects $z^n$ in nonincreasing probability order and tests whether $y^n\oplus z^n\in C_n$. For a binary additive channel whose noise effect is independent of the transmitted codeword, the first successful query returns an ML codeword when the noise-effect ordering is exact, with ties among equally likely noise effects resolved by a fixed rule. The search rule is code-agnostic except for the codebook membership test: it requires a membership oracle for $C_n$ and an ordering of the noise effects, but it does not otherwise require a code-specific decoding graph or trellis. The same ML argument applies to additive channels with memory whenever the candidate noise effects can be ordered according to the likelihood induced by the true noise law. Approximate orderings can reduce implementation cost, but they do not, in general, preserve finite-block ML decoding. GRANDAB imposes a preassigned query threshold and declares abandonment if no codeword is found within that threshold \cite{duffy2019capacity}. Although GRAND with abandonment is not, in general, a finite-block ML decoder, it remains capacity-achieving for random codebooks when the abandonment threshold is chosen so that the realized noise effect is queried with probability tending to one \cite{duffy2019capacity}.

\vspace{1pt}

Soft-input variants of GRAND extend the noise-effect search principle to settings in which reliability information is available at the decoder.
Soft GRAND (SGRAND) uses received-word-dependent reliabilities to construct a noise-effect query order corresponding to the soft-input ML rule under the assumed channel model \cite{solomon2020soft}. Ordered Reliability Bits GRAND (ORBGRAND) gives a structured approximation to this ordering: it sorts bit positions by increasing reliability and generates candidate noise effects in nondecreasing logistic weight \cite{duffy2022ordered}. Under this rule, pattern generation can be formulated as an ordered integer-partition problem and implemented by the Landslide algorithm \cite{duffy2022ordered}. Hardware studies have reported high-throughput, energy-efficient implementations of ORBGRAND-style orderings, including integrated universal GRAND decoders and a $40$-nm ORBGRAND chip with measured sub-pJ/bit energy \cite{riaz2021multi, riaz2024sub}. ORBGRAND-AI further showed that local dependence in the noise effect can be incorporated without interleaving by forming substitution metrics over local symbol neighborhoods and applying an ORBGRAND-style ordering to the resulting substitutions \cite{duffy2023using}.

\vspace{1pt}

A complementary line of work, soft-output GRAND (SOGRAND), focuses on soft-output reliability estimation for GRAND lists rather than finite-memory noise-effect modeling \cite{yuan2025soft}. Iterative decoding schemes for product codes, generalized low-density parity-check (GLDPC) codes, and concatenated constructions require component decoders to produce reliability information that can be exchanged across decoding iterations \cite{pyndiah1998near, galligan2023block, riaz2025iterative, yuan2025soft}. Related list-based reliability estimates appear in Forney's erasure/list decoding analysis and in Pyndiah's turbo-product-code decoder \cite{forney1968exponential, pyndiah1998near}. An earlier GRAND reliability formulation introduced per-decoding correctness and missing-list estimates for GRAND-generated lists \cite{galligan2023upgrade}. SOGRAND subsequently developed random-codebook APP estimates for listed candidates and for the probability that the transmitted codeword is absent from the current list \cite{yuan2025soft}. This missing-list term is used to calibrate blockwise and bitwise reliability estimates and allows GRAND to be used as a component decoder in iterative decoding of product and GLDPC codes. In the additive white Gaussian noise (AWGN) and fading-channel settings evaluated in \cite{yuan2025soft}, the resulting decoders were reported to outperform the selected 5G LDPC baselines.
 
\vspace{1pt}

Correlation-aware GRAND orderings and SOGRAND leave open a finite-block posterior-tail computation for correlated noise-effect posteriors. ORBGRAND-AI incorporates local dependence in the noise-effect ordering \cite{duffy2023using}. More recently, soft GRAND for intersymbol interference (SGRAND-ISI) derived a matched noise-effect ordering for linear Gaussian intersymbol-interference channels, together with ORBGRAND-type approximations \cite{li2026grand}. These methods address correlated-channel noise-effect ordering, but do not derive finite-memory computations of the ambient posterior tail mass or bitwise tail marginals. SOGRAND derives random-codebook APP estimates for GRAND lists, whereas the present work develops the finite-memory posterior computations needed to obtain posterior weights, partition functions, posterior tail masses, and bitwise tail marginals for correlated noise-effect posteriors \cite{yuan2025soft}. Thus, the GRAND-type constructions mentioned above do not provide, in a single finite-block procedure, both the ambient posterior tail probability $\Pr\{Z^n\notin \mathcal A_q\mid R^n=r^n\}$ and the codebook-restricted missing-list estimate for $\Pr\{X^n\notin \mathcal L_q\mid R^n=r^n,C_n\}$, where $\mathcal A_q$ is the set of noise effects queried after $q$ steps, and $\mathcal L_q$ is the corresponding list of codewords found by membership queries.

\vspace{1pt}

We address this problem for binary additive channels. In this setting, the hard-decision vector satisfies $Y^n=X^n\oplus Z^n$, so a candidate
noise effect $z^n$ induces the candidate codeword $y^n\oplus z^n$. We model the ambient full-space conditional law of $Z^n$, given the received
soft information and before imposing the codebook constraint, by a finite-range posterior energy. Specifically, for a received soft vector $R^n=r^n$, define the ambient full-space posterior weights $\pi_{r^n}(z^n) = \exp\{-E_{r^n}(z^n)\}/\mathsf Z(r^n)$, $\mathsf Z(r^n) = \sum_{u^n\in\{0,1\}^n}\exp\{-E_{r^n}(u^n)\}$. Under the assumed finite-memory model, these weights define the conditional distribution of the hard-decision noise effect before the codebook membership constraint is imposed. The codebook-conditioned APP is obtained by restricting these weights to the codeword-consistent noise effects $z^n$ satisfying $y^n\oplus z^n\in C_n$, and then renormalizing over that restricted set. Our formulation includes memoryless posterior factorizations as a special case, and includes first-order Markov and finite-state burst models as
finite-memory instances. The decoder enumerates noise effects in nondecreasing energy, queries whether $y^n\oplus z^n\in C_n$, and maintains the accumulated ambient posterior mass of the queried prefix together with the remaining ambient posterior tail mass. For finite-range energies with fixed memory order, the partition function is computed by a finite-state forward recursion, and the ambient tail mass outside the queried prefix is obtained by subtracting the accumulated queried mass
from one. Under the fixed-size random-codebook occupancy model, this ambient tail mass gives the estimator $\widehat U_q=p_qT_q$, $p_q= \frac{M-|\mathcal L_q|}{2^n-q}$, $ M=|C_n|$, for $q<2^n$, where $T_q$ is the ambient posterior mass of the unqueried tail. The resulting stopping criterion uses the plug-in missing-list
estimate $\widehat P_{\rm miss}(q)=\widehat U_q/(S_q+\widehat U_q)$ rather than a fixed membership-query limit.

\vspace{1pt}

We prove finite-block guarantees for the proposed procedure. With exact energy ordering and no abandonment, the first listed codeword is ML under
the likelihood model that defines the posterior energy. With ambient-tail abandonment, the additional probability of failing to query the realized
noise effect is bounded by the abandoned ambient posterior tail mass, provided that the assumed ambient posterior model $\pi_{r^n}$ equals the
true full-space conditional law of the hard-decision noise effect given $R^n=r^n$. Separately, under the fixed-size random-codebook model, we prove that $p_qT_q$ is the conditional-mean estimator of the unqueried codebook-restricted denominator contribution, and derive variance and concentration bounds controlled by the effective support size of the unqueried posterior tail. The same posterior-tail decomposition yields blockwise APP estimates and bitwise APP LLRs under a fixed $0$-over-$1$ convention.

\vspace{1pt}

We evaluate the decoder using three primary metrics. The block error rate (BLER) measures decoding accuracy. Calibration curves compare the plug-in
missing-list estimate with the empirical frequency of the event $X^n\notin\mathcal L_q$, using binned reliability estimates as in
SOGRAND \cite{yuan2025soft}. Query complexity is measured by the number of codebook membership tests. When comparing different enumerators, we
also report a software-work proxy that includes posterior-metric evaluations and priority-queue removals. Membership-query counts are
directly relevant to GRAND decoding because each queried noise effect requires a codebook membership test, as in the complexity analyses of GRAND and ORBGRAND \cite{duffy2019capacity, duffy2022ordered, riaz2024sub}.

\clearpage
 
Our contributions are as follows.
\begin{enumerate}[leftmargin=15pt]
\item We formulate Tail-Calibrated SOGRAND for binary additive channels whose hard-decision noise-effect posterior, conditioned on the received soft observation $R^n=r^n$, is represented by a finite-memory energy. Our formulation separates the ambient posterior law of the hard-decision noise effect from the codebook-restricted APP denominator induced by the membership constraint $y^n\oplus z^n\in C_n$.
\item 
We derive a tail-calibrated soft-output rule for GRAND searches under a fixed-size random-codebook ensemble. After $q$ queries, the unqueried codebook restricted denominator contribution $U_q$ differs from the ambient posterior tail mass $T_q$ because only unqueried noise effects whose associated candidate words are codewords contribute to the APP denominator. Conditional on the received soft information, the queried prefix, and its membership outcomes, we show that, for the fixed-size random-codebook ensemble, the conditional mean of the unqueried codebook-restricted denominator contribution is $p_qT_q$, where $p_q=(M-|\mathcal L_q|)/(2^n-q)$.

\item We prove finite-block reliability guarantees for energy-ordered decoding and ambient-tail abandonment. With exact enumeration and no abandonment, the first codeword encountered in nondecreasing posterior energy is ML under the likelihood model that defines the posterior energy. With ambient-tail abandonment, the additional probability of failing to query the realized noise effect is bounded by the abandoned ambient posterior tail mass, provided that the assumed posterior equals the true conditional law of the hard-decision noise effect. We also bound the error in the plug-in missing-list probability in terms of the relative error in $\widehat U_q$.

\item We derive blockwise and bitwise APP estimates from the same posterior-tail decomposition. The blockwise rule combines the posterior masses of listed codewords with the estimated unqueried codebook-restricted contribution $\widehat U_q$. The bitwise rule decomposes the unqueried contribution according to bit value and yields APP LLRs under a fixed $0$-over-$1$ convention.

\item We evaluate the method in a random-linear-code setting over Gauss--Markov and hard-decision correlated noise models. The experiments compare random-interleaved ORBGRAND, ORBGRAND-AI local-block orderings, ExactBlockProduct ablations using the same nonoverlapping block structure, and Exact-Markov Tail-Calibrated SOGRAND. In the first-hit membership-oracle comparison, matched finite-memory posterior ordering gives lower BLER point estimates and smaller average membership-query counts than the tested memoryless, rank-weighted approximate-independence, and block-product approximate-independence orderings, with an enumeration-cost tradeoff in the present implementation.
\end{enumerate}


\section{Related Work}
\label{sec:related_work}

\subsection{Code-specific soft decoding}

Many conventional soft decoders exploit algebraic or graphical structure specific to the code family. Low-density parity-check (LDPC) decoders use sparse parity-check constraints and message passing on factor graphs \cite{gallager1962low,kschischang2001factor}. Turbo-code decoders use the concatenated convolutional structure of the code and iteratively exchange extrinsic information between constituent decoders \cite{berrou1993near}. CRC-aided polar decoding combines polar codes \cite{arikan2009channel} with CRC-aided list decoding \cite{niu2012crc,tal2015list}. Product-code and turbo-product-code decoders build on Elias product codes \cite{elias1954error} and Pyndiah's reliability-output Chase-type component decoding \cite{pyndiah1998near}.  These methods can achieve low error rates at finite blocklengths in their target regimes by using decoding rules tailored to the code structure. This coupling between code family and decoder, however, limits code-family agnosticism, since changing the code family typically changes the decoding algorithm or architecture.

GRAND-based decoders separate the codebook membership test from the noise-effect ordering rule. As a result, the noise-search procedure is code-agnostic up to the membership test: the decoder requires an ordering of candidate noise effects and an oracle for membership in the codebook, but not a code-specific decoding graph, trellis, or algebraic decoder \cite{duffy2019capacity,riaz2021multi,riaz2024sub}.

\subsection{GRAND and GRANDAB}

For a codebook $C_n\subseteq\mathcal A^n$, an additive noise effect $N^n$, and a received word $Y^n=X^n\oplus N^n$, GRAND queries candidate noise sequences $z^n\in\mathcal A^n$ in nonincreasing order of $P(N^n=z^n)$. For each candidate $z^n$, it tests whether $Y^n\ominus z^n\in C_n$. Under the additive invertible channel model, with $N^n$ independent of
the transmitted codeword,
\begin{equation}
P(Y^n=y^n\mid X^n=c^n) = P(N^n=y^n\ominus c^n).
\end{equation}
Hence
\begin{equation}
\arg\!\max_{c^n\in C_n} \!P(Y^n\!=\!y^n\mid X^n\!=\!c^n)
\!=\! \arg\max_{c^n\in C_n} P(N^n\!=\!y^n\ominus c^n).
\end{equation}
Therefore, when the noise-effect ordering is exact, the first successful query returns an ML codeword, with ties among equally likely noise effects resolved by a fixed deterministic rule.

GRAND was shown to be capacity-achieving for random codebooks under suitable assumptions on the noise process, with error, success, and query complexity characterized through guesswork large deviations \cite{christiansen2012guesswork,duffy2019capacity}. GRANDAB adds abandonment: if no codeword is found within a prescribed query threshold, the decoder declares abandonment. GRAND with abandonment is not, in general, a finite-block ML decoder, because it may stop before reaching the noise effect associated with an ML codeword. Nevertheless, it remains capacity-achieving for random codebooks when the query threshold is chosen so that the realized noise effect is queried with probability tending to one while the probability of an earlier nontransmitted-codeword hit tends to zero. In particular, if the noise entropy rate is $H$, measured in base $|\mathcal A|$, then for any rate $R<1-H$ one may choose $\delta>0$ such that $H+\delta<1-R$ and use a threshold of order $|\mathcal A|^{n(H+\delta)}$ \cite{duffy2019capacity}.

Universal noise-guessing decoders have also been studied for unknown discrete and finite-state additive channels \cite{miyamoto2025universal, miyamoto2025universal_journal}. Those works address universality with respect to an unknown channel law and provide random-coding error and complexity guarantees. The present work instead assumes a specified finite-memory posterior model and develops finite-block posterior-tail and soft-output computations under that model.

\subsection{Soft-input GRAND and ORBGRAND}

Soft GRAND (SGRAND) and SRGRAND extend the GRAND noise-effect search to soft-input decoding by using received reliability information to construct received-word-dependent noise-effect orderings \cite{solomon2020soft, duffy2021guessing}. Under the assumed soft-input channel model, this ordering gives a soft-input ML benchmark. The exact SGRAND ordering, however, is less directly suited to highly parallel hardware implementations because the complete ordering of noise effects depends on the full reliability vector of the received word.

Ordered Reliability Bits GRAND (ORBGRAND) replaces this exact received-word-dependent ordering with a structured reliability-based approximation suitable for hardware implementation \cite{duffy2022ordered}. It sorts bit positions by increasing reliability, so that the least reliable position has rank one, and generates putative noise effects in nondecreasing reliability weight. In the zero-intercept logistic-weight model, the cost assigned to a binary noise pattern $z^n$, after this reliability sorting, is proportional to $w_L(z^n)=\sum_{i=1}^n i z_i$. Under this logistic-weight rule, noise-effect generation can be organized as an ordered integer-partition problem and implemented by the Landslide algorithm \cite{duffy2022ordered}. ORBGRAND has been shown empirically to approach SGRAND performance for moderate-redundancy short codes, and subsequent information-theoretic analysis has characterized ORBGRAND as almost capacity-achieving over the AWGN channel with antipodal input \cite{duffy2022ordered, liu2022orbgrand}.

\subsection{Correlation-aware GRAND and ORBGRAND-AI}

Many soft-input decoders are derived under a memoryless likelihood model, in which the channel likelihood factors across coded bits or symbols. Channels with memory violate this factorization through mechanisms such as burst noise, time-correlated fading, residual equalization error, and temporally correlated interference. A common system-level response is interleaving, which makes the decoder input more nearly consistent with a memoryless likelihood model. Interleaving, however, increases latency and spreads local statistical dependence across separated positions, thereby removing structure that a matched decoder could exploit.

GRAND-MO and related Markov-order GRAND methods showed that correlated noise effects can be modeled directly rather than dispersed by interleaving \cite{an2022keep}. ORBGRAND-AI extends this idea to soft detection by forming local blocks of received symbols, computing reliability metrics for alternative block substitutions, and applying an ORBGRAND-style ordering under an approximate-independence factorization \cite{duffy2023using}. Related interleaving-free ORBGRAND-AI decoding for channels with intersymbol interference was studied in \cite{duffy2025decoding}. The simulations reported in \cite{duffy2023using} show that exploiting local channel correlation can improve BLER and avoid the delay associated with interleaving, relative to the interleaved benchmark considered in that work. These correlation-aware methods address noise-effect ordering, but do not compute finite-memory ambient posterior tail masses, codebook-restricted missing-list estimates, or bitwise tail marginals.

In our earlier LP-GRAND work, we developed an exact likelihood-ordered enumerator for BPSK transmission over correlated Gaussian noise when the interaction graph induced by the specified precision matrix admits a bounded-width path decomposition~\cite{razeghi2026lpgrand}. The same graphical representation also supports exact computation, under the assumed Gaussian metric, of the full-space partition function and the normalized posterior weights of the ambient noise-effect patterns. In the present work, we use these finite-memory posterior quantities to construct soft outputs: we compute the posterior mass of the unqueried patterns, estimate their codebook-restricted contribution to the APP denominator under the random-codebook model, and derive blockwise and bitwise posterior quantities.

\subsection{Soft-output GRAND}

A complementary line of GRAND-related work addresses soft-output generation and soft-input/soft-output use. An early GRAND reliability formulation introduced per-decoding correctness and missing-list estimates for GRAND-generated lists \cite{galligan2023upgrade}. Related developments include soft-input soft-output joint data detection, soft-output guessing-codeword decoding, and SOGRAND \cite{sarieddeen2023soft,duffy2024soft,yuan2025soft}. Iterative decoding of product codes, generalized low-density parity-check (GLDPC) codes, and concatenated constructions often requires component decoders to produce reliability information that can be exchanged across component-decoding steps \cite{elias1954error, pyndiah1998near, galligan2023block, riaz2025iterative, yuan2025soft}. Related
list-based reliability measures trace back to Forney's erasure/list decoding analysis \cite{forney1968exponential} and to Pyndiah's turbo-product-code decoder \cite{pyndiah1998near}. Many list-based reliability rules condition their estimates on the event that the transmitted codeword is contained in the list. SOGRAND develops random-codebook APP estimates for listed candidates and for the probability that the transmitted codeword is absent from the current list \cite{yuan2025soft}. This missing-list term was shown in \cite{yuan2025soft} to improve blockwise and bitwise soft-output calibration and to enable iterative GRAND decoding of product and GLDPC codes. In the AWGN and fading settings evaluated in \cite{yuan2025soft}, the resulting decoders were reported to outperform selected 5G LDPC baselines. 

Subsequent work uses the SOGRAND missing-list estimate for dynamic list termination and guesswork reduction \cite{rapp2025sogrand}, exploits binary linear-code constraints to improve GRAND-family soft-output estimates beyond the uniform random-codebook approximation \cite{feng2025leveraging}, and specializes SOGRAND to single-parity-check component decoding for LDPC check-node updates \cite{duffy2026sogrand}. The SOGRAND formulas apply to GRAND query orders for which the queried noise-effect probabilities are available \cite{galligan2023upgrade,yuan2025soft}. The present paper develops the finite-memory posterior computations needed to obtain these probabilities, posterior tail masses, and bitwise tail marginals for correlated noise-effect posteriors, and uses those quantities to estimate the unqueried codebook-restricted denominator contribution.

\vspace{-2pt}

\section{System Model}

\vspace{-2pt}

Let $C_n\subseteq\{0,1\}^n$ be a binary codebook with $|C_n|=M=2^k$ and rate $R_{\rm c}=k/n$. The transmitted codeword $X^n$ is drawn uniformly from $C_n$. The receiver observes a real-valued vector $R^n\in\mathbb R^n$, and we denote its realization by $r^n$. Let $h:\mathbb R\to\{0,1\}$ be the hard-decision map, and define the hard-decision vector $Y^n$ by $Y_i=h(R_i),\qquad i=1,\ldots,n $. For the received realization $R^n=r^n$, we write $y_i=h(r_i)$. The hard-decision noise effect associated with the transmitted codeword is $Z^n=Y^n\oplus X^n$. Thus, conditioned on $R^n=r^n$, a candidate word $c^n\in\{0,1\}^n$ corresponds to the candidate noise effect $z^n(c^n)=y^n\oplus c^n $.
Since $X^n$ is uniform on $C_n$, the APP over the codebook is
\begin{equation}\label{eq:codebook_app}
\Pr\{X^n \! = \!c^n\mid R^n \!= \! r^n,C_n\} \!=\!
\frac{\Lambda_{r^n}(z^n(c^n))}{\sum_{\tilde c^n\in C_n}\Lambda_{r^n}(z^n(\tilde c^n))}, c^n \!\in \!C_n,
\end{equation}
where $\Lambda_{r^n}(z^n) \coloneqq p_{R^n|X^n}(r^n\mid X^n=y^n\oplus z^n)$ is the likelihood density of the soft observation under the candidate codeword $y^n\oplus z^n$. Multiplying all likelihoods in \eqref{eq:codebook_app} by the same positive constant does not change the
posterior distribution on $C_n$. We therefore define the ambient (auxiliary) full-space posterior model
\begin{equation}\label{eq:ambient_pi}
\pi_{r^n}(z^n) \coloneqq
\frac{\Lambda_{r^n}(z^n)}{\sum_{u^n\in\{0,1\}^n}\Lambda_{r^n}(u^n)}, \qquad z^n\in\{0,1\}^n .
\end{equation}
The distribution $\pi_{r^n}$ is normalized over all binary noise effects. 
Equivalently we can interpret it as a normalized likelihood over candidate noise effects before the codebook constraint is imposed. It is not the codebook-conditioned APP induced by the uniform prior on $C_n$. The codebook APP in \eqref{eq:codebook_app} is obtained by restricting the normalizing sum to the noise effects associated with codewords, $\{\,z^n(c^n)=y^n\oplus c^n:\ c^n\in C_n\,\}$. Thus $\pi_{r^n}$ is used to order candidate noise effects and to compute ambient tail masses, whereas the codebook-conditioned posterior is obtained by renormalizing the same weights over the codeword-consistent noise effects. The ML codeword can equivalently be written as $\widehat x_{\rm ML} \in \arg\max_{c^n\in C_n}\pi_{r^n}(y^n\oplus c^n)$.

\subsection{Finite-memory ambient posterior energy}

We assume that the ambient full-space posterior model $\pi_{r^n}$ admits a finite-memory energy representation in the hard-decision noise effect. In the first-order case, for strictly positive weights,
\begin{equation}\label{eq:first_order_posterior}
\pi_{r^n}(z^n) = \frac{\exp\{-E_{r^n}(z^n)\}}{\mathsf Z(r^n)},
\end{equation}
where
\begin{equation}\label{eq:first_order_energy}
E_{r^n}(z^n) = \sum_{i=1}^n \alpha_i(z_i;r^n) + \sum_{i=2}^n \beta_i(z_{i-1},z_i;r^n),
\end{equation}
and $\mathsf Z(r^n) = \sum_{u^n\in\{0,1\}^n} \exp\{-E_{r^n}(u^n)\}$. The local potentials $\alpha_i$ and $\beta_i$ may depend on the
received soft vector $r^n$. The memoryless case is obtained by setting $\beta_i\equiv 0$, in which case $\pi_{r^n}$ factorizes across
coordinates. The reliability-ordered logistic weight used in ORBGRAND is an additive ordering cost after reliability sorting; in our notation, it corresponds to a structured memoryless ordering rather than to a finite-memory transition model. First-order Markov models for the hard-decision noise effect are represented by nonzero transition potentials $\beta_i$. Higher-order Markov and finite-state burst models can be treated by replacing the pairwise state $(z_{i-1},z_i)$ with a finite-state trellis state.

A burst-favoring posterior assigns lower excess energy to a contiguous error run than to the corresponding separated single-error contributions. This property can be stated without fixing a particular parameterization of the local potentials. For an interval $I=[\ell,r]\subseteq\{1,\ldots,n\}$, let $\mathbf 1_I$ denote the binary vector that is one on $I$ and zero elsewhere, and define $\Gamma_{r^n}(I) \coloneqq E_{r^n}(\mathbf 1_I)-E_{r^n}(0^n)$. We say that the posterior energy is burst-favoring on $I=[\ell,r]$, with $\ell<r$, if $\Gamma_{r^n}([\ell,r]) < \sum_{i=\ell}^r \Gamma_{r^n}(\{i\})$. Thus, the contiguous error pattern on $I$ has lower excess energy than the sum of the excess energies of isolated single-bit errors at the same positions. Since $\Gamma_{r^n}(I)$ is defined by differences of the full energy $E_{r^n}$, the condition is invariant to additive constants in $E_{r^n}$ and to equivalent local-potential parameterizations that leave $\pi_{r^n}$ unchanged.

\vspace{-5pt}

\section{Tail-Calibrated SOGRAND for Finite-Memory Posteriors}

Let $z^{(1)},z^{(2)},\ldots,z^{(2^n)}$ be an ordering of $\{0,1\}^n$ satisfying
\begin{equation}\label{eq:energy_ordering} 
E_{r^n}(z^{(1)})\le E_{r^n}(z^{(2)})\le\cdots \le E_{r^n}(z^{(2^n)}), 
\end{equation}
with ties resolved by a fixed deterministic rule. For $q\in\{0,\ldots,2^n\}$, define the queried prefix $\mathcal A_q=\{z^{(1)},\ldots,z^{(q)}\}$, with $\mathcal A_0=\emptyset$. At query $j$, the decoder forms the candidate word $c^{(j)}=y^n\oplus z^{(j)}$ and tests whether $c^{(j)}\in C_n$. Let
\begin{equation}
J_q=\{j\le q:y^n\oplus z^{(j)}\in C_n\}
\end{equation}
be the set of successful query indices, and define the corresponding codeword set
\begin{equation}\label{eq:list_def}
\mathcal L_q =\{y^n\oplus z^{(j)}:j\in J_q\}, \quad |\mathcal L_q|=|J_q|.
\end{equation}
When $J_q\neq\emptyset$, define $j_q^{\rm first}=\min J_q$, $c_q^{\rm first}=y^n\oplus z^{(j_q^{\rm first})}$. The word $c_q^{\rm first}$ is the first codeword found by the
membership-query sequence. Since the map $z^n\mapsto y^n\oplus z^n$ is one-to-one on $\{0,1\}^n$, we have $|\mathcal L_q|=|J_q|$. The queried codebook-restricted ambient mass is
\begin{equation}\label{eq:Sq_def}
S_q = \sum_{j\in J_q}\pi_{r^n}(z^{(j)}).
\end{equation}
The ambient tail mass outside the queried prefix is
\begin{equation}\label{eq:Tq_def}
T_q = 1-\sum_{j=1}^q\pi_{r^n}(z^{(j)}) = \sum_{z^n\notin \mathcal A_q}\pi_{r^n}(z^n).
\end{equation}

For the first-order energy model in \eqref{eq:first_order_energy}, the partition function $\mathsf Z(r^n)$ is computed by a finite-state forward recursion. Let $F_i(b)$ denote the unnormalized posterior mass of all prefixes $z_1^i$ ending in $z_i=b$. Initialize $F_1(a)=\exp\{-\alpha_1(a;r^n)\}$, $a\in\{0,1\}$. For $i=2,\ldots,n$, $b\in\{0,1\}$, 
\begin{eqnarray}\label{eq:forward_recursion}
F_i(b) \!
= \! \exp\{-\alpha_i(b;r^n)\}
\!\!\!\!\! \sum_{a\in\{0,1\}}
\!\!\!\!\! F_{i-1}(a)\exp\{-\beta_i(a,b;r^n)\}.
\end{eqnarray}
Thus
\begin{equation}\label{eq:partition_forward}
\mathsf Z(r^n)=F_n(0)+F_n(1).
\end{equation}
Each queried ambient posterior weight is then $\pi_{r^n}(z^{(j)}) = \exp\{-E_{r^n}(z^{(j)})\} / \mathsf Z(r^n)$. Consequently, the tail mass $T_q$ is obtained by subtracting the accumulated queried ambient mass from one. If $\mathsf Z(r^n)$ and the queried weights are computed from the same energy $E_{r^n}$, then $T_q$ is the exact ambient tail mass of $\{0,1\}^n\setminus\mathcal A_q$ under the assumed first-order posterior model. No explicit summation over the unqueried set is required.

\subsection{Random-codebook unqueried denominator}

Using the full-space normalized weights $\pi_{r^n}$, the codebook-restricted normalizing denominator is
\begin{equation}\label{eq:codebook_denominator}
D(C_n,r^n) = \sum_{c^n\in C_n}\pi_{r^n}(y^n\oplus c^n).
\end{equation}
This denominator is the likelihood denominator in \eqref{eq:codebook_app} divided by the common full-space normalizing constant in \eqref{eq:ambient_pi}. After $q$ queries, it decomposes as
\begin{equation}\label{eq:denominator_decomposition}
D(C_n,r^n)=S_q+U_q,
\end{equation}
where
\begin{equation}\label{eq:unqueried_denominator}
U_q = \sum_{z^n\notin \mathcal A_q} \pi_{r^n}(z^n) \mathbf 1\{y^n\oplus z^n\in C_n\}.
\end{equation}
The term $U_q$ is the unqueried codebook-restricted contribution to the normalizing denominator, whereas the ambient tail mass is $T_q=\sum_{z^n\notin\mathcal A_q}\pi_{r^n}(z^n)$. The two quantities differ because the APP denominator receives contributions only from unqueried noise effects whose associated candidate words $y^n\oplus z^n$ are codewords.

Under the fixed-size random-codebook ensemble, $C_n$ is a uniformly chosen subset of $\{0,1\}^n$ with cardinality $M=2^k$. Conditional on the queried prefix, the membership outcomes in that prefix, the listed codewords $\mathcal L_q$, and the received realization $R^n=r^n$, the remaining $M-|\mathcal L_q|$ codewords are uniformly distributed among the $2^n-q$ unqueried candidate words. Hence, for $q<2^n$, the conditional occupancy probability of an unqueried candidate word is $p_q = \frac{M-|\mathcal L_q|}{2^n-q}$. We estimate the unqueried codebook-restricted denominator contribution by $\widehat U_q = p_qT_q$, and define the estimated normalizing denominator $\widehat D_q = S_q+\widehat U_q$.
For each listed codeword $c^{(j)}=y^n\oplus z^{(j)}$, $j\in J_q$, we set
\begin{equation}\label{eq:list_app_estimate}
\widehat P_q\{X^n=c^{(j)}\mid R^n=r^n,C_n\} = \frac{\pi_{r^n}(z^{(j)})}{\widehat D_q}.
\end{equation}
The estimated probability that the transmitted codeword is absent from the current list is
\begin{equation}\label{eq:missing_list_estimate}
\widehat P_q\{X^n\notin\mathcal L_q\mid R^n=r^n,C_n\} = \frac{\widehat U_q}{\widehat D_q}.
\end{equation}

\vspace{-9pt}

\subsection{Noise-pattern enumeration}
\label{subsec:noise_pattern_enumeration}

For finite $n$, exact enumeration in nondecreasing $E_{r^n}(z^n)$ can be implemented by best-first search, or by a $K$-shortest-path procedure, on the finite-state trellis associated with the energy model \cite{seshadri1994list}. Such a procedure produces the exact energy-ordered sequence. However, generating a large prefix of the ordered list can still require exponentially many candidate outputs in $n$. For first-order binary energies, the structure of the energy can be made explicit by decomposing a noise pattern into its maximal runs of ones.

\subsubsection{Burst-atom decomposition}

For a first-order binary energy, define an interval atom $B=[\ell,r]$, $1\le \ell\le r\le n$, with binary indicator vector $\mathbf 1_B$. Its excess energy relative to the all-zero noise effect
is $\Gamma_{r^n}(B)=E_{r^n}(\mathbf 1_B)-E_{r^n}(0^n)$. For $B=[\ell,r]$, expansion of \eqref{eq:first_order_energy} gives
\begin{align}
\Gamma_{r^n}([\ell,r])
& = \sum_{i=\ell}^{r} \bigl(\alpha_i(1;r^n)-\alpha_i(0;r^n)\bigr)
+ \Delta_{\rm in}(\ell) \nonumber \\
& + \sum_{i=\ell+1}^{r} \Delta_{\rm stay}(i)
+ \Delta_{\rm out}(r),
\label{eq:burst_atom_expansion}
\end{align}
where, for $\ell>1$,
\begin{equation}
\Delta_{\rm in}(\ell) = \beta_{\ell}(0,1;r^n)-\beta_{\ell}(0,0;r^n),
\end{equation}
for $\ell<i\le r$,
\begin{equation}
\Delta_{\rm stay}(i) = \beta_i(1,1;r^n)-\beta_i(0,0;r^n),
\end{equation}
and, for $r<n$,
\begin{equation}
\Delta_{\rm out}(r) = \beta_{r+1}(1,0;r^n)-\beta_{r+1}(0,0;r^n).
\end{equation}
The entry term is omitted when $\ell=1$, and the exit term is omitted when $r=n$.

Every binary noise pattern $z^n$ has a unique decomposition into its maximal runs of ones:\vspace{-2pt} $z^n = \bigoplus_{j=1}^{m}\mathbf 1_{B_j}$, $B_j=[\ell_j,r_j]$, where the intervals are ordered and separated by at least one zero, $r_j+1<\ell_{j+1}$, $ j=1,\ldots,m-1$. The case $m=0$ corresponds to $z^n=0^n$. For a first-order energy, the separation condition implies the exact additive decomposition\vspace{-2pt}
\begin{equation}\label{eq:multiburst_additivity}
E_{r^n}(z^n)-E_{r^n}(0^n) = \sum_{j=1}^{m}\Gamma_{r^n}(B_j).
\end{equation}
Therefore, ordering noise patterns by $E_{r^n}(z^n)$ is equivalent to ordering feasible collections of separated interval atoms by their total excess energy. In burst-favoring models, a contiguous run can have lower excess energy than the corresponding collection of isolated single-bit errors. The relation to ORBGRAND is only at the level of additive pattern-enumeration costs: ORBGRAND assigns additive costs to individual reliability-ordered bit flips, whereas the present construction assigns additive costs to interval atoms induced by the finite-memory posterior energy.

\subsubsection{Approximate energy ordering}

Let $E_{r^n}$ denote the posterior energy that defines the target ordering, and let $\widehat E_{r^n}$ denote the energy used by an approximate enumerator. The following calculation bounds the distortion of posterior-weight ratios caused by replacing $E_{r^n}$ with $\widehat E_{r^n}$. For $\delta>0$, let $\mathcal T_n(\delta;r^n)\subseteq\{0,1\}^n$ be a set of noise effects on which the energy approximation error is controlled. One possible choice is\vspace{-2pt}
\begin{equation}
\mathcal T_n(\delta;r^n) = \left\{ z^n:
\left| -\frac1n\log \pi_{r^n}(z^n)-h_n(r^n) \right|
\le \delta \right\},
\end{equation}
where $h_n(r^n)$ is a prescribed centering function. If the random posterior mass $\Pi_n(\delta;R^n) \coloneqq \sum_{z^n\in\mathcal T_n(\delta;R^n)} \pi_{R^n}(z^n)$ satisfies $\Pi_n(\delta;R^n)\to1$ in probability, then $\mathcal T_n(\delta;R^n)$ has asymptotically full posterior mass.  Assume that, on $\mathcal T_n(\delta;r^n)$,\vspace{-2pt}
\begin{equation}\label{eq:energy_approx_uniform}
\sup_{z^n\in\mathcal T_n(\delta;r^n)} \frac1n
\left| \widehat E_{r^n}(z^n)-E_{r^n}(z^n) \right|
\le \xi_n .
\end{equation}
Define\vspace{-2pt}
\begin{equation}
\widehat\pi_{r^n}(z^n) \coloneqq
\frac{\exp\{-\widehat E_{r^n}(z^n)\}}{\sum_{u^n\in\{0,1\}^n}\exp\{-\widehat E_{r^n}(u^n)\}} .
\end{equation}
Then, for any $u^n,v^n\in\mathcal T_n(\delta;r^n)$,
\begin{equation}\label{eq:ratio_distortion}
\frac1n \left| \log \frac{\pi_{r^n}(u^n)/\pi_{r^n}(v^n)}{\widehat\pi_{r^n}(u^n)/\widehat\pi_{r^n}(v^n)} \right|
\le 2\xi_n .
\end{equation}
Indeed, the normalizing constants cancel in the ratio, and the left-hand side equals
\begin{equation}
\frac1n \left| \bigl(\widehat E_{r^n}(u^n)-E_{r^n}(u^n)\bigr) - \bigl(\widehat E_{r^n}(v^n)-E_{r^n}(v^n)\bigr) \right|.
\end{equation}
Thus, for pairs in $\mathcal T_n(\delta;r^n)$, the ratio $\pi_{r^n}(u^n)/\pi_{r^n}(v^n)$ is changed by at most the multiplicative factor $\exp\{2n\xi_n\}$. This factor is subexponential in $n$ when $\xi_n=o(1)$.

The bound in \eqref{eq:ratio_distortion} controls posterior-weight ratios only. It does not imply that the exact ordering induced by $E_{r^n}$ is preserved. Pairwise order preservation on $\mathcal T_n(\delta;r^n)$ would require the relevant energy gaps under $E_{r^n}$ to exceed the corresponding approximation errors. An error exponent analysis for an approximate enumerator would also require a bound
on the posterior mass outside $\mathcal T_n(\delta;r^n)$, together with a comparison between the decoding error bounds induced by $E_{r^n}$ and by $\widehat E_{r^n}$.

\vspace{-5pt}

\subsection{Computation of posterior weights and tails}

The full-space normalized weights $\pi_{r^n}(z^n)$ and the ambient tail mass $T_q$ are computed from the finite-memory posterior energy. Their computation uses the partition function and the queried noise effects, and does not require explicit summation over all $2^n$ noise effects. For the first-order model in \eqref{eq:first_order_energy}, the partition function $\mathsf Z(r^n)$ is computed by the forward recursion in \eqref{eq:forward_recursion}--\eqref{eq:partition_forward}. Hence, once a queried pattern $z^{(j)}$ has been generated, its full-space normalized weight is
\begin{equation}
w_j = \pi_{r^n}(z^{(j)}) = \frac{\exp\{-E_{r^n}(z^{(j)})\}}{\mathsf Z(r^n)} .
\end{equation}
The decoder maintains the accumulated full-space mass of the queried prefix, $B_q^{\rm amb} = \sum_{j=1}^q w_j$. The ambient tail mass is then $T_q=1-B_q^{\rm amb}$. Thus, $T_q$ is exact under the assumed finite-memory posterior model, provided that $\mathsf Z(r^n)$ and the queried weights are computed from the same energy $E_{r^n}$, and that $\mathcal A_q$ contains the distinct noise effects actually queried. No explicit summation over $\{0,1\}^n\setminus\mathcal A_q$ is required.
For a memory-$m$ posterior energy, the same computation is performed on a finite-state trellis whose state records the previous $m$ noise symbols. The forward recursion has $2^m$ states. A generic dense transition implementation has cost $\mathcal{O}(n2^{2m})$, whereas the binary shift-register structure has $\mathcal{O}(n2^{m+1})$ transitions. Therefore, for fixed memory order $m$, the partition function is computable in time linear in $n$. The weights of queried patterns are then obtained by evaluating, or incrementally updating, their energies and normalizing by $\mathsf Z(r^n)$.

The codebook-restricted unqueried denominator contribution is\vspace{-3pt}
\begin{equation}
U_q = \sum_{z^n\notin \mathcal A_q} \pi_{r^n}(z^n) \mathbf 1\{y^n\oplus z^n\in C_n\}.
\end{equation}
Note that this quantity cannot be computed from the posterior energy alone using only codebook membership queries, because it also depends on which unqueried candidate words $y^n\oplus z^n$ are codewords. Computing $U_q$ exactly would require, in general, testing the remaining unqueried memberships or exploiting additional code-specific structure. Under the fixed-size random-codebook model, we estimate $U_q$ by
\begin{equation}
\widehat U_q = p_qT_q,
\qquad
p_q= \frac{M-|\mathcal L_q|}{2^n-q}, \qquad q<2^n.
\end{equation}
Thus, the full-space normalized weights $w_j$ and the ambient tail mass $T_q$ are computed exactly under the assumed finite-memory posterior model. The codebook-restricted unqueried denominator $U_q$ is not computed exactly from the ambient posterior energy and the queried memberships alone. Under the fixed-size random-codebook ensemble, $\widehat U_q=p_qT_q$ is its conditional-mean estimator given the current query transcript. In our random-linear-code experiments, this quantity is used as the corresponding random-codebook occupancy approximation.

\vspace{-3pt}

\subsection{Tail-calibrated stopping}

Let $\eta\in(0,1)$ be a prescribed tolerance, and let $1 \le q_{\max}\le 2^n$ be an implementation-imposed query limit. For $q<2^n$, define the plug-in missing-list estimate\vspace{-3pt}
\begin{equation}\label{eq:pmiss_hat}
\widehat P_{\rm miss}(q) = \frac{\widehat U_q}{S_q+\widehat U_q}.
\end{equation}
The decoder stops at the first $q\le q_{\max}$ for which $|\mathcal L_q|>0$, $S_q+\widehat U_q>0$, and $\widehat P_{\rm miss}(q)\le\eta$, if such a $q$ occurs. Otherwise, it stops when the query limit $q_{\max}$ is reached.
The statistic $\widehat P_{\rm miss}(q)$ is obtained by replacing the unqueried codebook-restricted denominator contribution $U_q$ with its random-codebook estimate $\widehat U_q=p_qT_q$:\vspace{-2pt}
\begin{equation}
\widehat P_{\rm miss}(q) = \widehat P_q\{X^n\notin\mathcal L_q\mid R^n=r^n,C_n\}
= \frac{\widehat U_q}{S_q+\widehat U_q}.
\end{equation}
Thus, unlike fixed-query abandonment rules such as GRANDAB, the proposed rule stops according to a plug-in estimate of the codebook-restricted posterior probability that the transmitted codeword is absent from the current list.

\vspace{-7pt}

\subsection{Bitwise soft output}
\label{subsec:bitwise_soft_output}
 
The decoder also produces bitwise APP estimates. For $a\in\{0,1\}$, define the unqueried ambient tail mass associated with codeword bit value $a$ at position $i$ by
\begin{equation}\label{eq:tail_bit_mass}
T_{q,i}(a) = \sum_{z^n\notin \mathcal A_q} \mathbf 1\{y_i\oplus z_i=a\}\pi_{r^n}(z^n).
\end{equation}
For each $i$, $T_q=T_{q,i}(0)+T_{q,i}(1)$. When $T_q>0$, define the tail-conditioned bit marginal $\mu_{q,i}(a) =  T_{q,i}(a) /T_q$. For the first-order model in \eqref{eq:first_order_energy}, the full-space bit mass
\begin{equation}
T_{0,i}(a) = \sum_{z^n\in\{0,1\}^n} \mathbf 1\{y_i\oplus z_i=a\}\pi_{r^n}(z^n)
\end{equation}
is computed by a finite-state forward--backward recursion \cite{bahl1974optimal}. The unqueried tail mass $T_{q,i}(a)$ is then obtained by subtracting the queried-prefix contribution:\vspace{-3pt}
\begin{equation}
T_{q,i}(a) = T_{0,i}(a) - \sum_{j=1}^{q} \mathbf 1\{y_i\oplus z_i^{(j)}=a\}\pi_{r^n}(z^{(j)}).
\end{equation}
For memory order $m$, the same computation is performed on the $2^m$-state trellis, followed by subtraction of the queried-prefix contributions. Hence the full-space bit masses, and therefore the unqueried tail bit masses, are computed without explicit summation over $\{0,1\}^n$.

Define the listed codebook-restricted bit contribution $S_{q,i}(a) = \sum_{\substack{j\in J_q:\, c^{(j)}_i=a}} \pi_{r^n}(z^{(j)})$. The corresponding unqueried codebook-restricted bit contribution is\vspace{-3pt}
\begin{equation}
U_{q,i}(a) = \!\!\! \sum_{z^n\notin \mathcal A_q}
\!\!\! \mathbf 1\{y_i\oplus z_i=a\} \pi_{r^n}(z^n) \mathbf 1\{y^n\oplus z^n\in C_n\}.
\end{equation}
Under the fixed-size random-codebook model, its conditional-mean estimate given the current query transcript is $\widehat U_{q,i}(a) = p_qT_{q,i}(a)$, $a\in\{0,1\}$. The plug-in bitwise APP estimate is therefore
\begin{equation}\label{eq:bit_app_tail_equiv}
\widehat P_q\{X_i=a\mid R^n=r^n,C_n\} = \frac{S_{q,i}(a)+p_qT_{q,i}(a)}{\widehat D_q},
\, a\in\{0,1\}.
\end{equation}
Equivalently, when $T_q>0$, this estimate can be written as
\begin{equation}\label{eq:bit_app_tail}
\widehat P_q\{X_i=a\mid R^n=r^n,C_n\} =
\frac{S_{q,i}(a)}{\widehat D_q} + \frac{\widehat U_q}{\widehat D_q}\mu_{q,i}(a).
\end{equation}
The two estimated bit probabilities sum to one because $S_{q,i}(0)+S_{q,i}(1)=S_q$ and $p_qT_{q,i}(0)+p_qT_{q,i}(1)=\widehat U_q$.
The corresponding APP log-likelihood ratio is
\begin{equation}\label{eq:app_llr}
L^{\rm APP}_{q,i} = \log \frac{\widehat P_q\{X_i=0\mid R^n=r^n,C_n\}}{\widehat P_q\{X_i=1\mid R^n=r^n,C_n\}},
\end{equation}
where the LLR convention is $0$-over-$1$. If one of the two estimated bit probabilities is zero, $L^{\rm APP}_{q,i}$ is interpreted as an
extended-real value. In numerical implementations it may be clipped to a prescribed finite interval.

\begin{remark}[Extrinsic information]
The APP LLR in \eqref{eq:app_llr} is computed without a priori input LLRs. If a priori LLRs are supplied to the component decoder, they must be included in the candidate metric before extrinsic information is formed. With the $0$-over-$1$ convention, $L_i^{\rm A} = \log\frac{P_{\rm A}\{X_i=0\}}{P_{\rm A}\{X_i=1\}}$, the candidate-codeword prior contributes, up to an additive constant, $\sum_{\ell=1}^n L_\ell^{\rm A} c_\ell$, $c^n=y^n\oplus z^n$, to the energy of candidate $z^n$. If $L^{\rm APP}_{q,i}(L^{\rm A})$ denotes the APP LLR computed with this augmented metric, then the extrinsic LLR is $L^{\rm E}_{q,i} = L^{\rm APP}_{q,i }(L^{\rm A})-L_i^{\rm A}$, where all LLRs use the same $0$-over-$1$ convention.
\end{remark}

\vspace{-8pt}

\subsection{Decoder procedure}
\label{subsec:decoder_procedure}

Algorithm~\ref{alg:tail_calibrated_sogrand} summarizes the tail-stopping Tail-Calibrated SOGRAND procedure after the posterior-tail quantities, bitwise soft-output quantities, and admissible noise-effect enumerators have been defined. The first-hit and list-output variants use the same posterior-weight computation, membership-query loop, and selected enumerator, but replace the tail-stopping condition by the stopping rules specified in Remark~\ref{rem:decoding_modes}. The selected enumerator is any procedure that returns distinct noise effects in the prescribed order. When an exact energy-ordered enumerator is used, this order is nondecreasing in $E_{r^n}$, with ties resolved by the deterministic rule used in \eqref{eq:energy_ordering}.

\begin{algorithm}[!t]
\small
\caption{Tail-Calibrated SOGRAND for Finite-Memory Posteriors}
\label{alg:tail_calibrated_sogrand}
\begin{algorithmic}[1]
\Require Soft observation $r^n$; hard-decision vector $y^n$;
codebook membership oracle $\Phi_n:\{0,1\}^n\to\{0,1\}$;
posterior energy $E_{r^n}$; distinct-noise-effect enumerator;
codebook size $M=|C_n|$; tolerance $\eta\in(0,1)$;
query limit $1 \le q_{\max}<2^n$.
\State Compute $\mathsf Z(r^n)$ by the finite-state forward recursion.
\State Initialize $S\leftarrow0$, $B^{\rm amb}\leftarrow0$,
$\widehat U\leftarrow0$, $\mathcal L\leftarrow[\,]$, and
$\mathcal Q\leftarrow[\,]$.
\State Set $q_\star\leftarrow0$.
\For{$q=1,\ldots,q_{\max}$}
    \State Generate the next distinct noise effect $z^{(q)}$ using the
    selected enumerator.
    \State Compute $w_q \leftarrow
    \exp\{-E_{r^n}(z^{(q)})\}/\mathsf Z(r^n)$.
    \State $B^{\rm amb}\leftarrow B^{\rm amb}+w_q$.
    \State $c^{(q)}\leftarrow y^n\oplus z^{(q)}$.
    \State Append $(z^{(q)},c^{(q)},w_q)$ to $\mathcal Q$.
    \If{$\Phi_n(c^{(q)})=1$}
        \State Append $c^{(q)}$ to $\mathcal L$.
        \State $S\leftarrow S+w_q$.
    \EndIf
    \State $T\leftarrow 1-B^{\rm amb}$.
    \State $p\leftarrow (M-|\mathcal L|)/(2^n-q)$.
    \State $\widehat U\leftarrow pT$.
    \State $q_\star\leftarrow q$.
    \If{$|\mathcal L|>0$, $S+\widehat U>0$, and
    $\widehat U/(S+\widehat U)\le\eta$}
        \State \textbf{break}
    \EndIf
\EndFor
\State Set $\widehat D\leftarrow S+\widehat U$.
\State Form the listed-codeword APP estimates
$\{\widehat P_{q_\star}\{X^n=c\mid R^n=r^n,C_n\}:c\in\mathcal L\}$
and the missing-list estimate
$\widehat P_{\rm miss}(q_\star)$ using
\eqref{eq:list_app_estimate}--\eqref{eq:missing_list_estimate}.
\If{bitwise APP estimates or APP LLRs are required}
    \State Compute the full-space bit masses $T_{0,i}(a)$,
    $i=1,\ldots,n$, $a\in\{0,1\}$, by the finite-state
    forward--backward recursion.
    \State Set $T_{q_\star,i}(a) \leftarrow
    T_{0,i}(a)-\sum_{(z,c,w)\in\mathcal Q}\mathbf 1\{c_i=a\}w$.
    \State Form the bitwise APP estimates using
    \eqref{eq:bit_app_tail_equiv} and the APP LLRs using
    \eqref{eq:app_llr}.
\EndIf
\end{algorithmic}
\end{algorithm}

If the selected enumerator generates distinct noise effects in the nondecreasing-energy order in \eqref{eq:energy_ordering}, then the first-hit version of Algorithm~\ref{alg:tail_calibrated_sogrand} uses the exact posterior-energy ordering induced by the finite-memory posterior model. If an approximate enumerator is used instead, the algorithm remains well defined, but the generated prefix need not coincide with the exact energy-ordered prefix.

\begin{remark}[Decoding modes]
\label{rem:decoding_modes}
The same ordered membership-query sequence is used with different stopping rules. The first-hit rule stops at 
\[
q_{\rm FH} = \inf\{q:\mathcal L_q\neq\emptyset\}\wedge q_{\max}
\]
and returns the first codeword found by the membership-query sequence if $\mathcal L_{q_{\rm FH}}\neq\emptyset$; otherwise it declares a failure. This is the rule used for the main BLER and membership-query comparisons. The list-output rule stops at
\[
q_L = \inf\{q:\ |\mathcal L_q|\ge L\}\wedge q_{\max},
\]
and is used for missing-list calibration with a prescribed list size
$L$. The tail-stopping rule stops at
\[
q_{\eta} \! = \! \inf\left\{ q: |\mathcal L_q|>0,   S_q+\widehat U_q>0,  
\frac{\widehat U_q}{S_q+\widehat U_q}\le\eta\right\}
\wedge q_{\max}.
\]
In our numerical section, this rule is reported as the Exact-Markov-tailStop variant. In all hard-decision summaries reported here, the decoded word is the first listed codeword. Additional queries in the tail-stopping variant are used to estimate the missing-list probability and listed-codeword APPs; they do not change the first-hit hard decision.
\end{remark}

\vspace{-8pt}

\section{Theoretical Guarantees}
\label{sec:theoretical_guarantees}

\vspace{-4pt}

\begin{theorem}[ML property under exact enumeration]\label{thm:ml_property}
Fix a received vector $r^n$, and let $y^n$ be the corresponding hard-decision vector. Assume that the full-space normalized weights satisfy $\pi_{r^n}(z^n)= \exp\{-E_{r^n}(z^n)\} / \mathsf Z(r^n)$  with strictly positive likelihood weights, and that the query order satisfies \eqref{eq:energy_ordering}. Suppose that the decoder continues querying until it first encounters a codeword. Define
\begin{equation}
\tau = \min\{j\in\{1,\ldots,2^n\}:\ y^n\oplus z^{(j)}\in C_n\}.
\end{equation}
Then $\widehat x = y^n\oplus z^{(\tau)}$ is an ML codeword for the likelihood $p_{R^n|X^n}$; that is,
\begin{equation}
\widehat x\in \arg\max_{c^n\in C_n} p_{R^n|X^n}(r^n\mid X^n = c^n).
\end{equation}
\end{theorem}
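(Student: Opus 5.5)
The argument reduces ML decoding over $C_n$ to energy minimization over the codeword-consistent noise effects. Then the first hit of a nondecreasing-energy enumeration is shown to attain that minimum.

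First, $\tau$ is well defined. Since $C_n\neq\emptyset$ and $z\mapsto y^n\oplus z$ is a bijection of $\{0,1\}^n$, every $c^n\in C_n$ equals $y^n\oplus z^{(j)}$ for exactly one $j$. Hence the set of successful indices is nonempty, and the minimum exists.

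Second, ML is rewritten in terms of energy. By definition $\Lambda_{r^n}(z^n)=p_{R^n|X^n}(r^n\mid X^n=y^n\oplus z^n)$. The normalization in \eqref{eq:ambient_pi} is a positive constant independent of $z^n$, because the likelihood weights are strictly positive. So for every $c^n$,
\[
p_{R^n|X^n}(r^n\mid X^n=c^n)=K(r^n)\,\pi_{r^n}(y^n\oplus c^n)=\frac{K(r^n)}{\mathsf Z(r^n)}\exp\{-E_{r^n}(y^n\oplus c^n)\},
\]
with $K(r^n)=\sum_{u^n}\Lambda_{r^n}(u^n)>0$. Since $t\mapsto e^{-t}$ is strictly decreasing, maximizing the likelihood over $c^n\in C_n$ is equivalent to minimizing $E_{r^n}(y^n\oplus c^n)$ over $c^n\in C_n$. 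This identity is the step I expect to need the most care. It must be checked that $\pi_{r^n}$ is the same object in \eqref{eq:ambient_pi} and in \eqref{eq:first_order_posterior}, so the energy representation is exactly the normalized likelihood up to a constant. The hypothesis of the theorem asserts this.

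Third, minimality of the first hit. Take any $c^n\in C_n$ and write $c^n=y^n\oplus z^{(j)}$. Since $y^n\oplus z^{(j)}\in C_n$, we have $j\ge\tau$ by the definition of $\tau$. The ordering \eqref{eq:energy_ordering} then gives $E_{r^n}(z^{(\tau)})\le E_{r^n}(z^{(j)})$. By the second step, $p_{R^n|X^n}(r^n\mid \widehat x)\ge p_{R^n|X^n}(r^n\mid c^n)$, so $\widehat x$ attains the maximum.

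Ties need no special handling. The deterministic tie-breaking rule only selects one maximizer among possibly several, which is consistent with the statement asserting membership in the $\arg\max$ set.
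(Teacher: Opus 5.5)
Your proposal is correct and follows essentially the same route as the paper's proof. In both, the likelihood equals a $c^n$-independent positive constant times $\pi_{r^n}(y^n\oplus c^n)=\exp\{-E_{r^n}(y^n\oplus c^n)\}/\mathsf Z(r^n)$, so the first codeword reached in nondecreasing-energy order maximizes the likelihood over $C_n$; your check that $\tau$ is well defined and your comparison $j\ge\tau$ just make explicit the step the paper states in one line.
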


\vspace{-2pt}

\begin{proof}
For each $c^n\in C_n$, let $z^n(c^n)=y^n\oplus c^n$. By the definitions of $\Lambda_{r^n}$ and $\pi_{r^n}$, $p_{R^n|X^n}(r^n\mid X^n = c^n) =\Lambda_{r^n}(z^n(c^n))$, $\pi_{r^n}(z^n(c^n)) = \Lambda_{r^n}(z^n(c^n)) /(\sum_{u^n\in\{0,1\}^n}\Lambda_{r^n}(u^n))$. The denominator in the second expression is independent of $c^n$. Hence maximizing $p_{R^n|X^n}(r^n\mid c^n)$ over $C_n$ is equivalent to maximizing $\pi_{r^n}(z^n(c^n))$ over $c^n\in C_n$.
Since $\pi_{r^n}(z^n) = \exp\{-E_{r^n}(z^n)\} /\mathsf Z(r^n)$, the ordering in \eqref{eq:energy_ordering} is an ordering of noise effects in nonincreasing $\pi_{r^n}$. Therefore, the first queried noise effect whose associated candidate word is in $C_n$ maximizes $\pi_{r^n}(z^n(c^n))$ among all $c^n\in C_n$. Thus $y^n\oplus z^{(\tau)}$ maximizes $p_{R^n|X^n}(r^n\mid c^n)$ over $C_n$, proving the claim.
\end{proof}

\vspace{-5pt}

\begin{theorem}[Unqueried denominator under fixed-size random coding]
\label{thm:unqueried_denominator}
Fix $r^n$, $y^n$, and a queried set $\mathcal A_q\subseteq\{0,1\}^n$ with $|\mathcal A_q|=q<2^n$. Let $C_n$ be uniformly distributed over all $M$-subsets of $\{0,1\}^n$, and define $\mathcal B_q=\{y^n\oplus z^n:\ z^n\in\mathcal A_q\}$. Let $\ell_q\subseteq\mathcal B_q$, and define the transcript event $\mathcal E_q(\ell_q)=\{C_n\cap\mathcal B_q=\ell_q\}$. Assume that $\mathbb P_C(\mathcal E_q(\ell_q))>0$. On this event, the realized list satisfies $\mathcal L_q=\ell_q$. Let $N_q=2^n-q$,  $K_q=M-|\ell_q|$,  $p_q=\frac{K_q}{N_q}$. Then the unqueried codebook-restricted denominator contribution $U_q = \sum_{z^n\notin \mathcal A_q} \pi_{r^n}(z^n) \mathbf 1\{y^n\oplus z^n\in C_n\}$ satisfies $\mathbb E_C[U_q\mid \mathcal E_q(\ell_q)] = p_qT_q$, where $T_q = \sum_{z^n\notin\mathcal A_q}\pi_{r^n}(z^n)$. If $N_q>1$, then
\begin{align}
\operatorname{Var}_C(U_q\mid \mathcal E_q(\ell_q))
& = \frac{K_q(N_q-K_q)}{N_q(N_q-1)}
\left( \sum_{z^n\notin\mathcal A_q}\pi_{r^n}(z^n)^2 - \frac{T_q^2}{N_q} \right)  \nonumber\\
& \le p_q(1-p_q) \sum_{z^n\notin\mathcal A_q}\pi_{r^n}(z^n)^2 .
\label{eq:Uq_variance}
\end{align}
For $N_q=1$, the conditional variance is zero. If $p_q>0$ and $T_q>0$, then, with
\begin{equation}\label{eq:Neff_def}
N_{\rm eff}(q) =  \frac{T_q^2}{\sum_{z^n\notin \mathcal A_q}\pi_{r^n}(z^n)^2},
\end{equation}
we have, for every $\epsilon>0$,
\begin{equation}\label{eq:Uq_concentration}
\mathbb P_C\left( \left|U_q-p_qT_q\right| \ge \epsilon p_qT_q \mid \mathcal E_q(\ell_q) \right)
\le \frac{1-p_q}{\epsilon^2p_qN_{\rm eff}(q)}.
\end{equation}
\end{theorem}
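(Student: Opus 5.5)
The plan is to reduce everything to sampling without replacement. First I will identify the conditional law of $C_n$ given $\mathcal E_q(\ell_q)$. Since $C_n$ is uniform over $M$-subsets, conditioning on $C_n\cap\mathcal B_q=\ell_q$ makes $C_n\setminus\mathcal B_q$ a uniformly distributed $K_q$-subset of the complement $\{0,1\}^n\setminus\mathcal B_q$, which has $N_q$ elements. To see this, note that each admissible completion corresponds to exactly one $M$-subset with the prescribed intersection, so all completions carry equal probability. Positivity of $\mathbb P_C(\mathcal E_q(\ell_q))$ guarantees $0\le K_q\le N_q$.

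Because $z^n\mapsto y^n\oplus z^n$ is a bijection carrying $\{0,1\}^n\setminus\mathcal A_q$ onto $\{0,1\}^n\setminus\mathcal B_q$, I can write $U_q=\sum_{z^n\notin\mathcal A_q}\pi_{r^n}(z^n)I_z$. Here $I_z=\mathbf 1\{y^n\oplus z^n\in C_n\}$ are the occupancy indicators of a uniform $K_q$-subset of an $N_q$-set. Their moments are standard: $\mathbb E[I_z]=p_q$, $\operatorname{Var}(I_z)=p_q(1-p_q)$, and for $z\neq z'$,
\[
\operatorname{Cov}(I_z,I_{z'})=\frac{K_q(K_q-1)}{N_q(N_q-1)}-p_q^2=-\frac{p_q(1-p_q)}{N_q-1}.
\]
Linearity then gives $\mathbb E_C[U_q\mid\mathcal E_q(\ell_q)]=p_qT_q$ immediately.

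For the variance, write $a_z=\pi_{r^n}(z^n)$ and $s_2=\sum_z a_z^2$. Expanding gives
\[
\operatorname{Var}=p_q(1-p_q)\Bigl[s_2-\frac{T_q^2-s_2}{N_q-1}\Bigr]=p_q(1-p_q)\frac{N_q}{N_q-1}\Bigl(s_2-\frac{T_q^2}{N_q}\Bigr).
\]
Since $p_q(1-p_q)N_q/(N_q-1)=K_q(N_q-K_q)/(N_q(N_q-1))$, this matches the displayed identity. For the upper bound, I need $\frac{N_q}{N_q-1}(s_2-T_q^2/N_q)\le s_2$. This is equivalent to $s_2\le T_q^2$, which holds because the $a_z$ are nonnegative. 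When $N_q=1$, $U_q$ is deterministic given the transcript (it equals $K_q a_z$ with $K_q\in\{0,1\}$ fixed), so the variance is zero.

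The concentration bound follows from Chebyshev's inequality applied with the bound just proved:
\[
\mathbb P\bigl(|U_q-p_qT_q|\ge\epsilon p_qT_q\bigr)\le\frac{p_q(1-p_q)s_2}{\epsilon^2p_q^2T_q^2}=\frac{1-p_q}{\epsilon^2p_qN_{\rm eff}(q)}.
\]
The $N_q=1$ case is covered trivially. The only step needing real care is the first one: justifying that conditioning on the full transcript event leaves a uniform $K_q$-subset, and keeping the bijection between $\mathcal A_q$ and $\mathcal B_q$ straight. The rest is routine hypergeometric algebra.
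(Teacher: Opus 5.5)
Your proposal is correct and follows the paper's proof essentially step for step. Both arguments use the same chain: conditional uniformity of the remaining $K_q$-subset over the $N_q$ unqueried words, the hypergeometric indicator moments, linearity for the mean, expansion of the variance with $\sum_{z^n\notin\mathcal A_q}\pi_{r^n}(z^n)^2\le T_q^2$ for the upper bound, and Chebyshev for the concentration bound. Your bijection argument for conditional uniformity and your explicit algebra, which reduces $p_q(1-p_q)\bigl[s_2-(T_q^2-s_2)/(N_q-1)\bigr]$ to the displayed form, fill in details that the paper asserts without derivation.
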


\vspace{-10pt}

\begin{proof}
Conditional on $\mathcal E_q(\ell_q)$, the remaining $K_q=M-|\ell_q|$ codewords are uniformly distributed over the $N_q=2^n-q$ unqueried candidate words $\{y^n\oplus z^n:\ z^n\notin\mathcal A_q\}$. Therefore, for each $z^n\notin\mathcal A_q$,
\begin{equation}
\mathbb P_C\{y^n\oplus z^n\in C_n\mid \mathcal E_q(\ell_q)\} = \frac{K_q}{N_q} = p_q .
\end{equation}
Linearity of expectation gives
\begin{align}
\mathbb E_C[U_q\mid \mathcal E_q(\ell_q)] & =
\sum_{z^n\notin\mathcal A_q} \pi_{r^n}(z^n)
\mathbb P_C\{y^n\oplus z^n\in C_n\mid \mathcal E_q(\ell_q)\} \nonumber\\
& = p_q \sum_{z^n\notin\mathcal A_q}\pi_{r^n}(z^n) = p_qT_q .
\end{align}

For each $z^n\notin\mathcal A_q$, define $I_{z^n} = \mathbf 1\{y^n\oplus z^n\in C_n\}$. Conditional on $\mathcal E_q(\ell_q)$, the variables $\{I_{z^n}:z^n\notin\mathcal A_q\}$ are the inclusion indicators of a uniformly chosen $K_q$-subset of an $N_q$-element set. Hence, for $N_q>1$,
\begin{equation}
\operatorname{Var}_C(I_{z^n}\mid \mathcal E_q(\ell_q)) = p_q(1-p_q),
\end{equation}
and, for distinct $z^n,u^n\notin\mathcal A_q$,
\begin{equation}
\operatorname{Cov}_C(I_{z^n},I_{u^n}\mid \mathcal E_q(\ell_q)) = -\frac{p_q(1-p_q)}{N_q-1}.
\end{equation}
Since $U_q = \sum_{z^n\notin\mathcal A_q} \pi_{r^n}(z^n)I_{z^n}$, substitution of these variances and covariances gives
\begin{align}
\operatorname{Var}_C(U_q\mid \mathcal E_q(\ell_q))
& = p_q(1-p_q) \sum_{z^n\notin\mathcal A_q}\pi_{r^n}(z^n)^2 \nonumber\\
& \quad - \frac{p_q(1-p_q)}{N_q-1}
\!\!\!\! \sum_{\substack{z^n,u^n\notin\mathcal A_q\\ z^n\ne u^n}} \!\!\!
\pi_{r^n}(z^n)\pi_{r^n}(u^n)
\nonumber\\
& = \frac{K_q(N_q-K_q)}{N_q(N_q-1)} \!
\left( \! \sum_{z^n\notin\mathcal A_q} \!\!\! \pi_{r^n}(z^n)^2 \!-\!  \frac{T_q^2}{N_q} \right).
\end{align}
Because the weights $\pi_{r^n}(z^n)$ are nonnegative, $T_q^2 \ge \sum_{z^n\notin\mathcal A_q}\pi_{r^n}(z^n)^2$. Therefore,
\begin{equation}
\operatorname{Var}_C(U_q\mid \mathcal E_q(\ell_q)) \le
p_q(1-p_q) \sum_{z^n\notin\mathcal A_q}\pi_{r^n}(z^n)^2 .
\end{equation}
When $N_q=1$, the remaining codebook membership is deterministic under the conditioning, so the conditional variance is zero. Finally, Chebyshev's inequality gives
\begin{align}
&\mathbb P_C\left( \left|U_q-p_qT_q\right| \ge \epsilon p_qT_q \mid \mathcal E_q(\ell_q) \right) \nonumber\\
&\qquad\le \frac{ p_q(1-p_q)\sum_{z^n\notin\mathcal A_q}\pi_{r^n}(z^n)^2}{\epsilon^2p_q^2T_q^2 }
= \frac{1-p_q}{\epsilon^2p_qN_{\rm eff}(q)}.
\end{align}
\end{proof}

\vspace{-14pt}

\begin{remark}
For each fixed query index (q), the identity $\mathbb E_C[U_q\mid \mathcal E_q(\ell_q)] = p_qT_q$ is an exact conditional-mean statement under the fixed-size random-codebook ensemble. Conditional on \(\mathcal E_q(\ell_q)\), the remaining \(K_q=M-|\ell_q|\) codewords are uniformly distributed over all \(K_q\)-subsets of the \(N_q=2^n-q\) unqueried candidate words. Consequently, each unqueried candidate word has conditional inclusion probability \(p_q=K_q/N_q\). This conditional uniformity does not hold for a random linear code, because its codeword-membership events are coupled by the linear-subspace structure. Accordingly, in our random-linear-code experiments, we use \(p_qT_q\) as the fixed-size random-codebook occupancy approximation to \(U_q\), rather than as its exact conditional mean.
\end{remark}

\vspace{-4pt}

\begin{theorem}[Perturbation bound for the missing-list estimate]
\label{thm:tail_calibrated_reliability}
Fix a query index $q$. Let
\begin{equation}
P_{\rm miss}(q) = \Pr\{X^n\notin \mathcal L_q\mid R^n=r^n,C_n\} = \frac{U_q}{S_q+U_q}
\end{equation}
be the codebook-restricted posterior probability that the transmitted codeword is not in the current list, and let
\begin{equation}
\widehat P_{\rm miss}(q) = \frac{\widehat U_q}{S_q+\widehat U_q}
\end{equation}
be its plug-in estimate. Assume that $\widehat U_q>0$ and that $|U_q-\widehat U_q|/\widehat U_q\le \epsilon$ for some $\epsilon\in(0,1)$. Then
\begin{equation}\label{eq:miss_prob_error}
\left| P_{\rm miss}(q)-\widehat P_{\rm miss}(q) \right|
\le \frac{\epsilon\,\widehat U_q S_q}{(S_q+(1-\epsilon)\widehat U_q)(S_q+\widehat U_q)}.
\end{equation}
Moreover, if the decoder stops at an index $q$ satisfying $\widehat P_{\rm miss}(q)\le\eta$, then
\begin{equation}\label{eq:stopping_true_miss_bound}
P_{\rm miss}(q) \le \frac{(1+\epsilon)\eta}{1+\epsilon\eta}.
\end{equation}
\end{theorem}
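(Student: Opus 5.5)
The plan is to work with the map $f(u)=u/(S_q+u)$ on $u\ge 0$. Both claims then reduce to one exact identity for the difference of two values of $f$, plus the monotonicity of $f$ and of a related map.

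First, the hypotheses give $U_q\in[(1-\epsilon)\widehat U_q,(1+\epsilon)\widehat U_q]$. Since $\widehat U_q>0$ and $\epsilon<1$, this forces $U_q>0$, so $S_q+U_q>0$ and both $P_{\rm miss}(q)$ and $\widehat P_{\rm miss}(q)$ are well defined (recall $S_q\ge 0$). Putting the two fractions over a common denominator gives the exact identity
\begin{equation*}
P_{\rm miss}(q)-\widehat P_{\rm miss}(q)=\frac{S_q\,(U_q-\widehat U_q)}{(S_q+U_q)(S_q+\widehat U_q)}.
\end{equation*}
For \eqref{eq:miss_prob_error}, bound the numerator by $S_q\,\epsilon\,\widehat U_q$. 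In the denominator, use the lower bound $U_q\ge(1-\epsilon)\widehat U_q$, which gives $S_q+U_q\ge S_q+(1-\epsilon)\widehat U_q>0$. This yields \eqref{eq:miss_prob_error} directly.

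For \eqref{eq:stopping_true_miss_bound}, use monotonicity instead of \eqref{eq:miss_prob_error}, because the latter is not tight enough in the form needed.
\begin{itemize}
\item Since $f$ is nondecreasing in $u$ and $U_q\le(1+\epsilon)\widehat U_q$, we get $P_{\rm miss}(q)\le (1+\epsilon)\widehat U_q/(S_q+(1+\epsilon)\widehat U_q)$.
\item At the stopping index we must have $S_q>0$. Otherwise $\widehat P_{\rm miss}(q)=1>\eta$, since $\eta<1$ and $\widehat U_q>0$.
\item Set $x=\widehat U_q/S_q$. The stopping condition $x/(1+x)\le\eta$ is equivalent to $x\le \eta/(1-\eta)$.
\item The bound from the first step equals $g(x)=(1+\epsilon)x/(1+(1+\epsilon)x)$, which is increasing in $x$. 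Evaluating it at $x=\eta/(1-\eta)$ and simplifying gives $(1+\epsilon)\eta/\bigl(1-\eta+(1+\epsilon)\eta\bigr)=(1+\epsilon)\eta/(1+\epsilon\eta)$.
\end{itemize}

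The only delicate points are bookkeeping. One is checking that all the denominators are strictly positive. The other is excluding the case $S_q=0$ at a stopping index, which needs the standing assumption $\eta\in(0,1)$. Everything else is elementary algebra on rational functions of $U_q$, $\widehat U_q$ and $S_q$.
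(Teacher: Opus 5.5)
Your proposal is correct and follows essentially the paper's proof. The first claim uses the same identity $f(U_q)-f(\widehat U_q)=S_q(U_q-\widehat U_q)/((S_q+U_q)(S_q+\widehat U_q))$ with the lower bound $U_q\ge(1-\epsilon)\widehat U_q$ in the denominator, and for the stopping bound the paper substitutes $S_q\ge\frac{1-\eta}{\eta}\widehat U_q$ and $U_q\le(1+\epsilon)\widehat U_q$ directly into $U_q/(S_q+U_q)$, which is your monotonicity argument without the normalization $x=\widehat U_q/S_q$ (your explicit checks that $U_q>0$ and $S_q>0$ make the paper's implicit positivity and monotonicity steps fully explicit).
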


\vspace{-4pt}

\begin{proof}
For $u\ge0$, define $f(u)=u /(S_q+u)$. Then
\begin{equation}
f(U_q)-f(\widehat U_q) = \frac{S_q(U_q-\widehat U_q)}{(S_q+U_q)(S_q+\widehat U_q)}.
\end{equation}
The relative-error assumption implies
\begin{equation}
U_q\ge (1-\epsilon)\widehat U_q,
\qquad |U_q-\widehat U_q|\le \epsilon\widehat U_q .
\end{equation}
Therefore,
\begin{equation}
\left| P_{\rm miss}(q)-\widehat P_{\rm miss}(q) \right|
\le \frac{\epsilon\,\widehat U_q S_q}{(S_q+(1-\epsilon)\widehat U_q)(S_q+\widehat U_q)},
\end{equation}
which proves \eqref{eq:miss_prob_error}.
If $\widehat P_{\rm miss}(q)\le\eta$, then
\begin{equation}
\frac{\widehat U_q}{S_q+\widehat U_q}\le\eta \quad\Longleftrightarrow\quad S_q\ge \frac{1-\eta}{\eta}\widehat U_q .
\end{equation}
Using $U_q\le(1+\epsilon)\widehat U_q$, we obtain
\begin{equation}
P_{\rm miss}(q) =
\frac{U_q}{S_q+U_q} \le \frac{(1+\epsilon)\widehat U_q}{\frac{1-\eta}{\eta}\widehat U_q+(1+\epsilon)\widehat U_q}
= \frac{(1+\epsilon)\eta}{1+\epsilon\eta}.
\end{equation}
This proves \eqref{eq:stopping_true_miss_bound}.
\end{proof}

\vspace{-7pt}

\begin{theorem}[Ambient-tail abandonment bound]
\label{thm:posterior_tail_abandonment}
For each received vector $r^n$, let $\pi_{r^n}$ denote the conditional law of the hard-decision noise effect $Z^n$ given $R^n=r^n$ under the probability measure used to define
$P_{\rm e}^{\rm full}(n)$ and $P_{\rm e}^{\rm ab}(n)$. Let full GRAND query noise effects in nonincreasing $\pi_{r^n}$-order without abandonment. Let the abandoned decoder use the same query order and stop after $q_n(r^n)\in\{0,\ldots,2^n\}$ queries, where $q_n$ is a deterministic measurable function of $r^n$. If the abandoned decoder stops without finding a codeword, this outcome is counted as a block error.
Let $\mathcal A_q(r^n)$ denote the first $q$ noise effects in the $\pi_{r^n}$-ordered query list. Suppose that
\begin{equation}
T_{q_n(r^n)}(r^n) = \sum_{z^n\notin \mathcal A_{q_n(r^n)}(r^n)} \pi_{r^n}(z^n)
\le \epsilon_n
\end{equation}
for $P_{R^n}$-almost every $r^n$, where $\epsilon_n\to0$. Then
\begin{equation}\label{eq:abandonment_error_bound}
P_{\rm e}^{\rm ab}(n) \le P_{\rm e}^{\rm full}(n)+\epsilon_n .
\end{equation}
Consequently, if full GRAND has vanishing block error probability for a given rate, code ensemble, and channel model, then any abandonment rule satisfying the above ambient-tail condition also has vanishing block error probability for the same rate, ensemble, and channel model.
In particular, in any setting where the full-GRAND capacity-achievability result applies to binary additive stationary ergodic hard-decision noise
and random codebooks, the abandoned decoder also has vanishing block error probability for every $R_{\rm c}<1-H(\mathbf Z)$, where $\mathbf Z=\{Z_i\}_{i\ge1}$ and $H(\mathbf Z)$ denotes the entropy rate of the hard-decision noise process, measured in bits per channel use.
\end{theorem}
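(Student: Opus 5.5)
The plan is a coupling argument on a common probability space. Fix the joint law of $(C_n, X^n, R^n)$ used to define both error probabilities. Run full GRAND and the abandoned decoder on the same realization. Both use the same deterministic $\pi_{r^n}$-order, with the same tie-breaking rule. Therefore, whenever the abandoned decoder finds a codeword within its first $q_n(r^n)$ queries, it returns exactly the word full GRAND returns. The two decoders can differ only when the abandoned decoder stops before reaching the first hit $\tau$ of full GRAND, that is, when $\tau > q_n(R^n)$. We then split the abandoned decoder's error event as
\[
\{\text{ab.\ error}\} \subseteq \{\text{full error}\} \cup \{\text{full correct},\ \tau> q_n(R^n)\}.
\]

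The key step is to bound the second event. If full GRAND is correct, its first hit is the transmitted codeword, so $Z^n = z^{(\tau)}$. If moreover $\tau>q_n(R^n)$, then the realized noise effect was not among the first $q_n(R^n)$ queries, i.e.\ $Z^n\notin\mathcal A_{q_n(R^n)}(R^n)$. Hence
\[
P_{\rm e}^{\rm ab}(n)\le P_{\rm e}^{\rm full}(n)+\Pr\{Z^n\notin \mathcal A_{q_n(R^n)}(R^n)\}.
\]
We condition the last probability on $R^n=r^n$. The set $\mathcal A_{q_n(r^n)}(r^n)$ is then a deterministic function of $r^n$, and by hypothesis $\pi_{r^n}$ is the true conditional law of $Z^n$ given $R^n=r^n$. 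So the conditional probability equals $T_{q_n(r^n)}(r^n)\le\epsilon_n$ for $P_{R^n}$-a.e.\ $r^n$. Integrating over $r^n$ gives \eqref{eq:abandonment_error_bound}.

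The point needing care, and the expected main obstacle, is measurability and the precise use of conditioning. The query set must be a measurable function of $r^n$ alone, not of $C_n$. The conditional law of $Z^n$ given $R^n$ must be $\pi_{r^n}$ under the same measure used to define both error probabilities, including the averaging over the random codebook. This is exactly where the assumption that $q_n$ is deterministic and the identification of $\pi_{r^n}$ with the true posterior are used. I would state the disintegration $\Pr\{Z^n\notin\mathcal A\mid R^n=r^n\}=\sum_{z^n\notin\mathcal A}\pi_{r^n}(z^n)$ explicitly. Note that $Z^n$ is independent of which other words lie in $C_n$, so no codebook conditioning enters.

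The consequences are immediate. If $P_{\rm e}^{\rm full}(n)\to0$ and $\epsilon_n\to0$, the right-hand side vanishes. For the final claim, I would invoke the capacity-achievability result of \cite{duffy2019capacity} for full GRAND with random codebooks over binary additive stationary ergodic noise at rates $R_{\rm c}<1-H(\mathbf Z)$. Inserting that result into the bound shows that the abandoned decoder also has vanishing block error probability at these rates.
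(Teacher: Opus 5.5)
Your proposal is correct and follows essentially the same route as the paper. Both arguments compare the two decoders pathwise on a common realization to show that any error of the abandoned decoder not also made by full GRAND lies in $\{Z^n\notin\mathcal A_{q_n(R^n)}(R^n)\}$: you use the inclusion $\{\text{full GRAND correct},\ \tau>q_n(R^n)\}\subseteq\{Z^n\notin\mathcal A_{q_n(R^n)}(R^n)\}$, and the paper uses the contrapositive, that on $\{Z^n\in\mathcal A_{q_n(R^n)}(R^n)\}$ the two decoders return the same word. Both then integrate $T_{q_n(r^n)}(r^n)\le\epsilon_n$ over $r^n$.

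Your aside that $Z^n$ is independent of the other codewords is not needed, and it can fail when the law of the hard-decision noise effect depends on the transmitted codeword. The conditioning step uses only two facts: $\pi_{r^n}$ is the conditional law of $Z^n$ given $R^n=r^n$, and $\mathcal A_{q_n(r^n)}(r^n)$ is a function of $r^n$ alone.
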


\begin{proof}
By the definition of $\pi_{r^n}$,
\begin{equation}
\Pr\{Z^n\notin \mathcal A_{q_n(r^n)}(r^n)\mid R^n=r^n\} =T_{q_n(r^n)}(r^n).
\end{equation}
On the event $Z^n\in \mathcal A_{q_n(R^n)}(R^n)$, the abandoned decoder performs the same queries as full GRAND up to the first queried codeword. If a codeword is encountered before the realized noise effect, both decoders stop at that same codeword. If no such codeword is encountered earlier, both decoders query the realized noise effect and recover the transmitted codeword. Hence any block error made by the abandoned decoder that is not also made by full GRAND can occur only on the event $\{Z^n\notin \mathcal A_{q_n(R^n)}(R^n)\}$. Therefore,
\begin{equation}
P_{\rm e}^{\rm ab}(n) \le P_{\rm e}^{\rm full}(n) + \Pr\{Z^n\notin \mathcal A_{q_n(R^n)}(R^n)\}.
\end{equation}
Taking expectation over $R^n$,
\begin{equation}
\Pr\{Z^n\notin \mathcal A_{q_n(R^n)}(R^n)\}
= \mathbb E\!\left[ T_{q_n(R^n)}(R^n) \right] \le \epsilon_n .
\end{equation}
This proves \eqref{eq:abandonment_error_bound}.
If $P_{\rm e}^{\rm full}(n)\to0$ and $\epsilon_n\to0$, then $P_{\rm e}^{\rm ab}(n)\to0$. The final statement follows by applying this implication in any setting where full GRAND is known to achieve all rates below $1-H(\mathbf Z)$ for binary additive stationary ergodic hard-decision noise.
\end{proof}

\section{Numerical Evaluation}
\label{sec:numerical_evaluation}

We evaluate Tail-Calibrated SOGRAND through two codebook ensembles. The first is the fixed-size random-codebook ensemble used in the missing-list analysis; this experiment isolates the occupancy estimate $\widehat U_q=p_qT_q$. The second is the binary random-linear-code membership-oracle setting used in GRAND, ORBGRAND, and ORBGRAND-AI comparisons \cite{duffy2019capacity,duffy2022ordered,duffy2023using}. The random-linear-code experiments compare noise-effect orderings without using code-specific decoding structure beyond parity-check membership testing. At each parameter point in those experiments, all noninterleaved compared methods use the same sampled random linear code, the same channel model, and the same membership oracle. Among the compared methods, the random-interleaved ORBGRAND baseline is the only one that changes the coordinate order before the correlated channel acts. It interleaves the coded positions before transmission and deinterleaves the received soft observations before decoding. Apart from this preprocessing step, the methods differ only in the order in which candidate noise effects are generated and, for
Tail-Calibrated SOGRAND, in the ambient-tail and missing-list quantities computed during the search.

 \vspace{-5pt}

\subsection{Fixed-size random-codebook validation}
\label{subsec:fixed_size_random_codebook_validation}

We first evaluate the missing-list estimate in the codebook ensemble used in the occupancy analysis. For each independent codebook draw, $C_n$ is sampled uniformly from all subsets of $\{0,1\}^n$ with cardinality $M=2^k$, and membership is tested by explicit lookup. The channel model, posterior energy, noise-effect ordering, and tail-mass computation are the same as in the Gauss--Markov experiments below. This experiment isolates the random-codebook occupancy step in the estimate $\widehat U_q=p_qT_q$; it is not intended as a scalable representation of large codebooks.

For calibration, let $\widehat p_t$ denote the missing-list estimate reported on frame $t$, and let $e_t=\mathbf{1}\{X_t^n\notin\mathcal L_{q_t}\}$ be the corresponding missing-list event. The empirical missing-list frequency is $N^{-1}\sum_{t=1}^N e_t$. The expected calibration error is computed from nonempty logarithmic bins $\{\mathcal B_b\}$ as follows \cite{guo2017calibration}:
\begin{equation}
\mathsf{ECE} = \sum_{b:\,|\mathcal B_b|>0}
\frac{|\mathcal B_b|}{N}
\left| \frac{1}{|\mathcal B_b|} \sum_{t\in\mathcal B_b}\widehat p_t - \frac{1}{|\mathcal B_b|} \sum_{t\in\mathcal B_b} e_t \right|.
\end{equation}

Table~\ref{tab:fixed_size_calibration} reports the resulting calibration summary. Note that the fixed-size random-codebook experiment directly evaluates the occupancy approximation in the ensemble used by the conditional-mean analysis. Since calibration estimates are unreliable when the number of missing-list events is small, Table~\ref{tab:fixed_size_calibration} also reports the number of observed missing-list events.
For the two operating points with a non-negligible number of missing-list events, the empirical missing-list frequency is close to $\overline{\widehat P}_{\rm miss}$, and the ECE is small relative to the corresponding missing-list probability.  



Fig.~\ref{fig:fixed_size_calibration_reliability} reports the binned reliability diagrams corresponding to the fixed-size random-codebook validation in Table~\ref{tab:fixed_size_calibration}. In each bin, the horizontal coordinate is the binwise average of $\widehat P_{\rm miss}$ and the vertical coordinate is the empirical frequency of the event $X^n\notin\mathcal L_q$. The dashed line is the ideal calibration line $y=x$. Vertical bars show $95\%$ Wilson confidence intervals for the empirical bin frequencies, and the integer next to each point is the number of frames in the bin. Bins with zero observed missing-list events are included in the ECE calculation but are omitted from the log--log plot.

\begin{figure}[!t]
    \centering
    \begin{subfigure}[t]{0.77\linewidth}
        \centering
        \includegraphics[width=\linewidth]{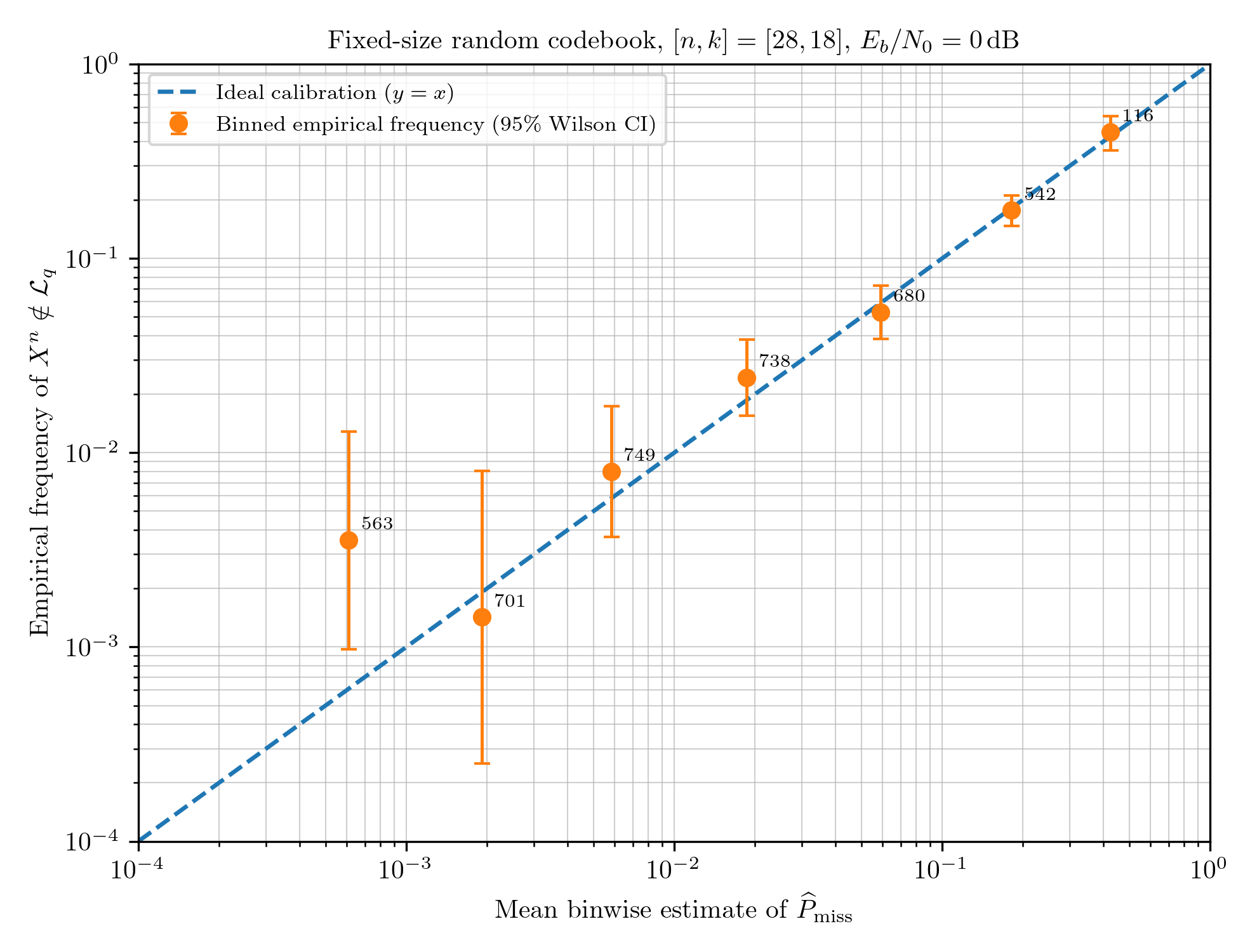}
        \caption{$E_b/N_0=0$ dB.}
        \label{fig:fixed_size_calibration_0db}
    \end{subfigure}
   
    \begin{subfigure}[t]{0.77\linewidth}
        \centering
        \includegraphics[width=\linewidth]{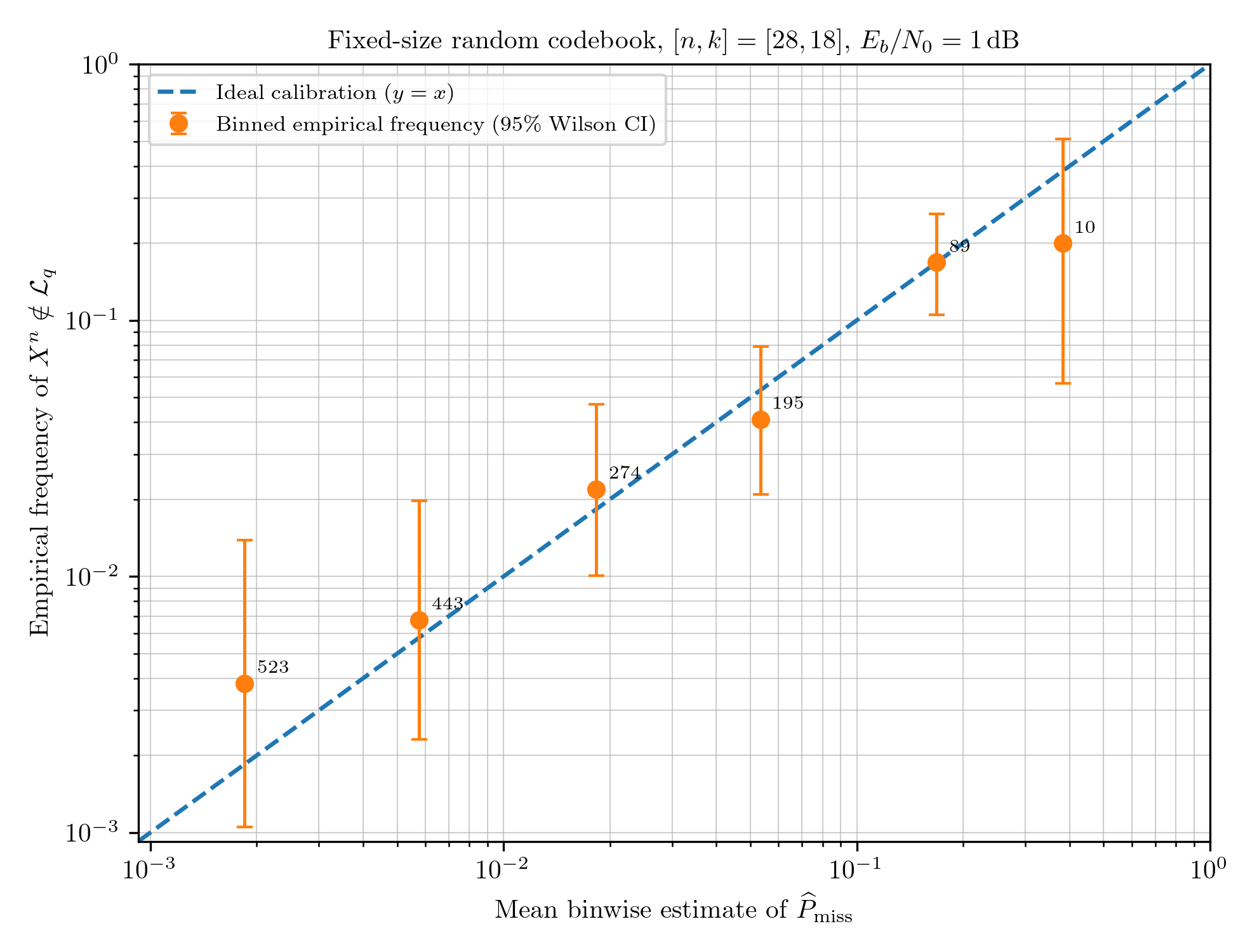}
        \caption{$E_b/N_0=1$ dB.}
        \label{fig:fixed_size_calibration_1db}
    \end{subfigure}
    \caption{Reliability diagrams for the fixed-size random-codebook validation with $[n,k]=[28,18]$, $\rho=0.5$, $q_{\max}=5000$, and list size $4$. Points show binned empirical missing-list frequencies, vertical bars show $95\%$ Wilson confidence intervals, and adjacent integers show bin counts.}
    \label{fig:fixed_size_calibration_reliability}
\end{figure}

     

\begin{table}[h]
\centering
\caption{Fixed-size random-codebook validation of the plug-in missing-list estimate. Codebooks are sampled uniformly from all subsets of $\{0,1\}^n$ with cardinality $2^k$. The experiment uses
$\rho=0.5$, $q_{\max}=5000$, and list size $4$.}
\label{tab:fixed_size_calibration}
\scriptsize
\setlength{\tabcolsep}{3pt}
\resizebox{\linewidth}{!}{
\begin{tabular}{ccccccc}
\toprule
$[n,k]$ & $E_b/N_0$ & Frames & Miss.
& $\overline{\widehat P}_{\rm miss}$
& Emp. miss.
& ECE \\
\midrule
$[28,18]$ & $0$ dB & $5000$ & $211$
& $4.156{\times}10^{-2}$
& $4.220{\times}10^{-2}$
& $3.443{\times}10^{-3}$ \\
$[28,18]$ & $1$ dB & $5000$ & $36$
& $7.690{\times}10^{-3}$
& $7.200{\times}10^{-3}$
& $1.491{\times}10^{-3}$ \\
\bottomrule
\end{tabular}}
\end{table}

\vspace{-2pt}

\begin{figure*}[!t]
\centering
\begin{minipage}{0.32\linewidth}
\centering
\includegraphics[width=\linewidth]{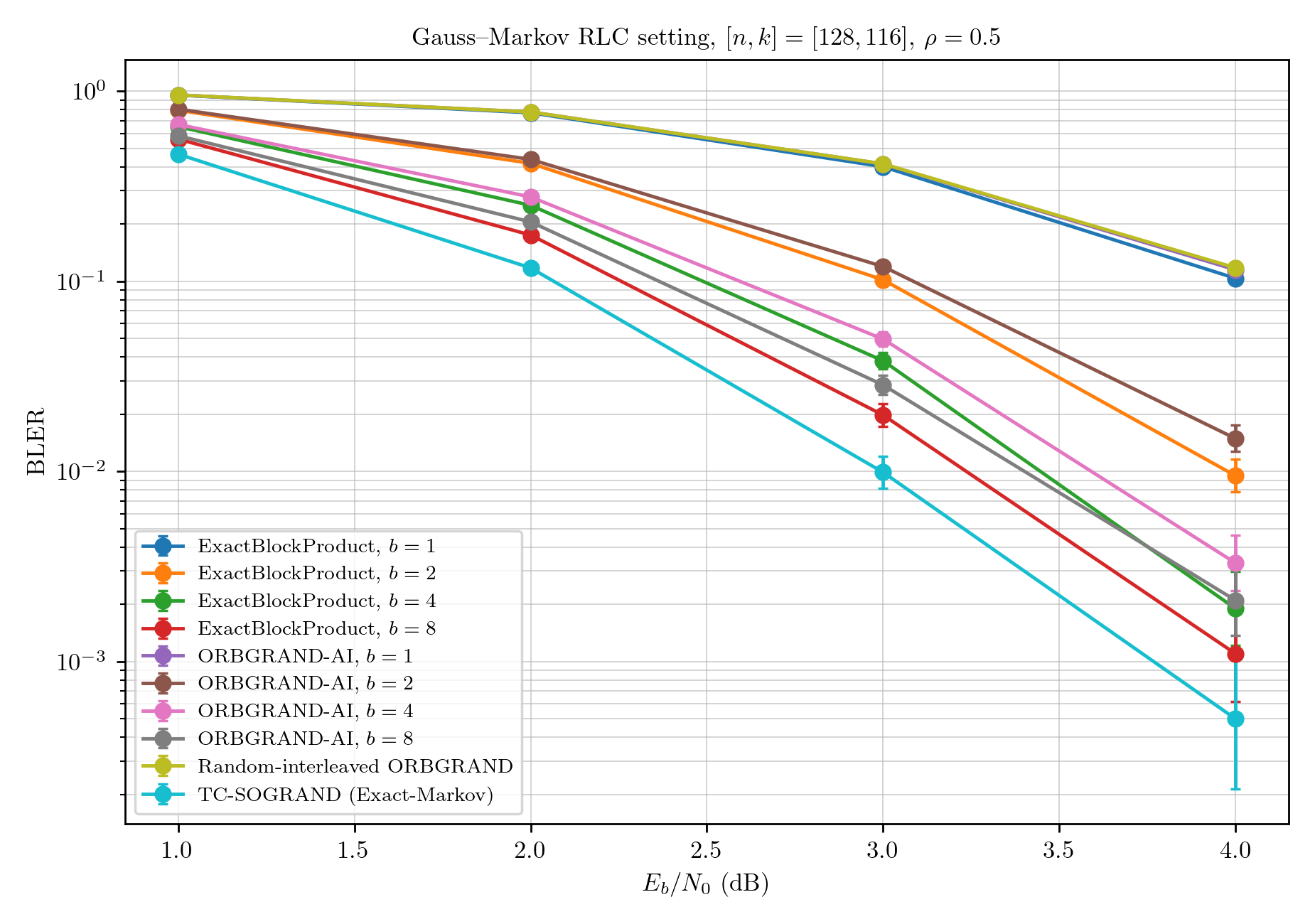}
\end{minipage}
~
\begin{minipage}{0.32\linewidth}
\centering
\includegraphics[width=\linewidth]{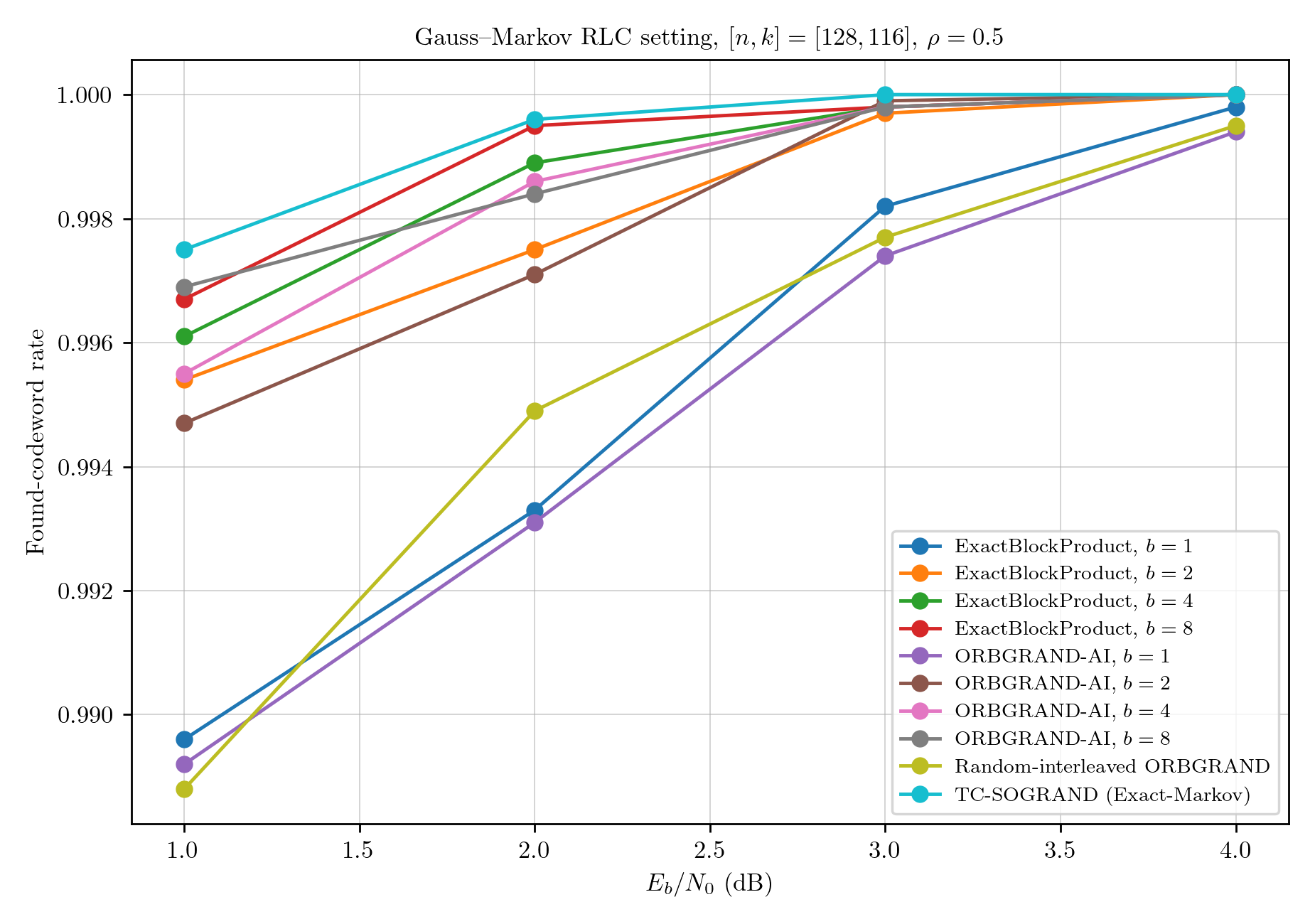}
\end{minipage}
~
\begin{minipage}{0.32\linewidth}
\centering
\includegraphics[width=\linewidth]{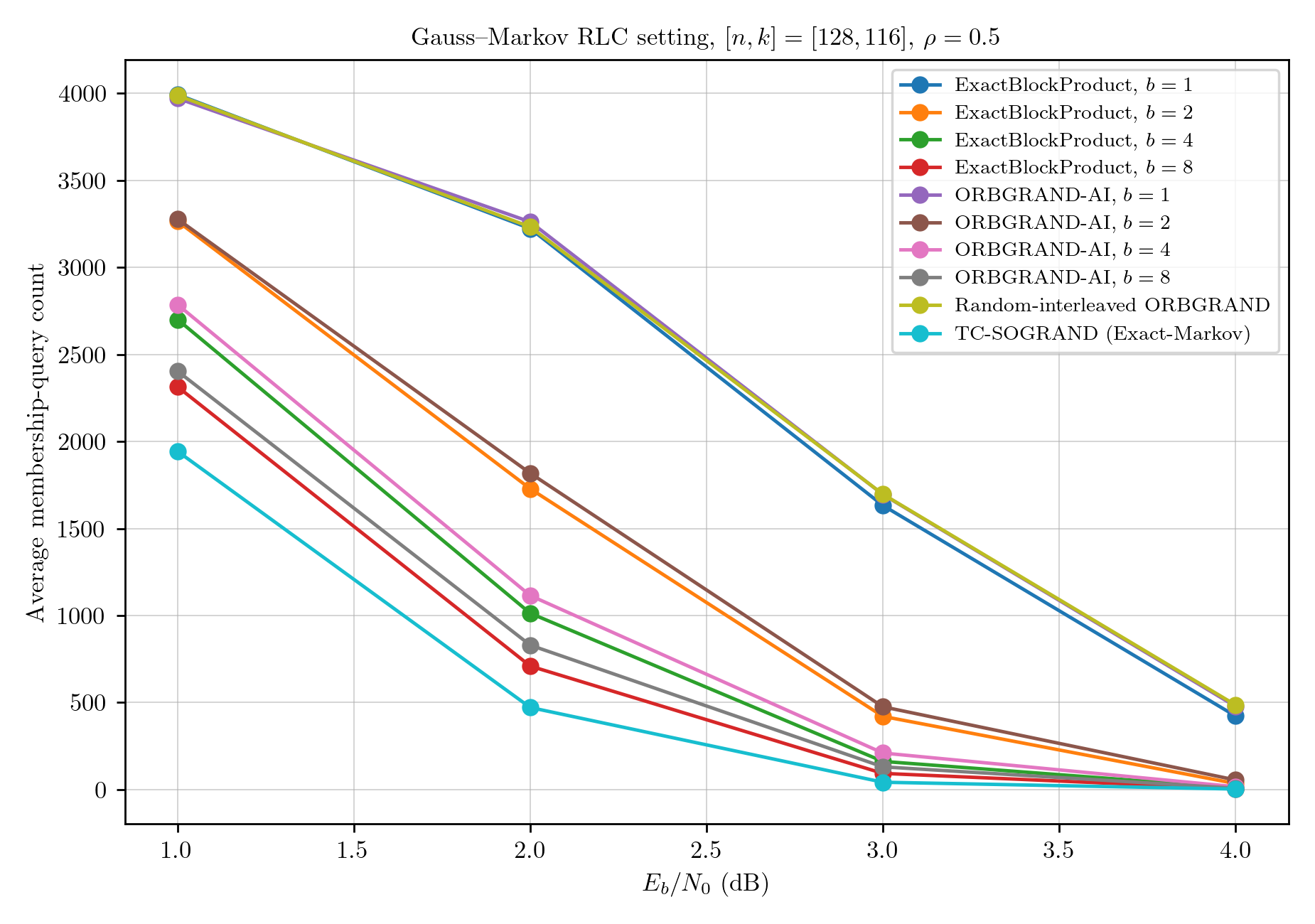}
\end{minipage}

\vspace{1mm}

\begin{minipage}{0.32\linewidth}
\centering
\includegraphics[width=\linewidth]{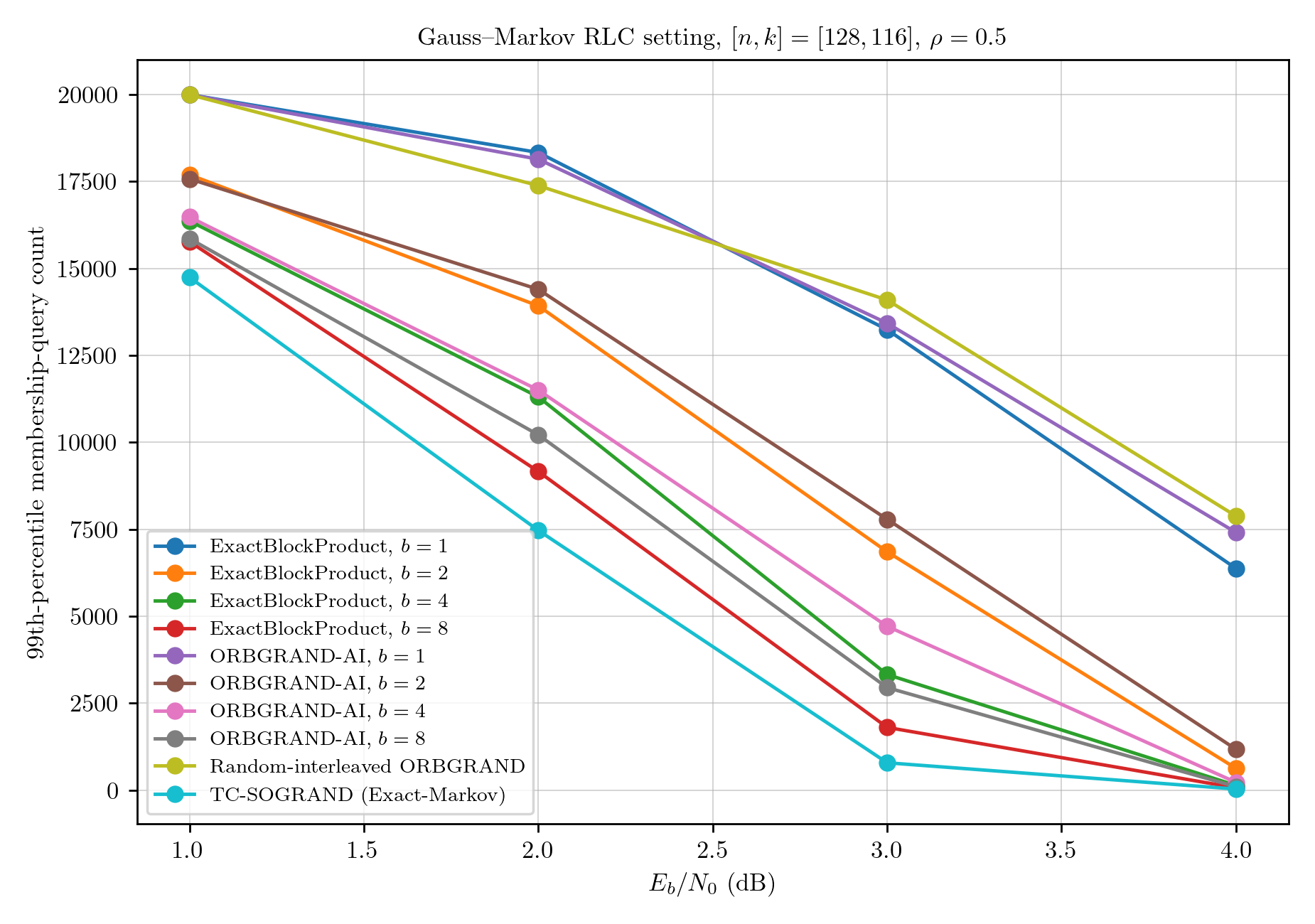}
\end{minipage}
~
\begin{minipage}{0.32\linewidth}
\centering
\includegraphics[width=\linewidth]{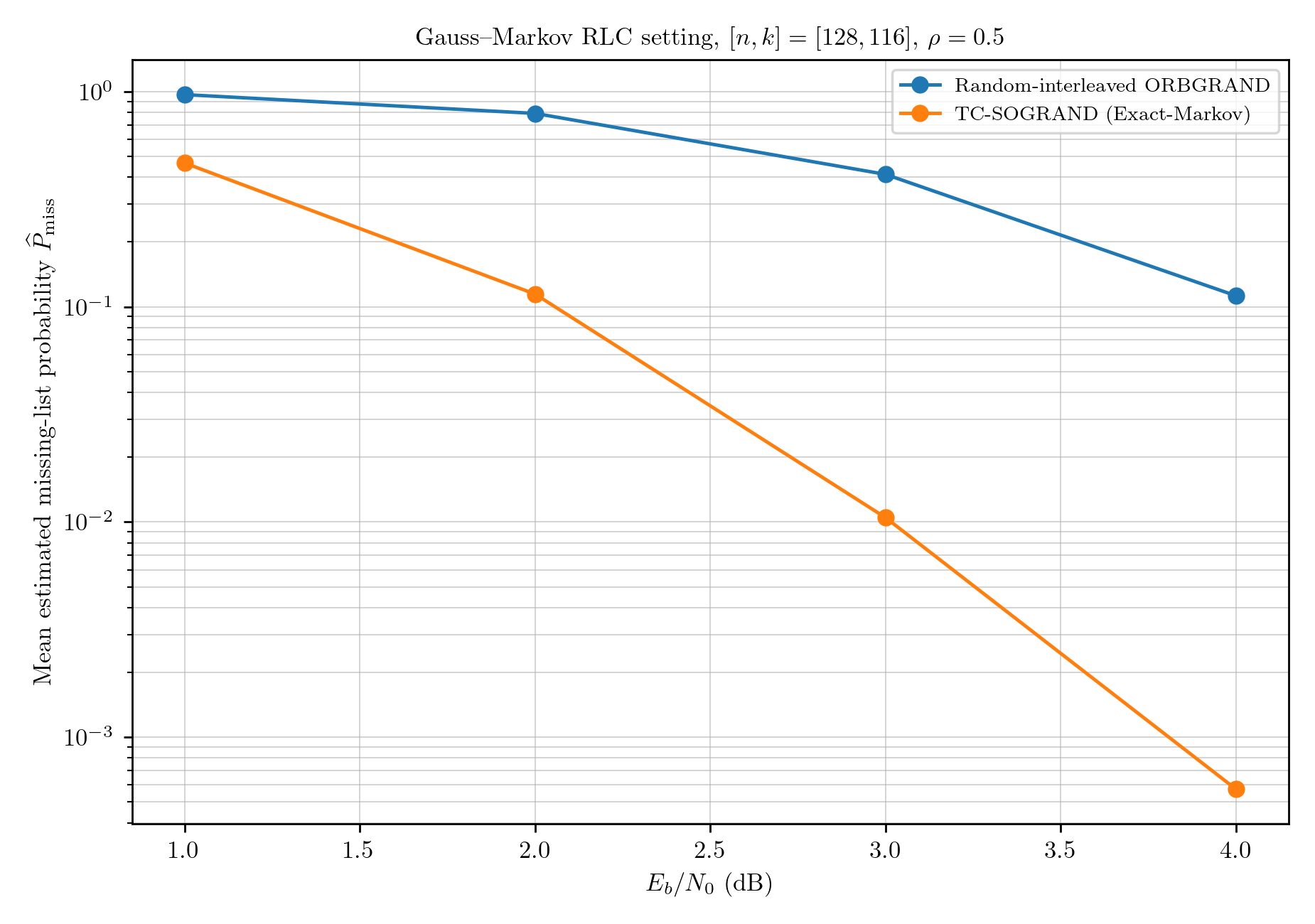}
\end{minipage}
~
\begin{minipage}{0.32\linewidth}
\centering
\includegraphics[width=\linewidth]{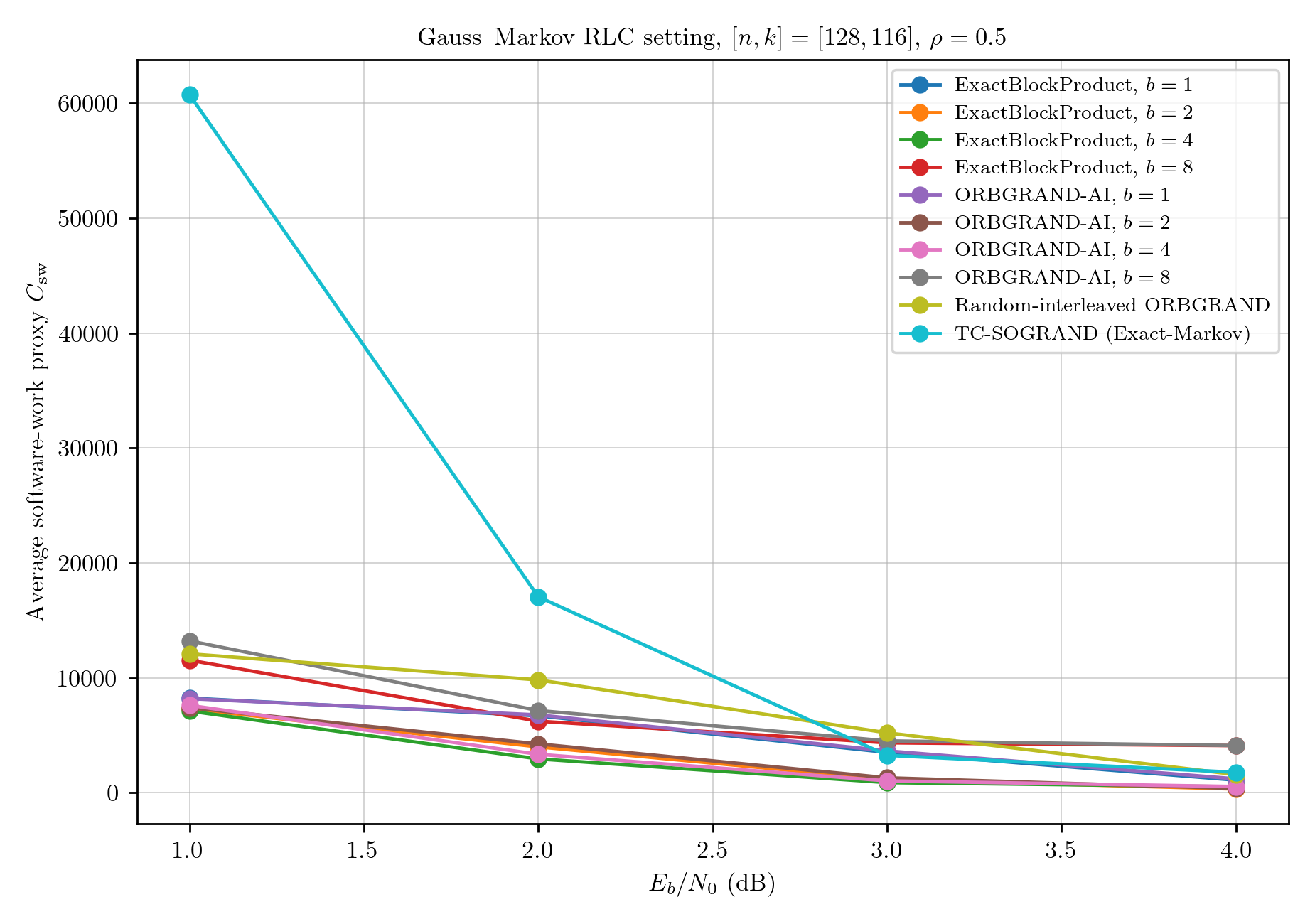}
\end{minipage}
\vspace{-2pt}
\caption{Gauss--Markov RLC experiment with $[n,k]=[128,116]$, $\rho=0.5$, and $q_{\max}=20000$. The tolerance $\eta=10^{-4}$ is active only for the Exact-Markov-tailStop variant and is inactive for first-hit BLER decoding. Top row: BLER, found-codeword rate, and average number of codebook membership queries. Bottom row: 99th-percentile membership-query count, mean estimated missing-list probability, and software-work proxy $C_{\rm sw}$. The mean missing-list panel includes only methods that compute missing-list estimates. All curves except Exact-Markov-tailStop use first-hit stopping. The Exact-Markov-tailStop curve uses the same first listed codeword as its hard decision and continues the search until $\widehat P_{\rm miss}\le10^{-4}$ or $q_{\max}$ is reached.}
\label{fig:gm_ebn0_all}
\end{figure*}

\subsection{Random-linear-code experimental setup}

In the random-linear-code experiments, the code is a binary random linear code (RLC) with parameters $[n,k]=[128,116]$ and $R_{\rm c}=k/n=0.90625$. The codebook is represented by a parity-check membership oracle. Each frame uses a uniformly selected codeword. For the soft-input experiments, BPSK maps bit $0$ to $+1$ and bit $1$ to $-1$. Equivalently, for a binary codeword $X^n$, define the transmitted real symbol $S_i = 1-2X_i$. The received sample is $R_i=S_i+N_i$. The primary soft-input channel is a real stationary Gauss--Markov noise model, $N_1\sim \mathcal N(0,\sigma^2)$, $N_i=\rho N_{i-1}+V_i$, $V_i\sim \mathcal N(0,\sigma^2(1-\rho^2))$, with $|\rho|<1$, where the innovations $\{V_i\}_{i\ge2}$ are mutually independent, independent of $N_1$, and independent of the transmitted codeword. The marginal noise variance is chosen as $\sigma^2 = \frac{1}{2R_{\rm c}10^{E_b/N_0/10}}$, which corresponds to unit-energy BPSK symbols over a real channel. Hard decisions are obtained by thresholding $R_i$ at zero.

Our soft-input comparison uses $\rho=0.5$ and membership-query limit $q_{\max}=20000$. For the random-linear-code experiments, $p_qT_q$ is used as the fixed-size random-codebook occupancy approximation to the unqueried codebook-restricted denominator; its calibration is evaluated empirically. The Exact-Markov first-hit, list-output, and tail-stopping runs use the same posterior weights, ambient-tail computation, membership-query loop, and first-order Gauss--Markov posterior-energy enumerator. They differ only in the stopping rule. The first-hit run stops at the first listed codeword and is used for the main BLER and membership-query comparisons; $\eta$ is inactive in this run. The list-output calibration run stops after the prescribed list size is reached. The Exact-Markov-tailStop run continues after the first listed codeword and stops when $\widehat U_q/(S_q+\widehat U_q)\le10^{-4}$ or when $q_{\max}$ is reached. In all hard-decision summaries, the decoded word is the first listed codeword.

The baselines and ablations are defined as follows. The random-interleaved ORBGRAND baseline uses a fixed random permutation $\pi$ of $\{1,\ldots,n\}$ for each code realization. The transmitted codeword is permuted before the Gauss--Markov channel acts. If $S_i=1-2X_i$ denotes the BPSK symbol in code coordinate $i$, then the interleaved channel output is generated as $\widetilde R_j = S_{\pi(j)}+N_j$, $j=1,\ldots,n$. Before decoding, the received samples are deinterleaved to the original code coordinates, $R_i^{\rm int}=\widetilde R_{\pi^{-1}(i)} = S_i+N_{\pi^{-1}(i)}$. Ordinary memoryless ORBGRAND is then applied to $R^{\rm int}$ and tests membership in the original codebook $C_n$. Thus the codeword coordinates are restored before decoding, whereas the effective noise sequence across code coordinates is a randomly permuted version of the Gauss--Markov noise realization. This baseline is a system-level random-interleaving reference; it is not a decoder applied to the same code-coordinate correlated observation used by the noninterleaved methods.
The ORBGRAND-AI baseline follows the nonoverlapping-block approximate-independence construction in the BPSK RLC setting \cite{duffy2023using}. It uses local blocks of length $b\in\{1,2,4,8\}$, computes block-substitution reliabilities, and applies an ORBGRAND-style ordering to those substitutions. This implementation is a software baseline for the present comparison and is not a circuit-level implementation of ORBGRAND-AI. We also include an ExactBlockProduct ablation using the same nonoverlapping local blocks. It enumerates block substitutions in nondecreasing summed block excess negative log-likelihood rather than by ORBGRAND-style rank weights. This ablation is exact only under the nonoverlapping block-independence approximation; it is not the full finite-memory posterior ordering. The Exact-Markov Tail-Calibrated SOGRAND decoder uses the first-order posterior energy induced by the Gauss--Markov model and enumerates putative noise effects in nondecreasing posterior energy.

We report BLER, found-codeword rate, average and high-percentile membership-query counts, mean estimated missing-list probability, and an unweighted software-work proxy. The found-codeword rate is the probability that at least one codeword is found before the query limit; it is not the probability of correct decoding. For all hard-decision summaries, BLER is computed from the first listed codeword. The additional queries in the tail-stopping variant are used to estimate the missing-list probability and listed-codeword APPs. The ORBGRAND-AI and ExactBlockProduct baselines are hard-output orderings in these experiments, so their missing-list estimates are not reported. Let $N_{\rm mem}$ denote the number of codebook membership queries, $N_{\rm met}$ the number of posterior-metric evaluations, $N_{\rm heap}$ the number of priority-queue removals, and $N_{\rm prep}$ the number of method-specific preprocessing operations. Our reported software-work proxy is $C_{\rm sw}$, defined as
\begin{equation}
\label{eq:C_sw}
C_{\rm sw} = N_{\rm mem} + N_{\rm met} + N_{\rm heap} + N_{\rm prep}.
\end{equation}
This proxy is an implementation-level counter for the present software enumerators. It is not a model of latency, area, throughput, or energy, and it should not be compared with circuit-level
ORBGRAND implementations. BLER uncertainty in the plots is reported using $95\%$ Wilson score confidence intervals \cite{wilson1927probable}. The tables report Monte Carlo point estimates.


\vspace{-2pt}

\subsection{Correlated Gaussian channels}

\subsubsection{Gauss--Markov decoding performance}

Fig.~\ref{fig:gm_ebn0_all} reports the main Gauss--Markov experiment as a function of $E_b/N_0$ for $\rho=0.5$. The six panels show BLER, found-codeword rate, average membership-query count, 99th-percentile membership-query count, mean estimated missing-list probability, and the software-work proxy $C_{\rm sw}$. In this experiment, among the tested ORBGRAND-AI and ExactBlockProduct block sizes $b\in\{2,4,8\}$, larger block sizes give lower BLER point estimates and lower average membership-query counts. For a fixed block size, ExactBlockProduct usually improves on ORBGRAND-AI, which shows that replacing rank weights by exact block-product likelihood scores is useful. The Exact-Markov decoder still gives the lowest BLER point estimates and the smallest average membership-query counts among the tested methods, showing that the nonoverlapping block-product approximation does not recover the full finite-memory posterior ordering.

Table~\ref{tab:gm_bler} gives the BLER point estimates from the Gauss--Markov experiment with $10^4$ frames per point. At $E_b/N_0=3$~dB, ORBGRAND-AI with $b=8$ has BLER $2.85\times10^{-2}$, ExactBlockProduct with $b=8$ has BLER $1.98\times10^{-2}$, and the Exact-Markov decoder has BLER $9.9\times10^{-3}$. At $E_b/N_0=4$ dB, the corresponding values are $2.1\times10^{-3}$, $1.1\times10^{-3}$, and $5.0\times10^{-4}$.

\begin{table}[b]
\centering
\vspace{-4pt}
\caption{BLER point estimates for the Gauss--Markov RLC experiment with
$[n,k]=[128,116]$, $\rho=0.5$, $q_{\max}=20000$, and $10^4$
frames per point.}
\vspace{-3pt}
\label{tab:gm_bler}
\resizebox{\linewidth}{!}{
\begin{tabular}{c|cccccc}
\toprule
$E_b/N_0$ (dB)
& \shortstack{ORBGRAND-AI\\$b=2$}
& \shortstack{ORBGRAND-AI\\$b=4$}
& \shortstack{ORBGRAND-AI\\$b=8$}
& \shortstack{ExactBlockProduct\\$b=8$}
& \shortstack{Random-interleaved\\ORBGRAND}
& \shortstack{Exact-Markov\\~} \\
\midrule
1 & 0.8043 & 0.6683 & 0.5829 & 0.5599 & 0.9555 & 0.4664\\
2 & 0.4379 & 0.2779 & 0.2054 & 0.1752 & 0.7772 & 0.1176\\
3 & 0.1196 & 0.0497 & 0.0285 & 0.0198 & 0.4149 & 0.0099\\
4 & 0.0149 & 0.0033 & 0.0021 & 0.0011 & 0.1178 & 0.0005\\
\bottomrule
\end{tabular}}
\end{table}

Log-linear interpolation of the simulated BLER points as functions of $E_b/N_0$ gives the same ordering among the compared BLER curves. At BLER $10^{-1}$, the Exact-Markov decoder reaches the target at approximately $2.07$ dB, compared with $2.26$ dB for ExactBlockProduct with $b=8$, $2.36$ dB for ORBGRAND-AI with $b=8$, $2.49$ dB for ExactBlockProduct with $b=4$, and $2.59$ dB for ORBGRAND-AI with $b=4$. At BLER $10^{-2}$, the Exact-Markov decoder reaches the target at approximately $3.00$ dB, compared with $3.24$ dB for ExactBlockProduct with $b=8$, $3.40$ dB for ORBGRAND-AI with $b=8$, $3.45$ dB for ExactBlockProduct with $b=4$, and $3.59$ dB for ORBGRAND-AI with $b=4$. These interpolations use only the simulated points in the plotted SNR range. Under the same log-linear interpolation, the Exact-Markov curve reaches BLER $10^{-3}$ at approximately $3.77$ dB; the other tested methods do not attain this target within the plotted range.

\begin{figure*}[!t]
\centering
\begin{minipage}{0.3\linewidth}
\centering
\includegraphics[width=\linewidth]{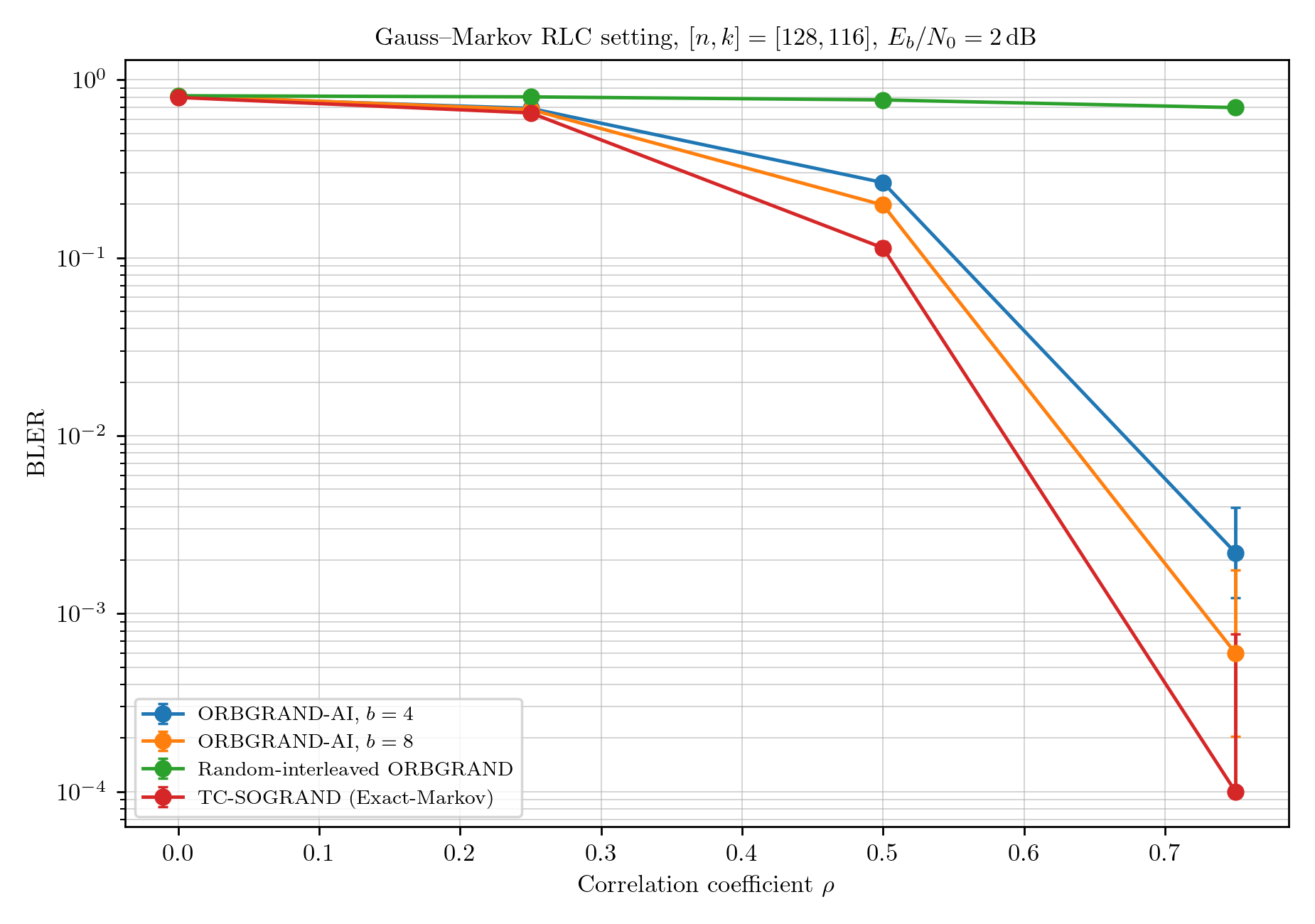}
\end{minipage}
~
\begin{minipage}{0.3\linewidth}
\centering
\includegraphics[width=\linewidth]{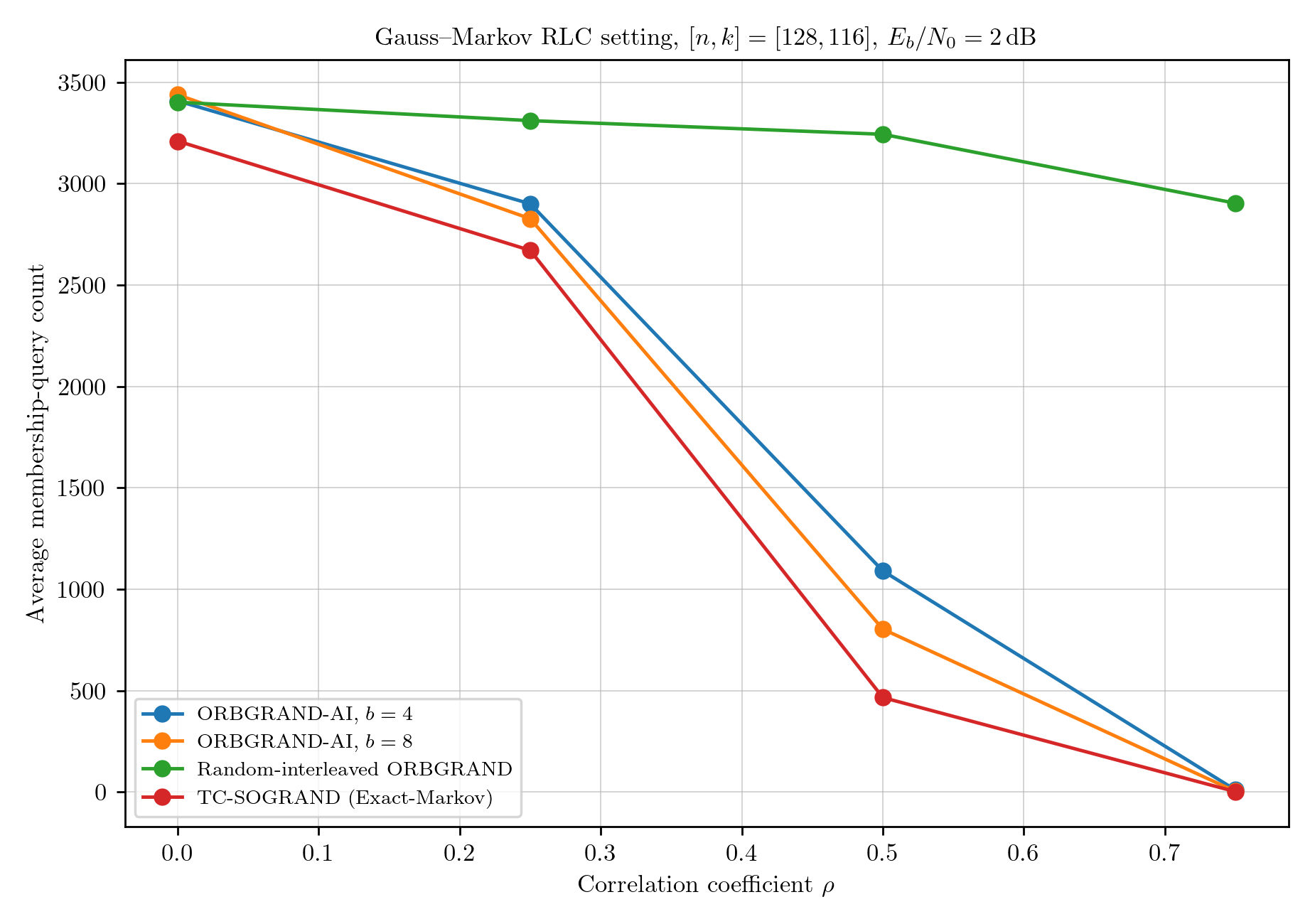}
\end{minipage}
~
\begin{minipage}{0.3\linewidth}
\centering
\includegraphics[width=\linewidth]{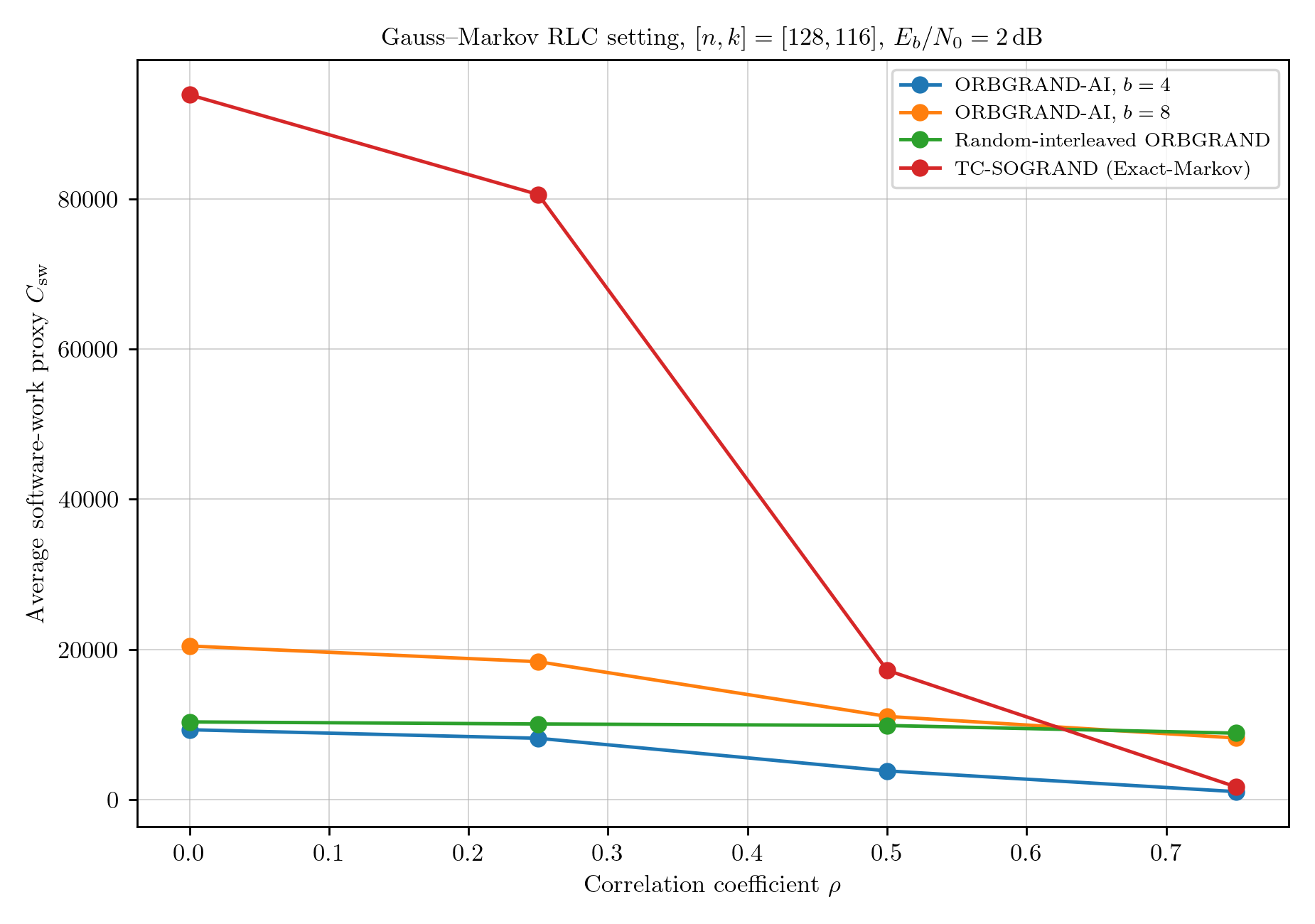}
\end{minipage}
\vspace{-5pt}
\caption{Correlation sweep at $E_b/N_0=2$ dB for the $[128,116]$ RLC,
with $5000$ frames per point. Left: BLER. Middle: average
membership-query count. Right: software-work proxy $C_{\rm sw}$. Zero observed BLER values are displayed at $1/(2N)$ on logarithmic axes,
where $N$ is the number of simulated frames; the corresponding CSV entries
remain equal to zero.}
\vspace{-4pt}
\label{fig:rho_sweep_all}
\end{figure*}

\subsubsection{Membership-query count and enumeration cost}

Table~\ref{tab:gm_queries} reports the average number of codebook membership queries for ORBGRAND-AI with $b=8$, ExactBlockProduct with $b=8$, and the Exact-Markov decoder in the same Gauss--Markov experiment. The ExactBlockProduct ablation has fewer average membership queries than ORBGRAND-AI with the same block size at each tested value of $E_b/N_0$. The Exact-Markov decoder has the smallest average membership-query count at each tested value of $E_b/N_0$. This observation is consistent with the guesswork ordering principle: among deterministic query orders, ordering noise effects in nonincreasing conditional probability minimizes the expected rank of the realized noise effect \cite{christiansen2012guesswork, duffy2019capacity}.
Membership decoding, however, stops at the first queried noise effect whose associated candidate word is a codeword. Therefore, the search may stop before the realized noise effect if an earlier queried noise effect $z^n$ satisfies $y^n\oplus z^n\in C_n$ and $y^n\oplus z^n\ne X^n$. Such an event is a decoding error, not a successful recovery.

\begin{table}[b]
\centering
\vspace{-2pt}
\caption{Average codebook membership-query counts for the Gauss--Markov RLC
experiment with $[n,k]=[128,116]$, $\rho=0.5$, and
$q_{\max}=20000$.}
\vspace{-2pt}
\label{tab:gm_queries}
\footnotesize
\resizebox{\linewidth}{!}{
\begin{tabular}{c|ccc}
\toprule
$E_b/N_0$ (dB)
& \shortstack{ORBGRAND-AI\\$b=8$}
& \shortstack{ExactBlockProduct\\$b=8$}
& \shortstack{Exact-Markov\\~} \\
\midrule
1 & 2403.6 & 2313.6 & 1942.9\\
2 & 829.7  & 709.5  & 471.6\\
3 & 130.6  & 93.0   & 42.2\\
4 & 7.8    & 5.1    & 3.1\\
\bottomrule
\end{tabular}}
\end{table}

Membership-query count measures the number of codebook-membership tests, which is the standard GRAND search metric. It does not include the software work required to generate the candidate noise effects. To report that overhead for the present implementation, we use the unweighted software-work proxy defined in \eqref{eq:C_sw}. The four terms are reported with unit weight only to give a reproducible software-work proxy for the present Python implementation. The quantity $C_{\rm sw}$ is not an asymptotic complexity bound and is not a hardware model.


The software-work proxy gives a different comparison from the membership-query count. At $E_b/N_0=1$ dB, $C_{\rm sw}$ is approximately $6.08\times10^4$ for Exact-Markov, $1.32\times10^4$ for ORBGRAND-AI with $b=8$, and $1.15\times10^4$ for ExactBlockProduct with $b=8$. At $E_b/N_0=3$ dB, the corresponding values are approximately $3.26\times10^3$, $4.52\times10^3$, and $4.35\times10^3$. Thus, in this software implementation, Exact-Markov has the smallest average membership-query count among the tested orderings, but its software-work proxy is larger at low SNR because the exact finite-memory
enumerator performs more posterior-metric and priority-queue operations. The result separates the cost of codebook membership tests from the cost of generating the ordered candidate noise effects.

\subsubsection{Dependence on the correlation coefficient}

Fig.~\ref{fig:rho_sweep_all} summarizes the Gauss--Markov correlation sweep at fixed $E_b/N_0=2$ dB, while Table~\ref{tab:rho_sweep} reports the corresponding BLER point estimates. The three panels show BLER, average membership-query count, and the software-work proxy $C_{\rm sw}$. When $\rho=0$, the Gauss--Markov noise samples are independent, and the BLER point estimates of the tested methods are close. As $\rho$ increases over the tested values, the gap between the BLER point estimates of the correlation-aware orderings and those of the memoryless or interleaved baselines increases. At $\rho=0.5$, ORBGRAND-AI with $b=8$ has BLER point estimate $0.2004$, whereas the Exact-Markov decoder has BLER point estimate $0.1126$. At $\rho=0.75$, no errors were observed for ORBGRAND-AI with $b=8$ or for the Exact-Markov decoder in $5000$ frames. The corresponding $95\%$ Wilson upper confidence endpoint is $7.68\times10^{-4}$; therefore, the zero table entries should be interpreted as finite-sample Monte Carlo point estimates, not as evidence of zero error probability.

\begin{table}[!b]
\centering
\caption{BLER point estimates versus Gauss--Markov correlation coefficient
at $E_b/N_0=2$ dB. Each point uses $5000$ frames. A zero entry means
that no errors were observed in the simulated frames.}
\vspace{-4pt}
\label{tab:rho_sweep}
\resizebox{\linewidth}{!}{
\begin{tabular}{c|cccc}
\toprule
$\rho$
& \shortstack{Random-interleaved\\ORBGRAND}
& \shortstack{ORBGRAND-AI\\$b=4$}
& \shortstack{ORBGRAND-AI\\$b=8$}
& \shortstack{Exact-Markov\\~} \\
\midrule
0.00 & 0.8050 & 0.8098 & 0.8110 & 0.8020\\
0.25 & 0.8002 & 0.6986 & 0.6854 & 0.6532\\
0.50 & 0.7656 & 0.2682 & 0.2004 & 0.1126\\
0.75 & 0.6928 & 0.0030 & 0 & 0\\
\bottomrule
\end{tabular}}
\end{table}

\subsubsection{Model mismatch}

Fig.~\ref{fig:model_mismatch_all} and Table~\ref{tab:model_mismatch} evaluate the sensitivity of the Exact-Markov decoder to mismatch in the assumed posterior model. The true
channel has $\rho=0.5$ and $E_b/N_0=3$ dB, while the decoder uses $\rho_{\rm model}$ in its assumed first-order posterior energy. The figure shows the BLER and average membership-query trends as functions of $\rho_{\rm model}$, and the table reports the corresponding BLER, average query count, 99th-percentile query count, and software-work proxy $C_{\rm sw}$.

Among the tested values, the matched value $\rho_{\rm model}=0.5$ gives the lowest BLER point estimate and the smallest average membership-query count. The mismatched values give higher BLER point estimates and larger average membership-query counts, with the largest degradation occurring for the memoryless model $\rho_{\rm model}=0$. For example, $\rho_{\rm model}=0.25$ gives a BLER point estimate of $0.0400$, whereas the memoryless model gives a BLER point estimate of $0.4044$. Thus, within this tested grid, matching the posterior correlation parameter improves both BLER and membership-query count for the Exact-Markov ordering.

\begin{table}[b]
\centering
\caption{Model-mismatch sweep for the Exact-Markov decoder with true $\rho=0.5$, $E_b/N_0=3$ dB, and $5000$ frames per point.}
\vspace{-4pt}
\label{tab:model_mismatch}
\resizebox{\linewidth}{!}{
\begin{tabular}{c|cccc}
\toprule
$\rho_{\rm model}$
& BLER
& Average membership-query count
& 99th-percentile query count
& $C_{\rm sw}$\\
\midrule
0.00 & 0.4044 & 1656.4 & 13419.2 & 53176\\
0.25 & 0.0400 & 143.1  & 3319.2  & 6720\\
0.50 & 0.0112 & 37.3   & 656.3   & 3102\\
0.75 & 0.0154 & 71.1   & 1523.1  & 4297\\
0.90 & 0.0250 & 110.2  & 2382.2  & 5555\\
\bottomrule
\end{tabular}}
\end{table}

\begin{figure}[!t]
\centering
\begin{minipage}{0.482\linewidth}
\centering
\includegraphics[width=\linewidth]{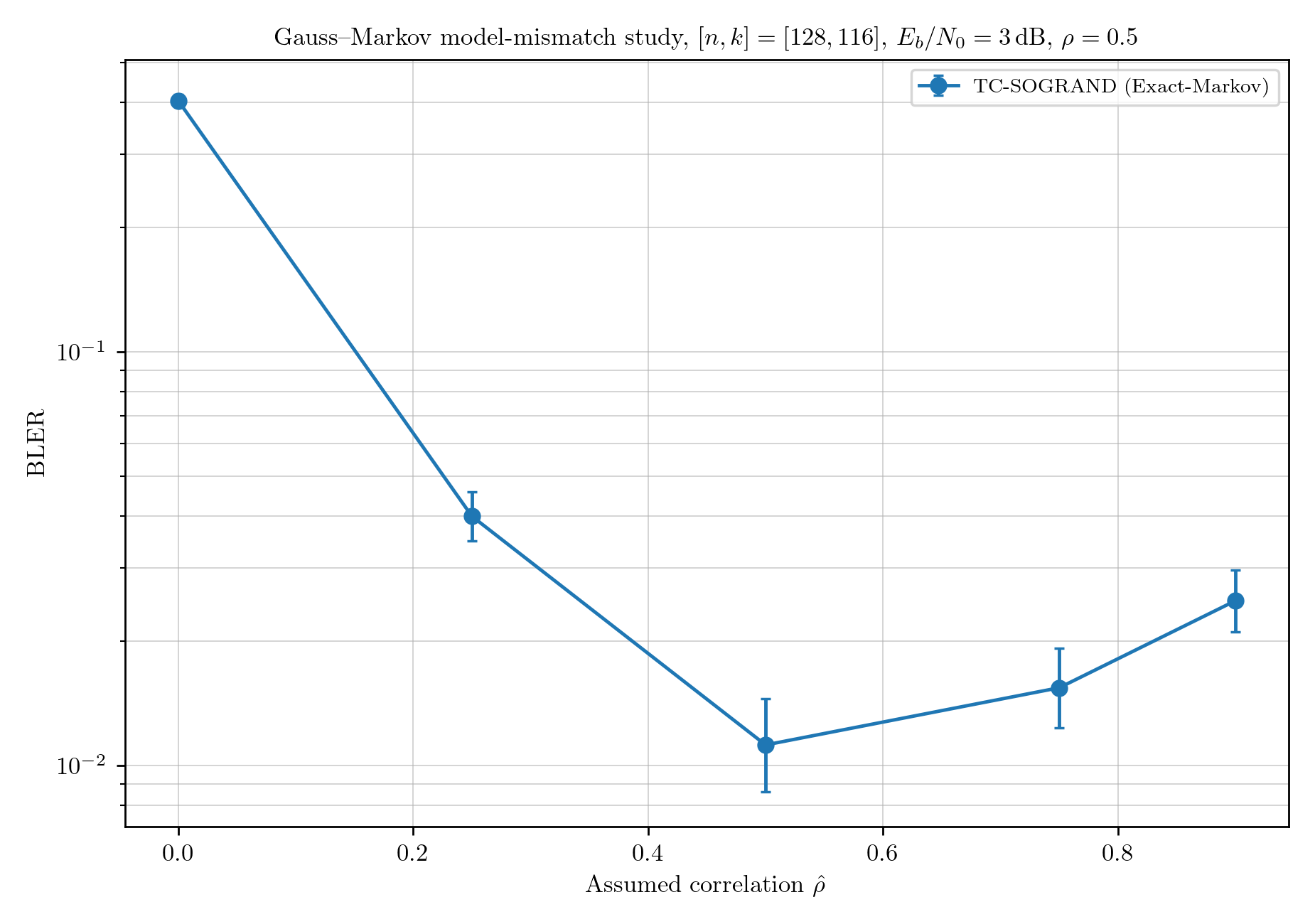}
\end{minipage}
\begin{minipage}{0.482\linewidth}
\centering
\includegraphics[width=\linewidth]{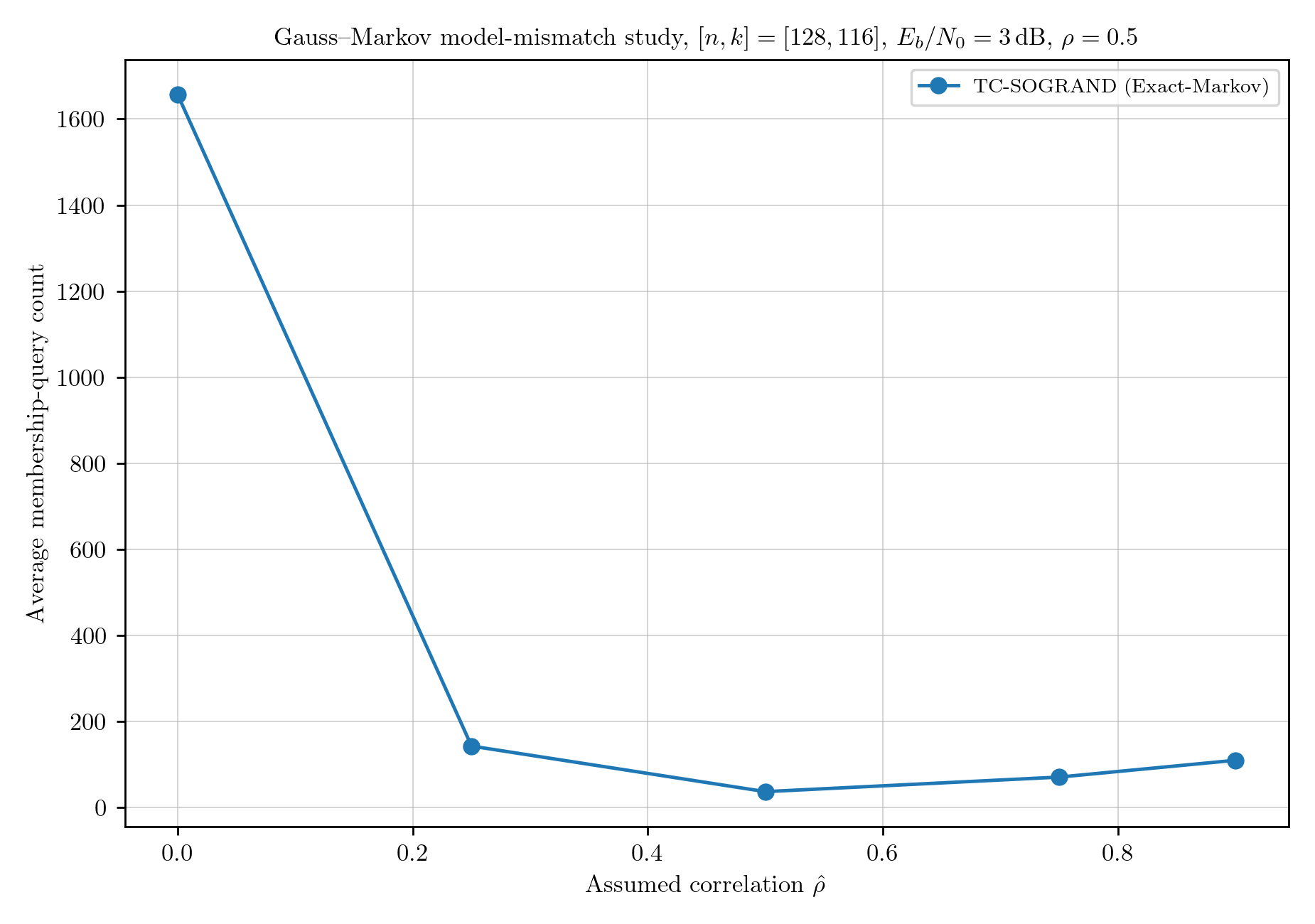}
\end{minipage}
\vspace{-8pt}
\caption{Model-mismatch sweep for the Exact-Markov decoder. The true channel has $\rho=0.5$ and $E_b/N_0=3$ dB. Left: BLER. Right: average membership-query count.}
\vspace{-3pt}
\label{fig:model_mismatch_all}
\end{figure}

\begin{figure*}[!h]
\centering
\begin{minipage}{0.31\linewidth}
\centering
\includegraphics[width=\linewidth]{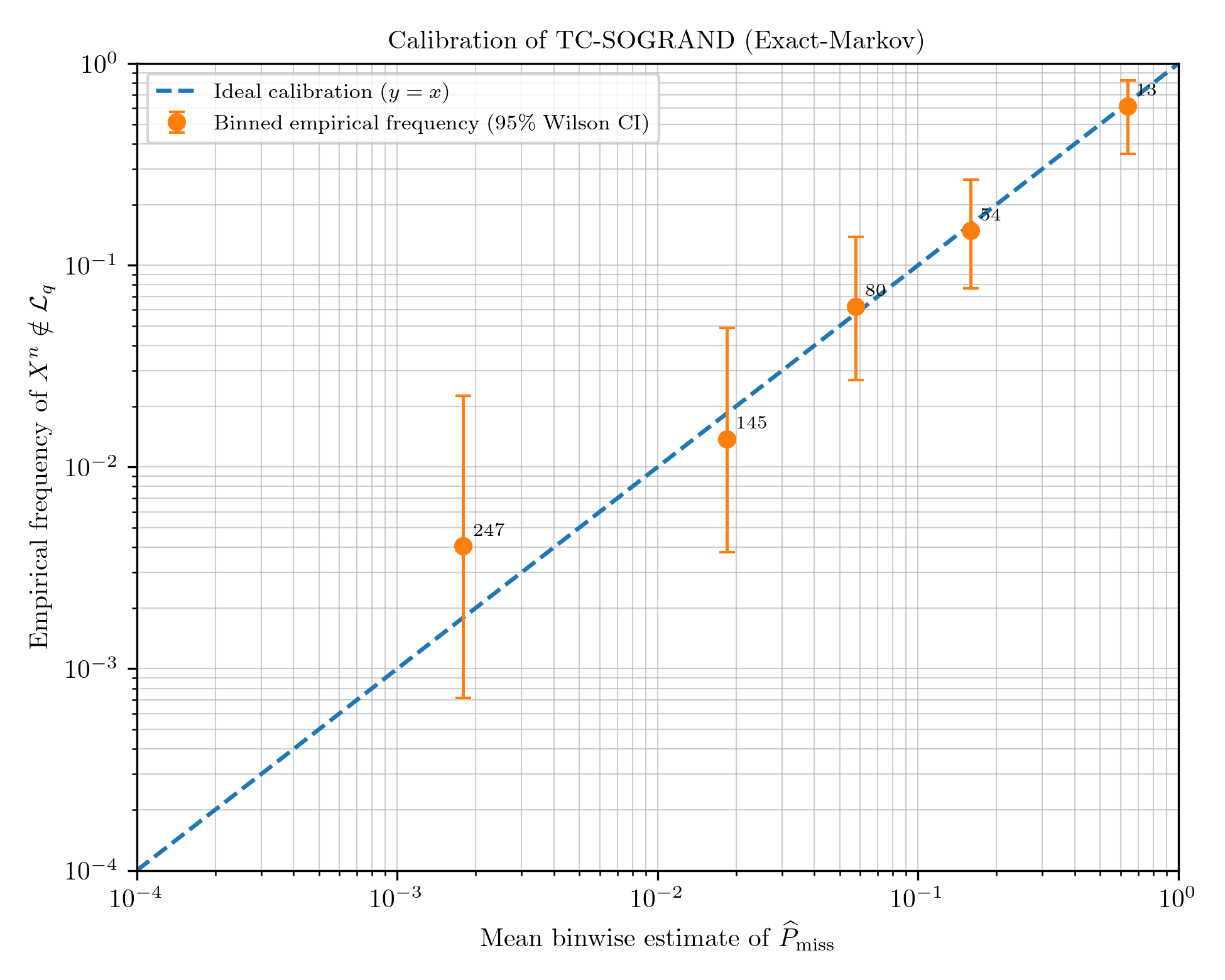}
\end{minipage}
~
\begin{minipage}{0.31\linewidth}
~
\includegraphics[width=\linewidth]{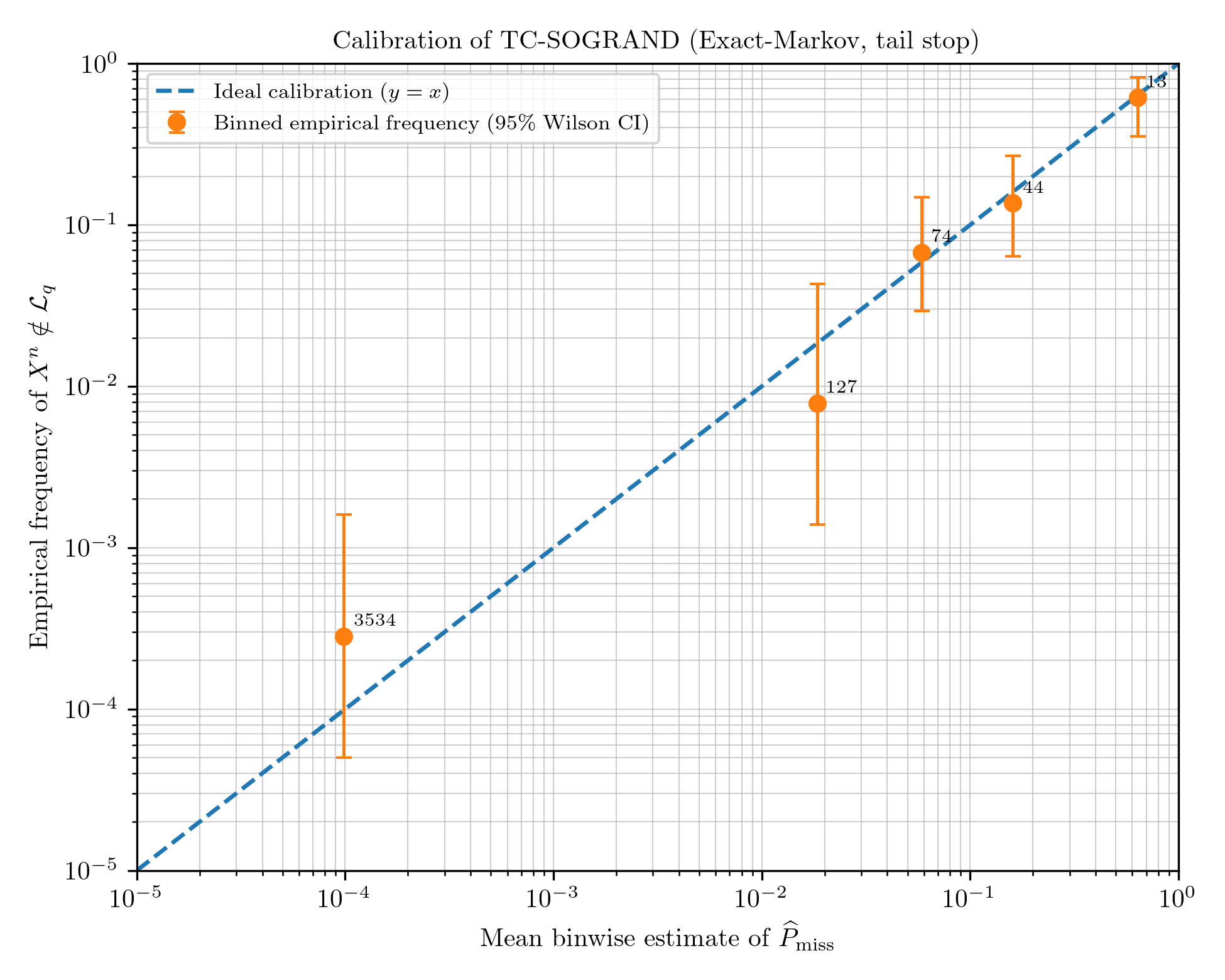}
\end{minipage}
\hfill
\begin{minipage}{0.31\linewidth}
\centering
\includegraphics[width=\linewidth]{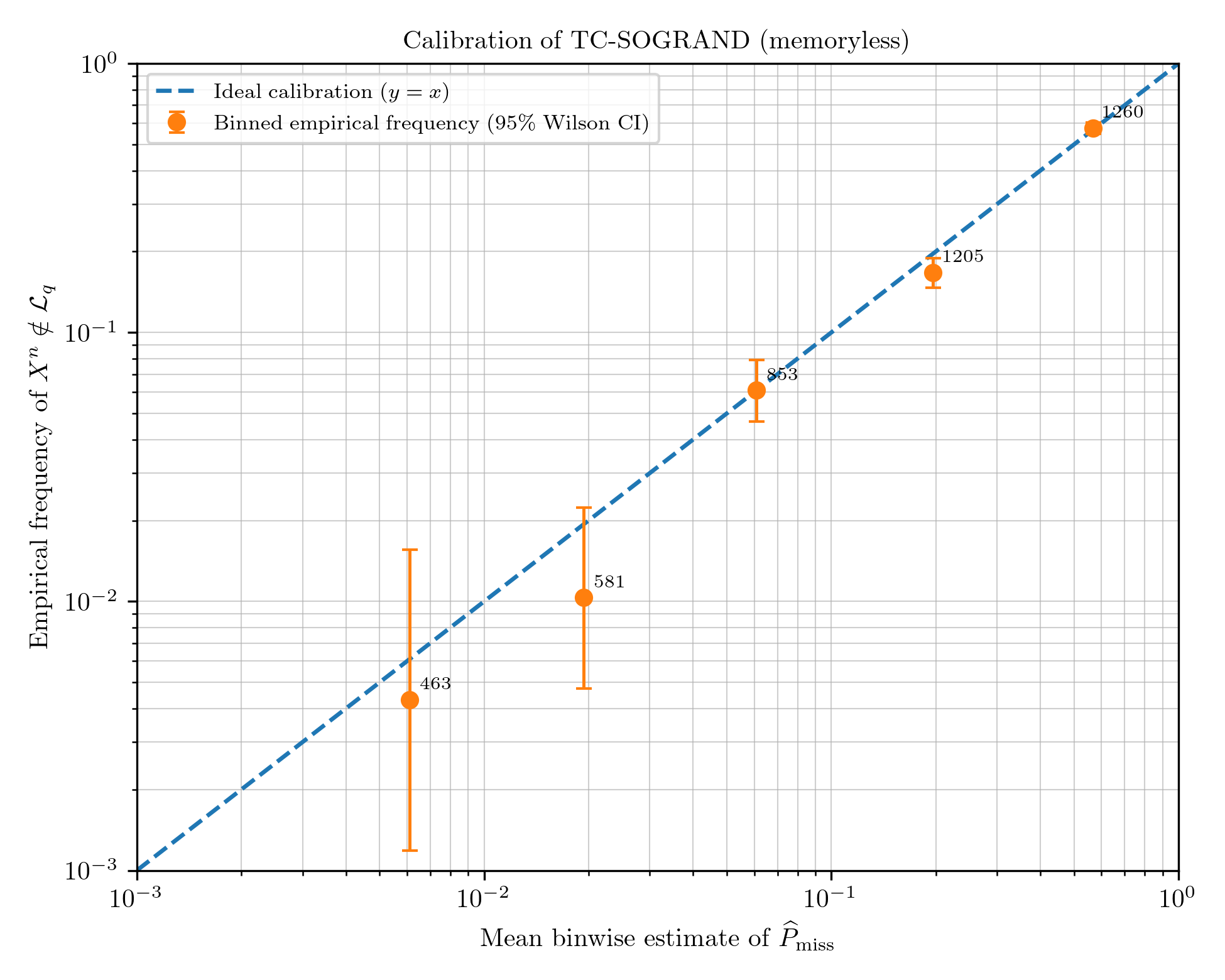}
\end{minipage}
\caption{Missing-list reliability diagrams. Left: Exact-Markov list-output decoder. Middle: Exact-Markov tail-stopping variant. Right: memoryless SOGRAND reference on a memoryless AWGN channel. Markers show binned empirical missing-list frequencies, error bars show Wilson confidence intervals, and labels indicate bin counts.}
\label{fig:calibration_all}
\end{figure*}

\subsection{Calibration and sensitivity analyses}

\subsubsection{Missing-list calibration}

In the RLC experiments, Tail-Calibrated SOGRAND uses the random-codebook occupancy approximation to estimate the codebook-restricted missing-list probability as $\widehat P_{\rm miss}(q) = \widehat U_q / (S_q+\widehat U_q)$. To assess the empirical calibration of this estimate, the values of $\widehat P_{\rm miss}(q)$ at the decoder output are grouped into logarithmic bins. For each bin, we compare the mean estimated missing-list probability with the empirical frequency of the event $X^n\notin \mathcal L_q$. The calibration experiment uses
$[n,k]=[64,52]$, $\rho=0.5$, $E_b/N_0=2$ dB, $q_{\max}=10000$, maximum list size $4$, and $5000$ frames.

Fig.~\ref{fig:calibration_all} reports three reliability diagrams: the Exact-Markov list-output decoder, the Exact-Markov tail-stopping variant, and a memoryless SOGRAND reference on a memoryless AWGN channel. For the Exact-Markov list-output decoder, the count-weighted mean estimated missing-list probability is $5.28\times10^{-3}$, the empirical missing-list frequency is $5.00\times10^{-3}$, and the count-weighted expected calibration error is $9.33\times10^{-4}$. For the Exact-Markov tail-stopping variant, the corresponding quantities are $4.85\times10^{-3}$, $4.20\times10^{-3}$, and $1.17\times10^{-3}$, respectively. These results show small empirical calibration discrepancies in the populated probability range. The lowest-probability bins contain too few missing-list events to assess calibration below $10^{-4}$; resolving that range would require more frames or a rare-event simulation method.

\subsubsection{Missing-list stopping tolerance}

\begin{table}[!b]
\centering
\caption{Missing-list stopping tolerance sweep for the Exact-Markov-tailStop decoder at $E_b/N_0=3$ dB, $\rho=0.5$, $q_{\max}=20000$, and $5000$ frames per point.}
\vspace{-4pt}
\label{tab:eta_sweep}
\resizebox{\linewidth}{!}{
\begin{tabular}{c|cccc}
\toprule
$\eta$
& BLER
& Average membership-query count
& Mean output $\widehat P_{\rm miss}$
& $C_{\rm sw}$\\
\midrule
$10^{-1}$ & 0.0104 & 112.8  & $8.17\times10^{-3}$ & 5232\\
$10^{-2}$ & 0.0120 & 442.4  & $3.43\times10^{-3}$ & 14674\\
$10^{-3}$ & 0.0104 & 1255.4 & $1.43\times10^{-3}$ & 39629\\
$10^{-4}$ & 0.0088 & 3099.0 & $5.89\times10^{-4}$ & 98204\\
$10^{-5}$ & 0.0112 & 6232.5 & $5.03\times10^{-4}$ & 190899\\
\bottomrule
\end{tabular}}
\end{table}

Fig.~\ref{fig:eta_sweep_all} summarizes the effect of the stopping tolerance $\eta$ for the Exact-Markov-tailStop decoder, while Table~\ref{tab:eta_sweep} reports the corresponding BLER point estimates, average membership-query counts, mean output values of $\widehat P_{\rm miss}$, and software-work proxy $C_{\rm sw}$. The experiment uses $E_b/N_0=3$ dB, $\rho=0.5$, and $q_{\max}=20000$. In this variant, the decoder stops when the estimated missing-list probability $\widehat P_{\rm miss}(q)$ is at most $\eta$, or when the query limit $q_{\max}$ is reached. Hence the reported output value of $\widehat P_{\rm miss}$ can exceed $\eta$ on frames that terminate at the query limit. Over the tested values, decreasing $\eta$ increases the average membership-query count and decreases the mean output value of $\widehat P_{\rm miss}$. The BLER point estimates vary within the Monte Carlo uncertainty of this $5000$-frame experiment and should not be interpreted as a monotone function of $\eta$.

\begin{figure*}[t]
\centering
\begin{minipage}{0.3\linewidth}
\centering
\includegraphics[width=\linewidth]{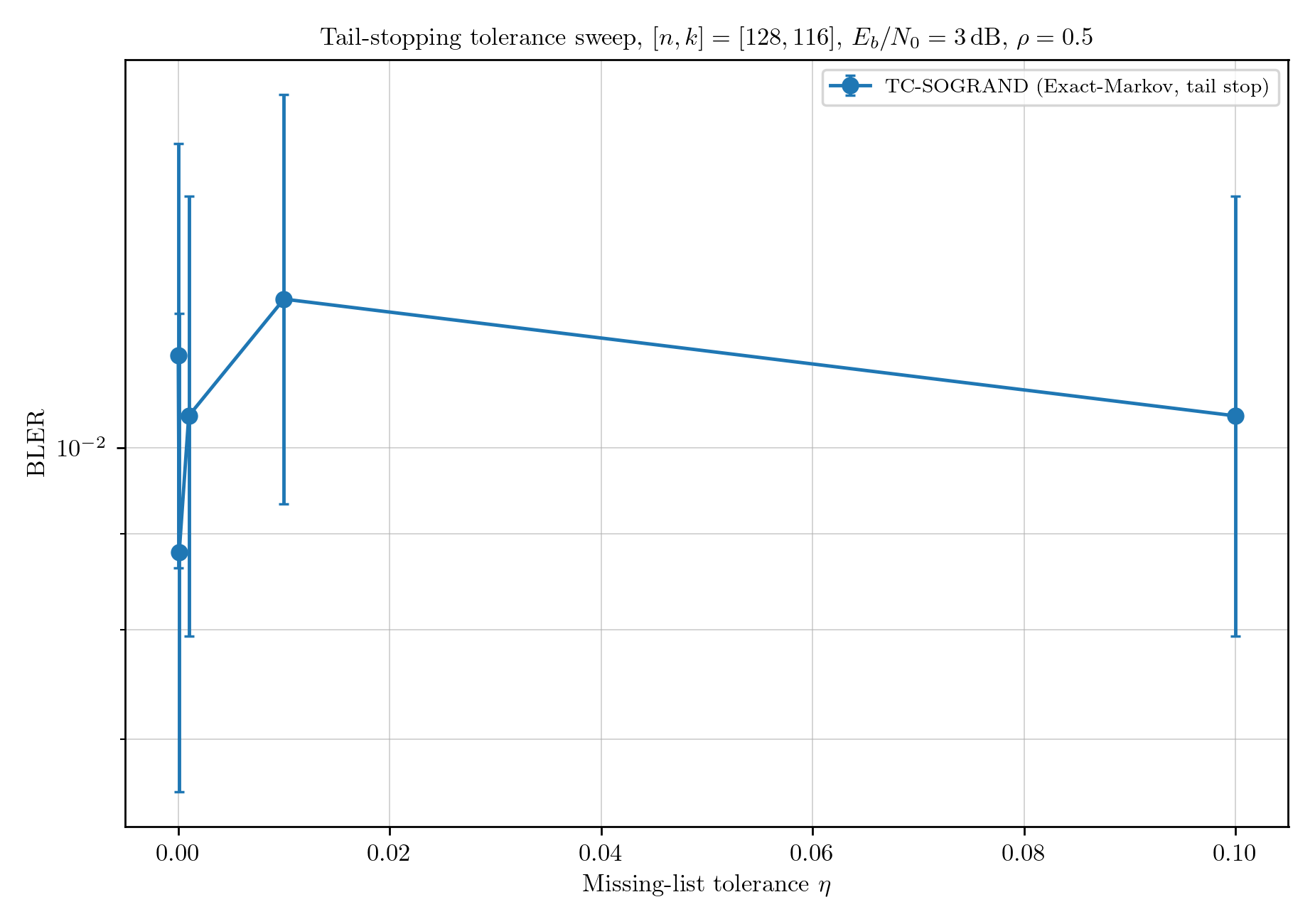}
\end{minipage}
~
\begin{minipage}{0.3\linewidth}
\centering
\includegraphics[width=\linewidth]{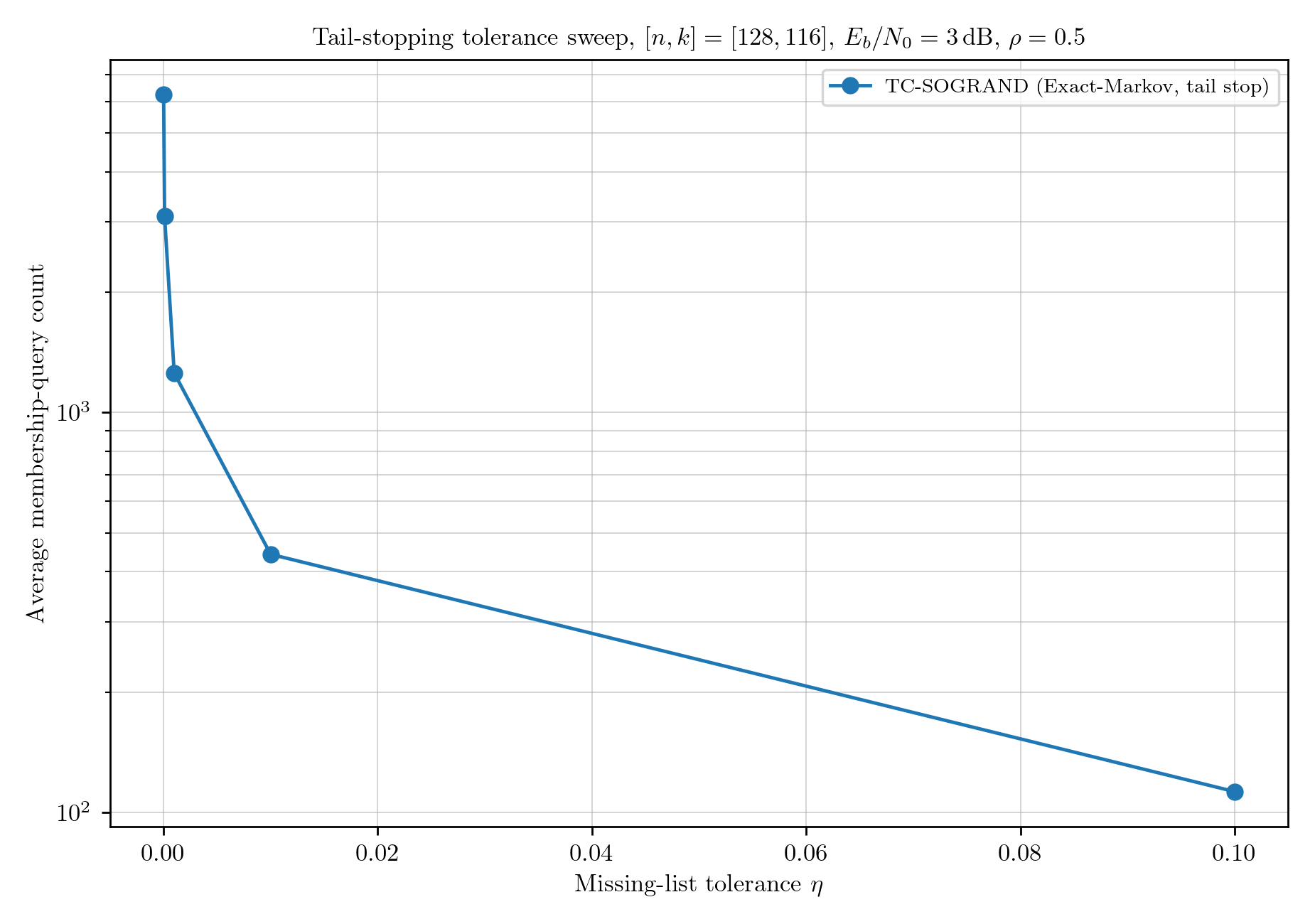}
\end{minipage}
~
\begin{minipage}{0.3\linewidth}
\centering
\includegraphics[width=\linewidth]{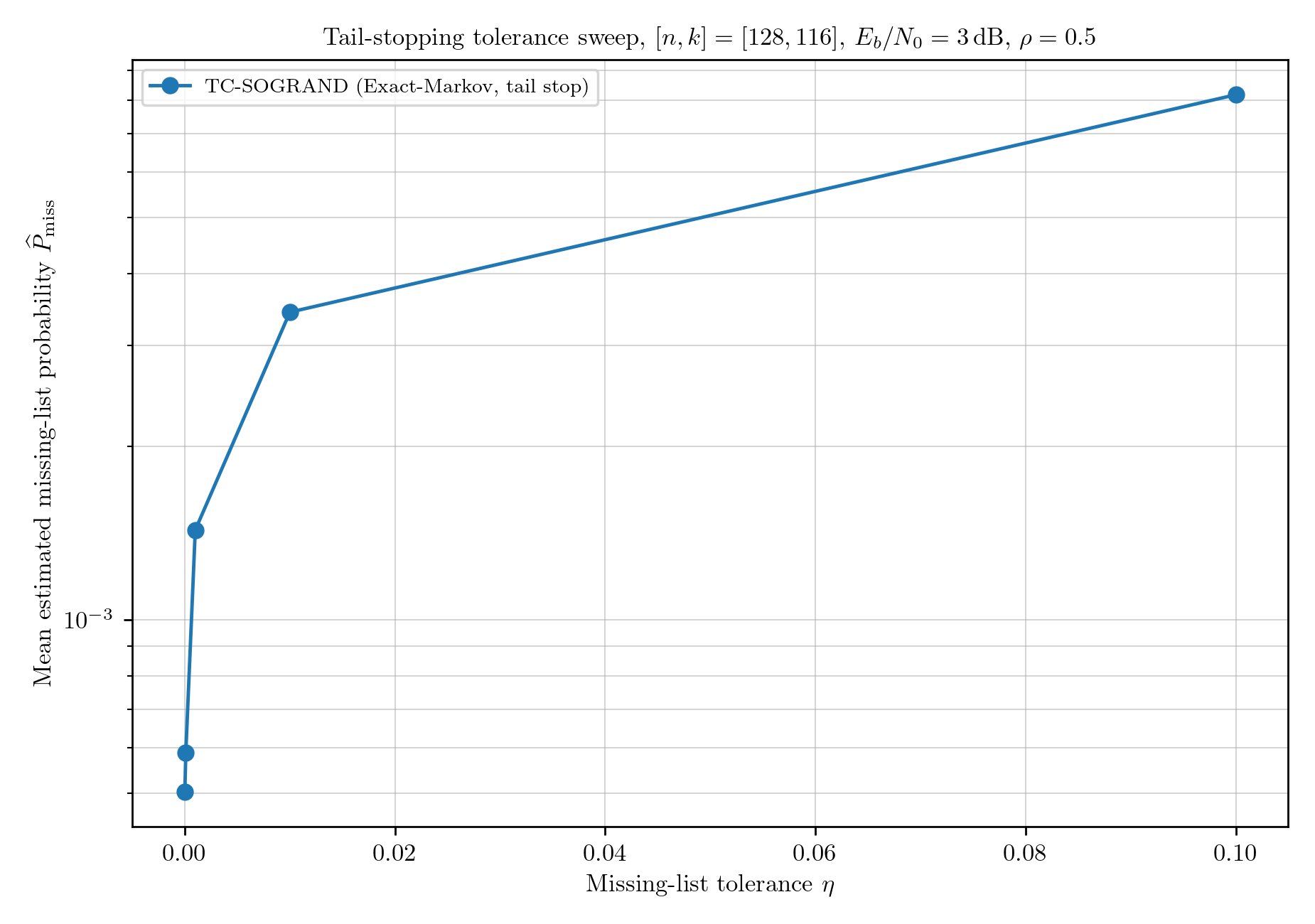}
\end{minipage}
\vspace{-5pt}
\caption{Missing-list stopping tolerance sweep for the
Exact-Markov-tailStop decoder with $[n,k]=[128,116]$, $\rho=0.5$,
$E_b/N_0=3$ dB, and $q_{\max}=20000$. Left: BLER. Middle: average
membership-query count. Right: mean output value of the estimated
missing-list probability.}
\label{fig:eta_sweep_all}
\end{figure*}


\subsubsection{Membership-query limit}

Fig.~\ref{fig:qmax_sweep_all} reports a sweep over the membership-query
limit $q_{\max}$ at $E_b/N_0=3$ dB and $\rho=0.5$. For the
Exact-Markov-tailStop decoder, the missing-list stopping tolerance is
$\eta=10^{-5}$. The Exact-Markov-tailStop decoder has a lower BLER
point estimate than ORBGRAND-AI with $b=8$ at each tested query limit.
At $q_{\max}=1000$, the BLER point estimates are $0.0386$ for
ORBGRAND-AI with $b=8$ and $0.0128$ for Exact-Markov-tailStop. At
$q_{\max}=20000$, the corresponding values are $0.0290$ and
$0.0108$.

Over the tested query limits, the found-codeword-rate point estimates increase with $q_{\max}$, and the Exact-Markov-tailStop decoder has a higher found-codeword-rate point estimate at the smaller tested query limits. This sweep uses independent Monte Carlo samples for different values of $q_{\max}$. Therefore, small nonmonotone changes in BLER across $q_{\max}$ should be interpreted as Monte Carlo variation, not as deterministic properties of the decoders.

\begin{figure*}[t]
\centering
\begin{minipage}{0.3\linewidth}
\centering
\includegraphics[width=\linewidth]{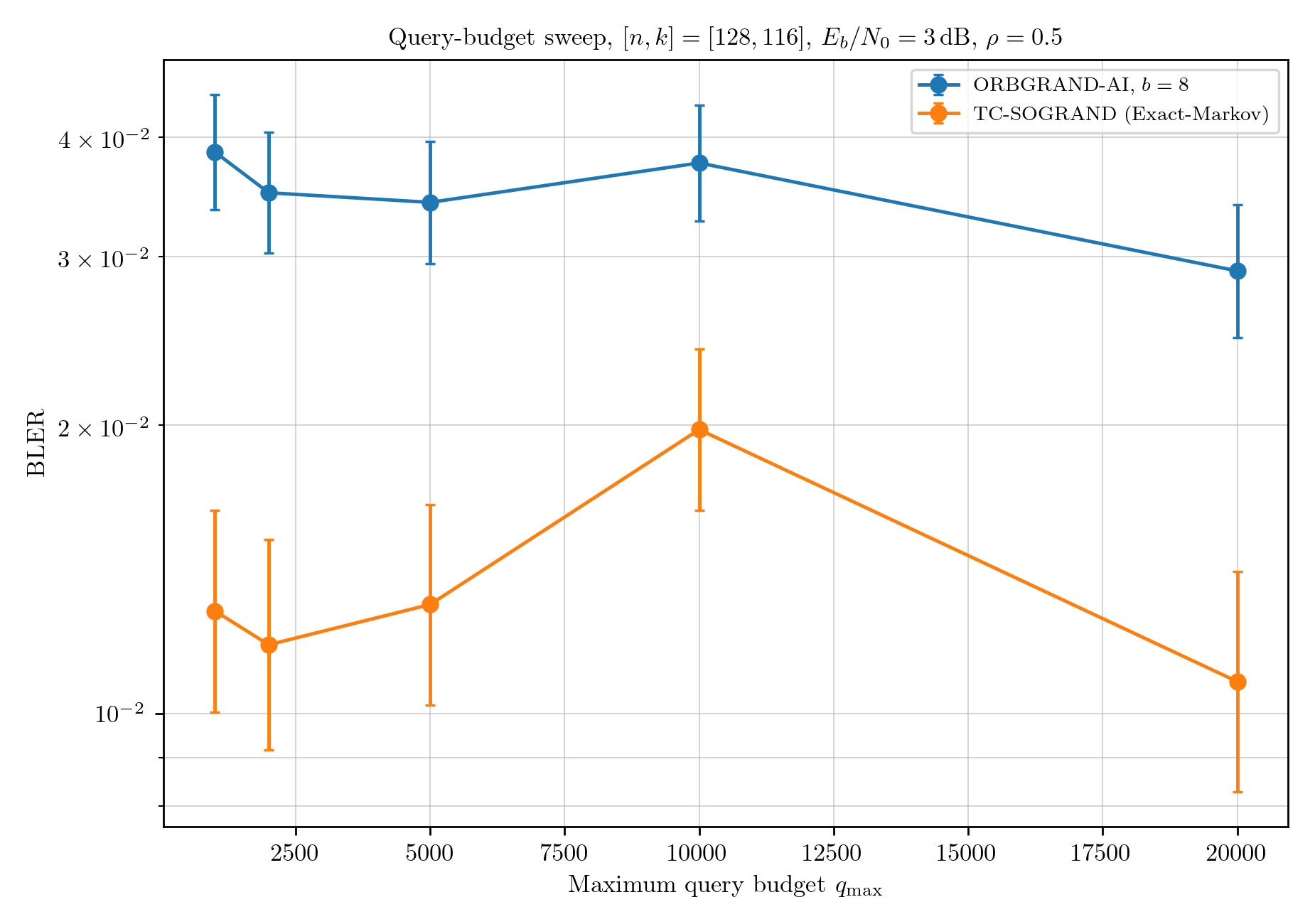}
\end{minipage}
~
\begin{minipage}{0.3\linewidth}
\centering
\includegraphics[width=\linewidth]{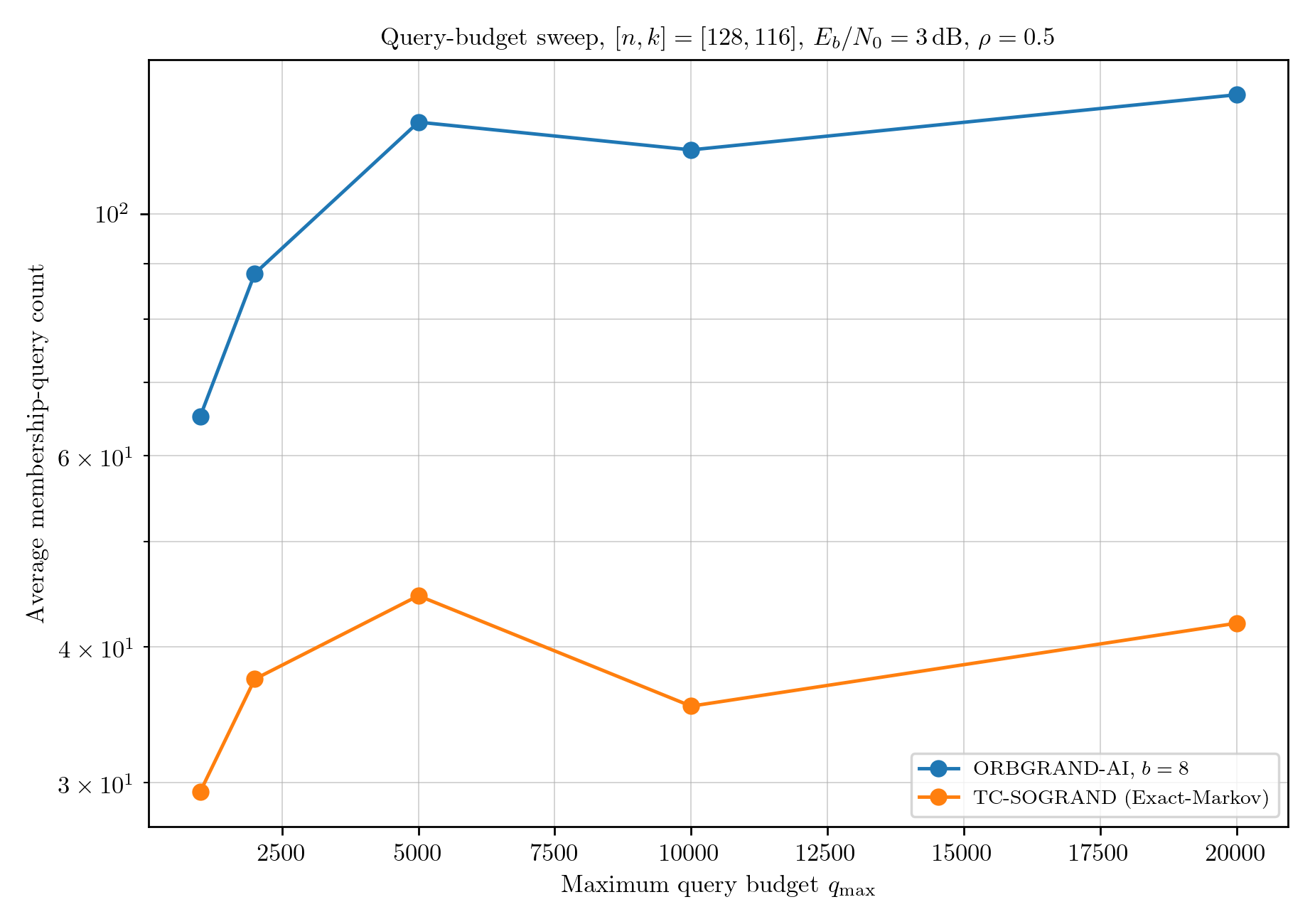}
\end{minipage}
~
\begin{minipage}{0.3\linewidth}
\centering
\includegraphics[width=\linewidth]{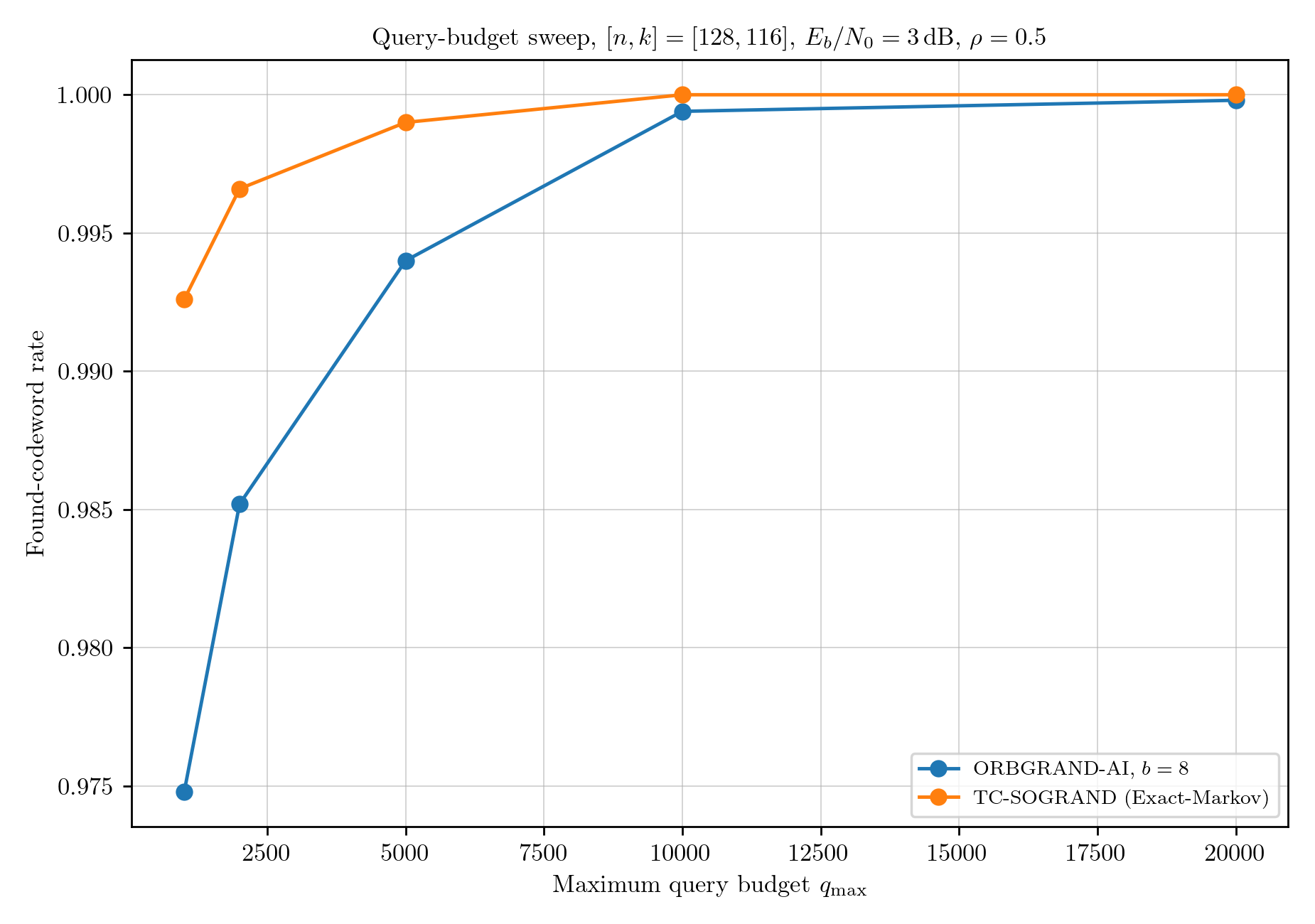}
\end{minipage}
\vspace{-5pt}
\caption{Membership-query-limit sweep at $E_b/N_0=3$ dB for the $[128,116]$ RLC with $\rho=0.5$. For Exact-Markov-tailStop, $\eta=10^{-5}$. Left: BLER. Middle: average membership-query count.
Right: found-codeword rate.}
\vspace{-5pt}
\label{fig:qmax_sweep_all}
\end{figure*}

\vspace{10pt}

\subsubsection{Fixed membership-query-limit rate sweep}

\begin{table}[!b]
\centering
\caption{Fixed-query-limit rate sweep for length-$128$ RLCs at $E_b/N_0=2$ dB, $\rho=0.5$, $q_{\max}=20000$, and $5000$ frames per point.}
\vspace{-4pt}
\label{tab:rate_sweep_exact}
\resizebox{\linewidth}{!}{
\begin{tabular}{c|c|cccc}
\toprule
$k$
& $R_{\rm c}$
& BLER
& Found-codeword rate
& Average membership-query count
& 99th-percentile query count\\
\midrule
 96 & 0.7500 & 0.2098 & 0.7902 & 5801.4 & 20000.0\\
104 & 0.8125 & 0.1046 & 0.8954 & 3372.0 & 20000.0\\
110 & 0.8594 & 0.0650 & 0.9460 & 1984.9 & 20000.0\\
116 & 0.9063 & 0.1200 & 0.9998 & 476.6  & 7131.7\\
120 & 0.9375 & 0.2732 & 1.0000 & 71.3   & 688.0\\
\bottomrule
\end{tabular}}
\end{table}

Fig.~\ref{fig:rate_sweep_all} summarizes the fixed-query-limit rate sweep, while Table~\ref{tab:rate_sweep_exact} reports the corresponding BLER, found-codeword rate, average membership-query count, and 99th-percentile query count for the Exact-Markov decoder. The experiment uses a fixed $E_b/N_0=2$ dB and $q_{\max}=20000$. Note that this is a fixed-query-limit experiment, not an asymptotic rate-threshold experiment. At lower rates, the codebook is sparser and, under the fixed-$E_b/N_0$ normalization, the channel noise variance is larger. Consequently, the decoder may fail to find any codeword within the query limit. At higher rates, the codebook is denser and the channel noise variance is smaller. The denser codebook increases the probability that the search encounters an incorrect codeword before the transmitted codeword. The BLER point estimates in this sweep are therefore affected jointly by the rate-dependent channel variance,
codebook density, and query-limit truncation.

For the Exact-Markov decoder and $k\in\{96,104,110,116,120\}$, the BLER point estimates are $0.2098$, $0.1046$, $0.0650$, $0.1200$, and $0.2732$, respectively.
The corresponding found-codeword rates are $0.7902$, $0.8954$, $0.9460$, $0.9998$, and $1.0000$. The found-codeword rate records only whether the membership search returns
at least one codeword before the query limit; it is not a correctness probability. Thus, the value $1.0000$ at $k=120$ indicates that the search found a codeword in every simulated frame, not that every decoded word was correct. The nonmonotonic BLER trend reflects the finite-query regime: as $k$ increases, the code rate changes the channel noise variance used at fixed $E_b/N_0$, while the larger codebook also increases the probability of an earlier incorrect codeword. These competing effects are observed under the fixed query limit $q_{\max}$.

\begin{figure*}[!t]
\centering
\begin{minipage}{0.32\linewidth}
\centering
\includegraphics[width=\linewidth]{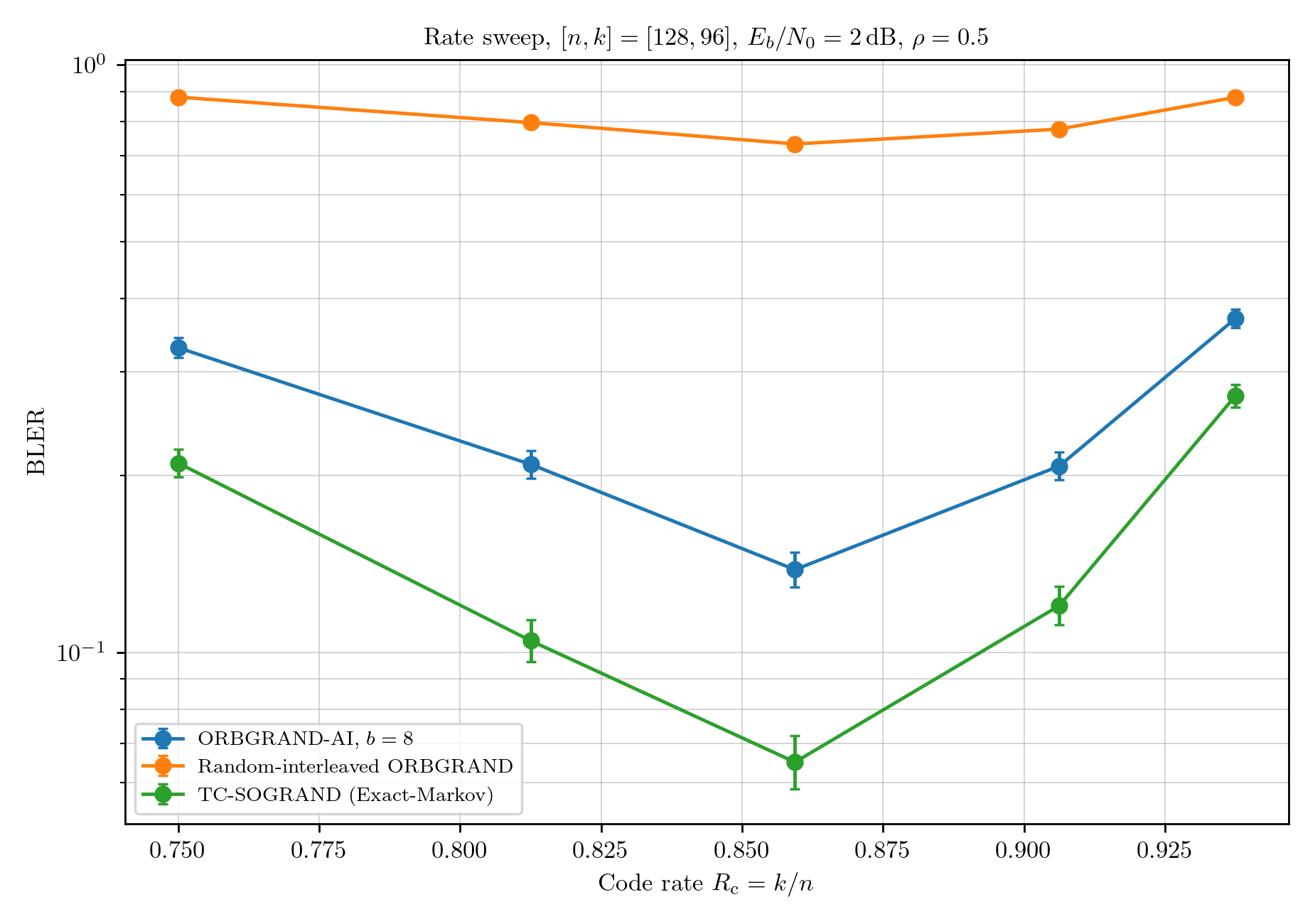}
\end{minipage}
\hfill
\begin{minipage}{0.32\linewidth}
\centering
\includegraphics[width=\linewidth]{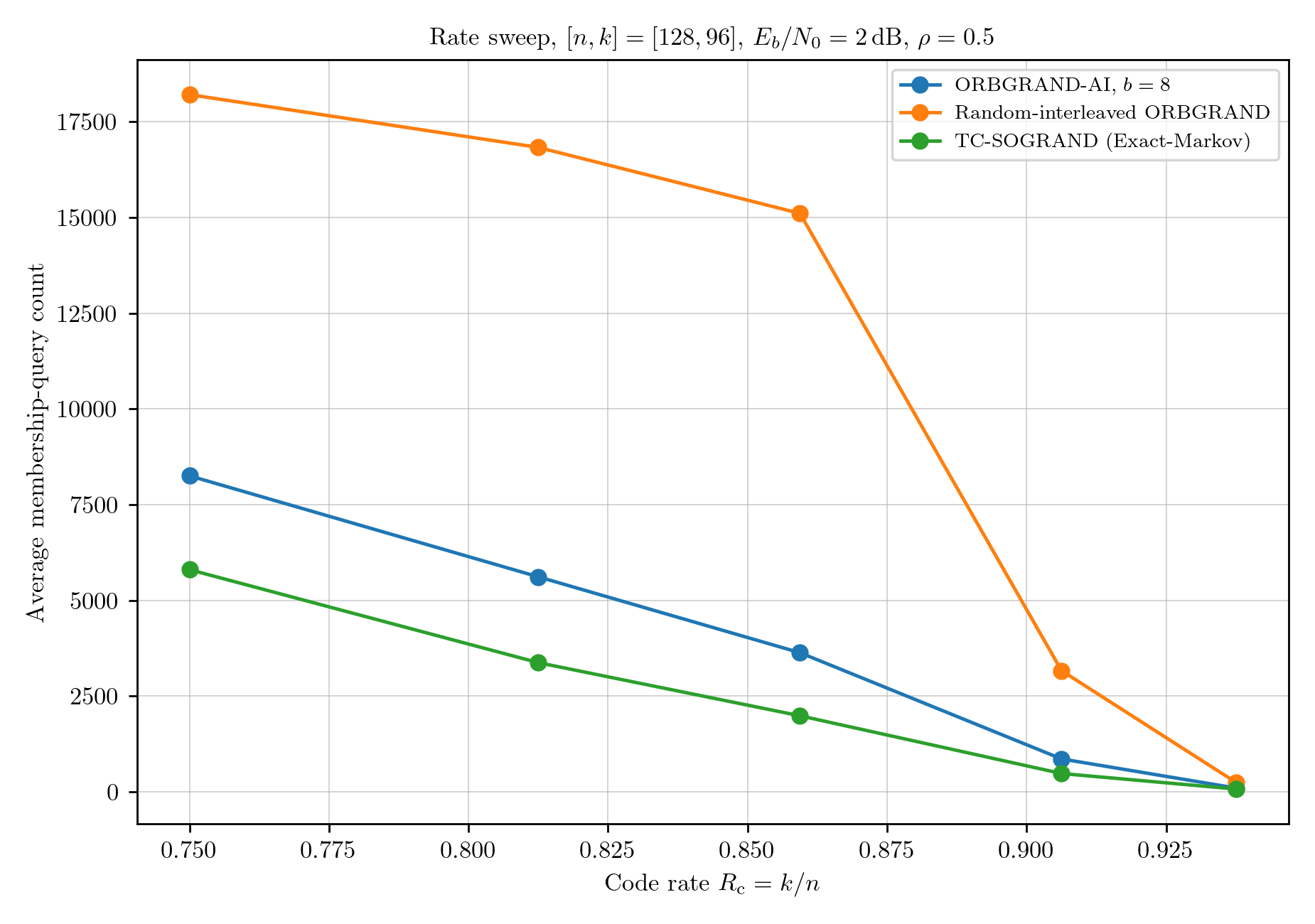}
\end{minipage}
\hfill
\begin{minipage}{0.32\linewidth}
\centering
\includegraphics[width=\linewidth]{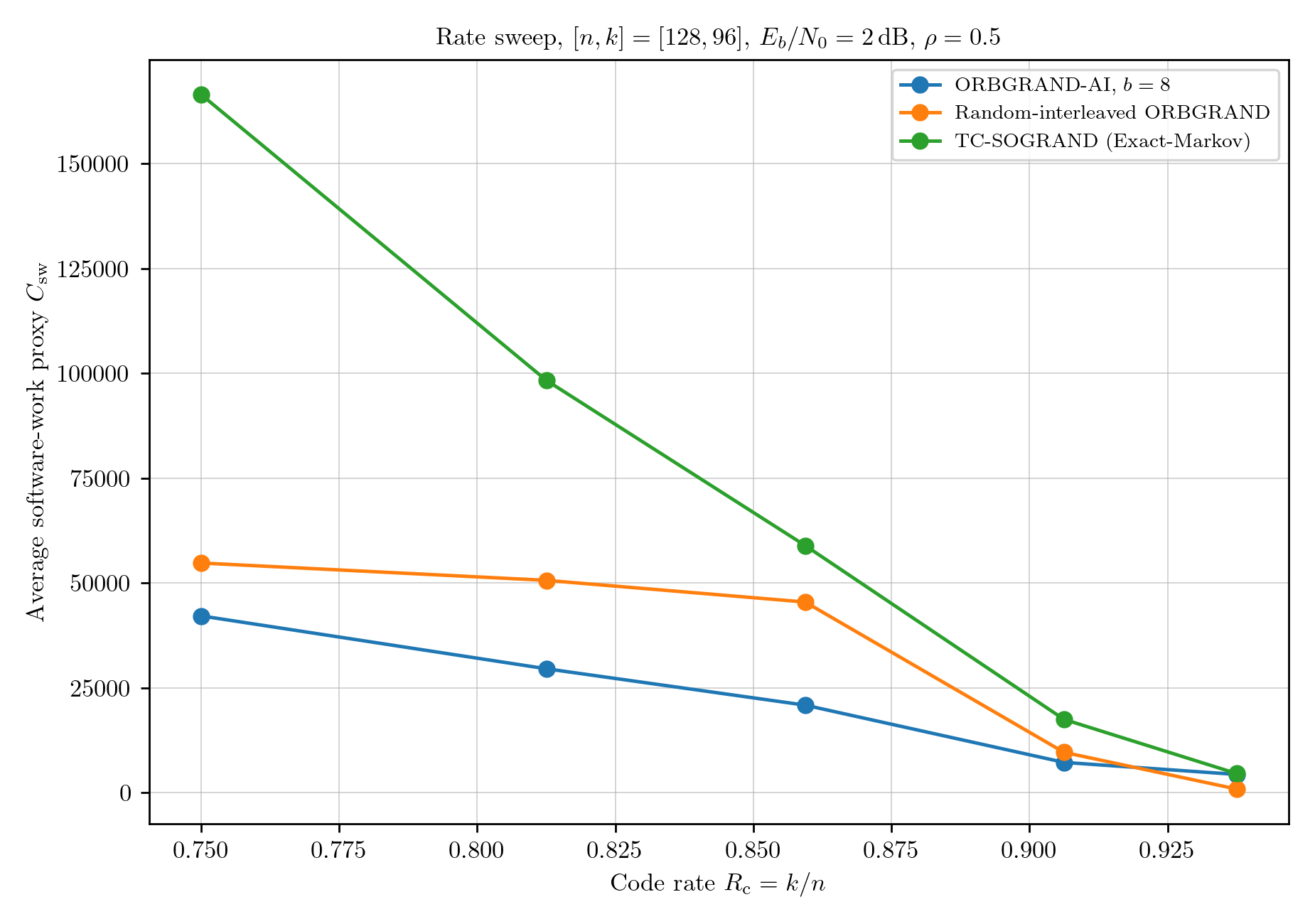}
\end{minipage}
\caption{Fixed-query-limit rate sweep for length-$128$ RLCs at $E_b/N_0=2$ dB, $\rho=0.5$, and $q_{\max}=20000$. Left: BLER. Middle: average membership-query count. Right: software-work proxy $C_{\rm sw}$.}
\label{fig:rate_sweep_all}
\end{figure*}

\vspace{10pt}

\subsubsection{Hard-decision binary Markov noise}

We also evaluate binary additive hard-decision noise models with temporal correlation. These experiments replace the real-valued soft-observation model by a binary additive channel and compare a finite-memory noise-effect ordering with Hamming-weight GRAND when the binary noise process has memory. 
For the binary Markov noise model, the transition probabilities are
\begin{align}
\Pr\{Z_i=1\mid Z_{i-1}=0\} &= p_{01},\\
\Pr\{Z_i=0\mid Z_{i-1}=1\} &= p_{10},
\end{align}
with $p_{10}=0.3$. The initial noise state is drawn from the stationary distribution of this two-state Markov chain. The baseline is Hamming-weight GRAND. The matched Markov Tail-Calibrated SOGRAND decoder uses the first-order Markov noise prior. Fig.~\ref{fig:binary_markov_all} and Table~\ref{tab:binary_markov} show that, for all tested values of $p_{01}$, the matched finite-memory ordering gives lower BLER point estimates and smaller average membership-query counts than Hamming-weight
GRAND.

\begin{figure*}[!t]
\centering
\begin{minipage}{0.32\linewidth}
\centering
\includegraphics[width=\linewidth]{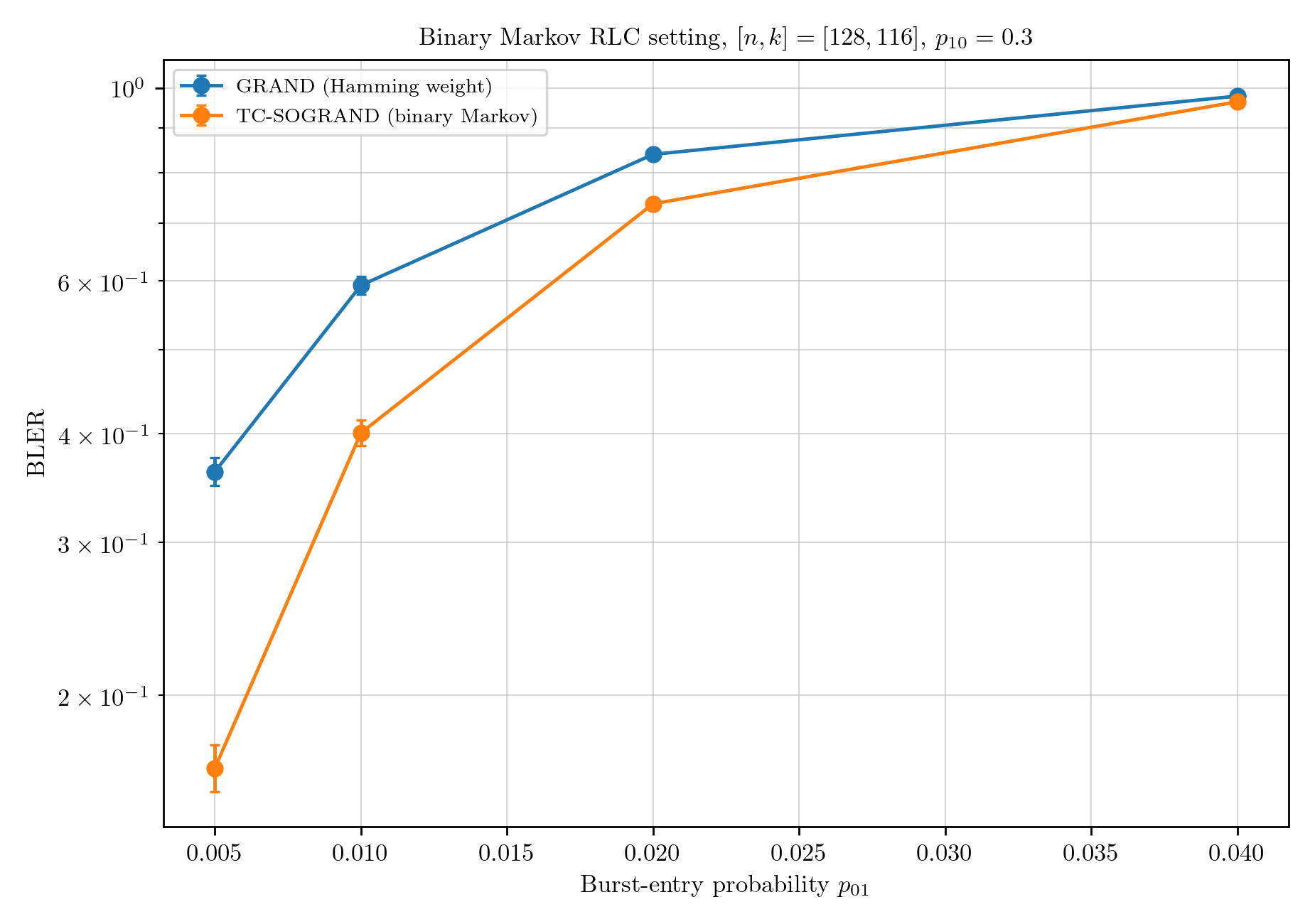}
\end{minipage}
\hfill
\begin{minipage}{0.32\linewidth}
\centering
\includegraphics[width=\linewidth]{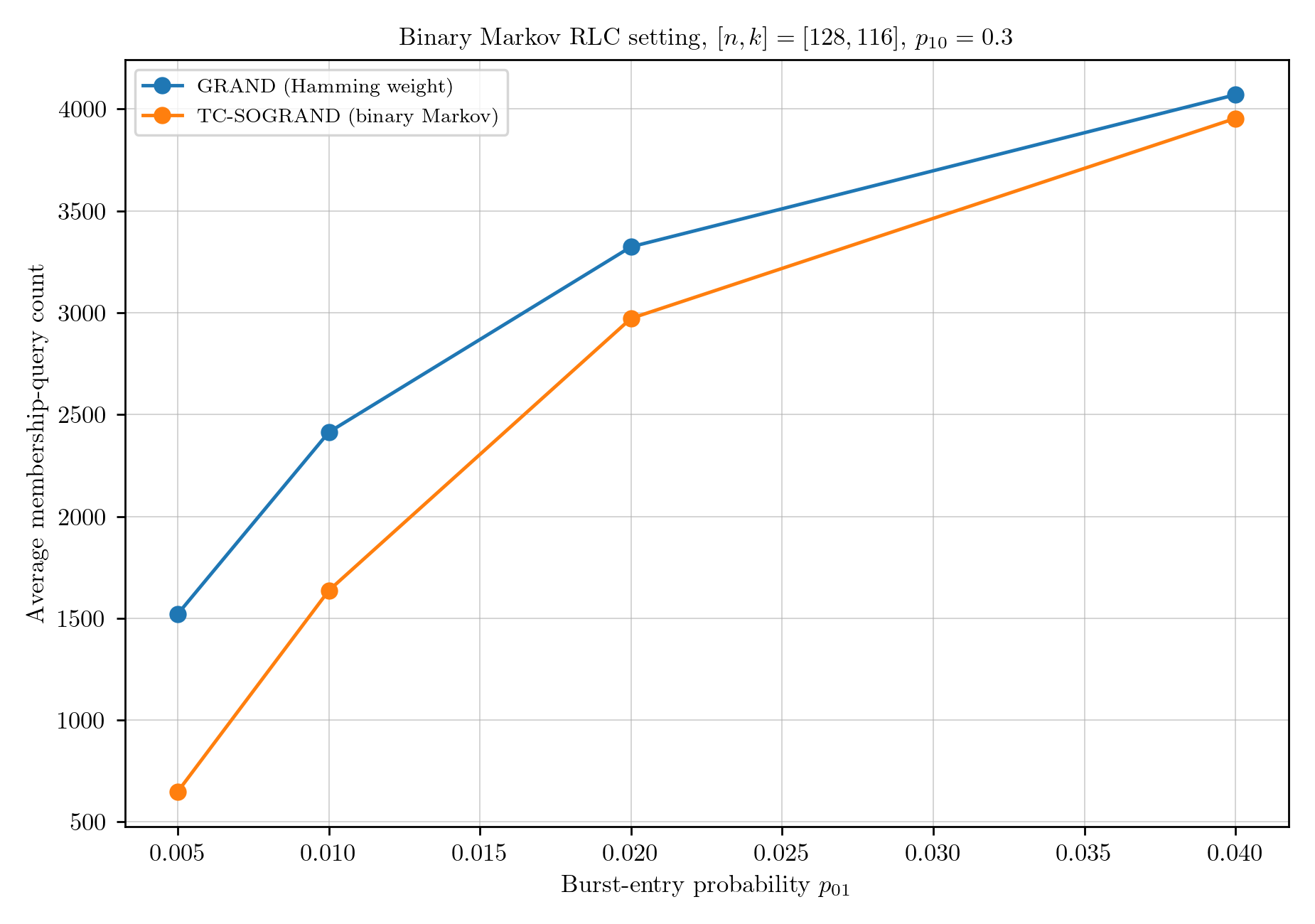}
\end{minipage}
\hfill
\begin{minipage}{0.32\linewidth}
\centering
\includegraphics[width=\linewidth]{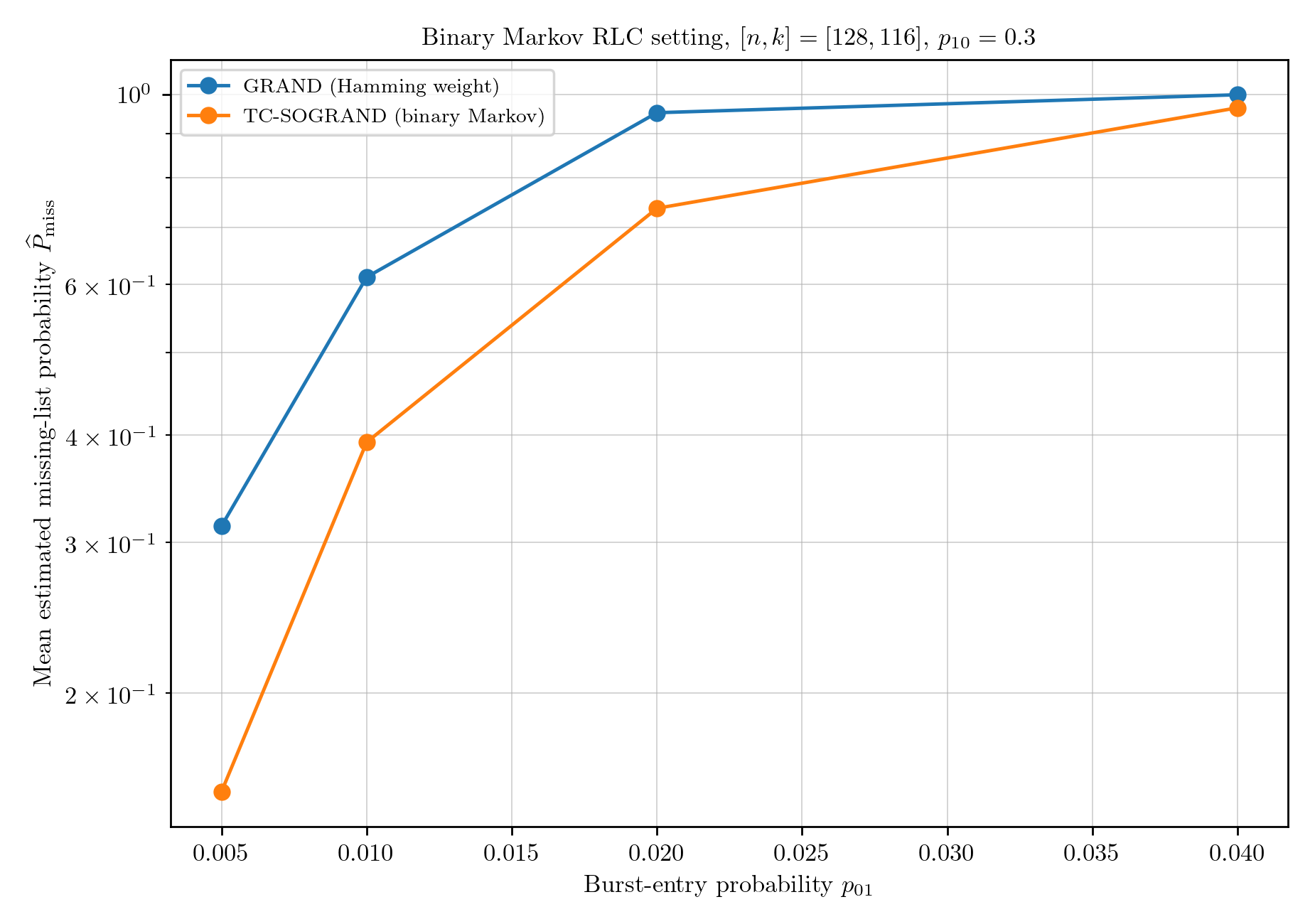}
\end{minipage}
\caption{Hard-decision binary Markov noise experiment for the $[128,116]$ RLC, with $p_{10}=0.3$, $q_{\max}=20000$, and $5000$ frames per point. Left: BLER. Middle: average membership-query count.
Right: mean estimated missing-list probability.}
\label{fig:binary_markov_all}
\end{figure*}

\begin{table}[h]
\centering
\caption{Hard-decision binary Markov noise experiment for the
$[128,116]$ RLC, with $p_{10}=0.3$, $q_{\max}=20000$, and $5000$
frames per point.}
\vspace{-4pt}
\resizebox{\linewidth}{!}{
\label{tab:binary_markov}
\begin{tabular}{c|cc|cc}
\toprule
& \multicolumn{2}{c|}{Hamming-weight GRAND}
& \multicolumn{2}{c}{Matched Markov} \\
$p_{01}$ & BLER & Average query count & BLER & Average query count\\
\midrule
0.005 & 0.3618 & 1520.6 & 0.1648 & 648.5\\
0.010 & 0.5932 & 2413.5 & 0.4010 & 1636.2\\
0.020 & 0.8396 & 3323.8 & 0.7364 & 2972.2\\
0.040 & 0.9800 & 4070.4 & 0.9654 & 3954.3\\
\bottomrule
\end{tabular}}
\end{table}

\vspace{10pt}

\subsubsection{Gilbert--Elliott hard-decision noise}

\begin{figure}[!b]
\centering
\begin{minipage}{0.75\linewidth}
\centering
\includegraphics[width=\linewidth]{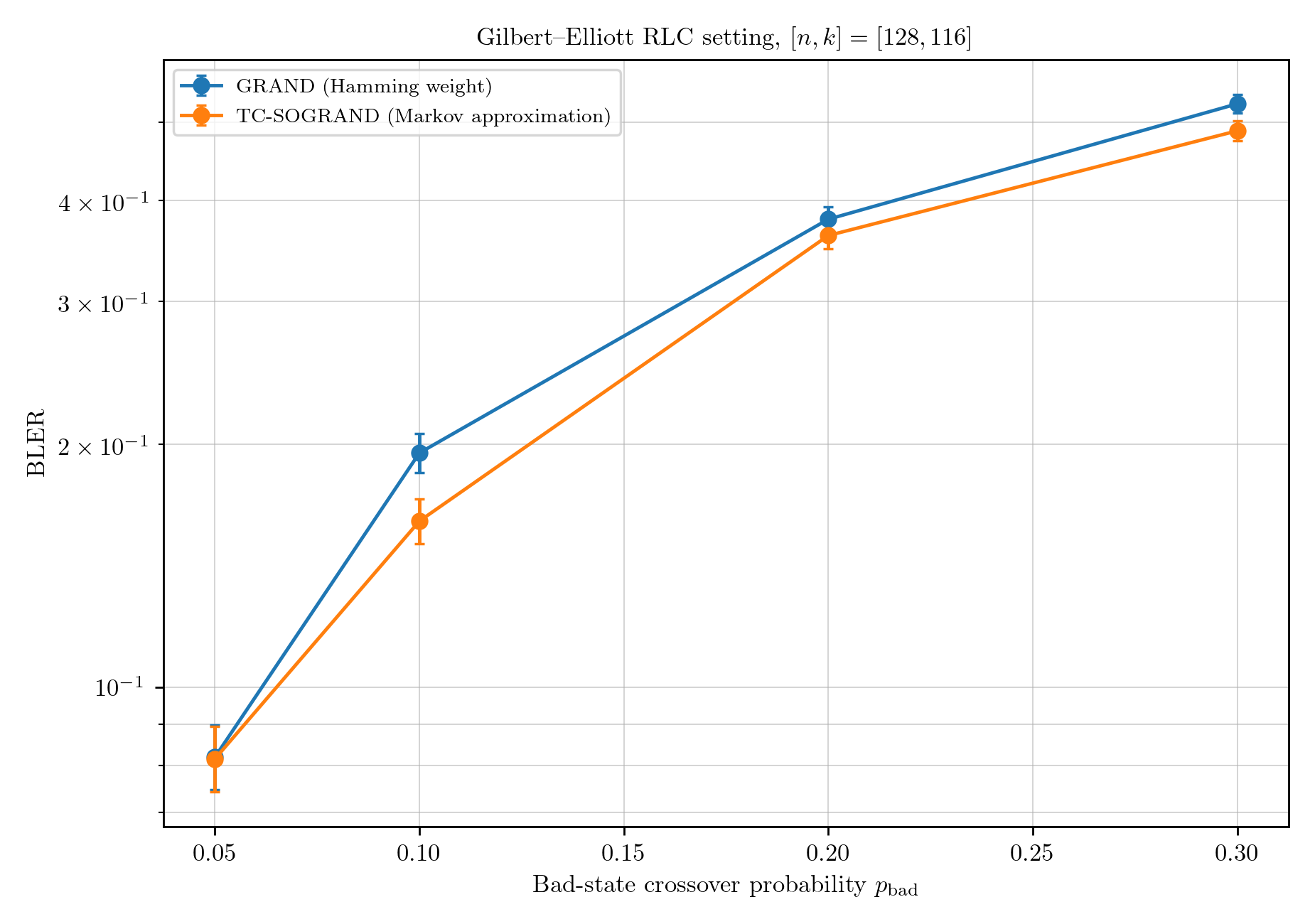}
\end{minipage}

\begin{minipage}{0.75\linewidth}
\centering
\includegraphics[width=\linewidth]{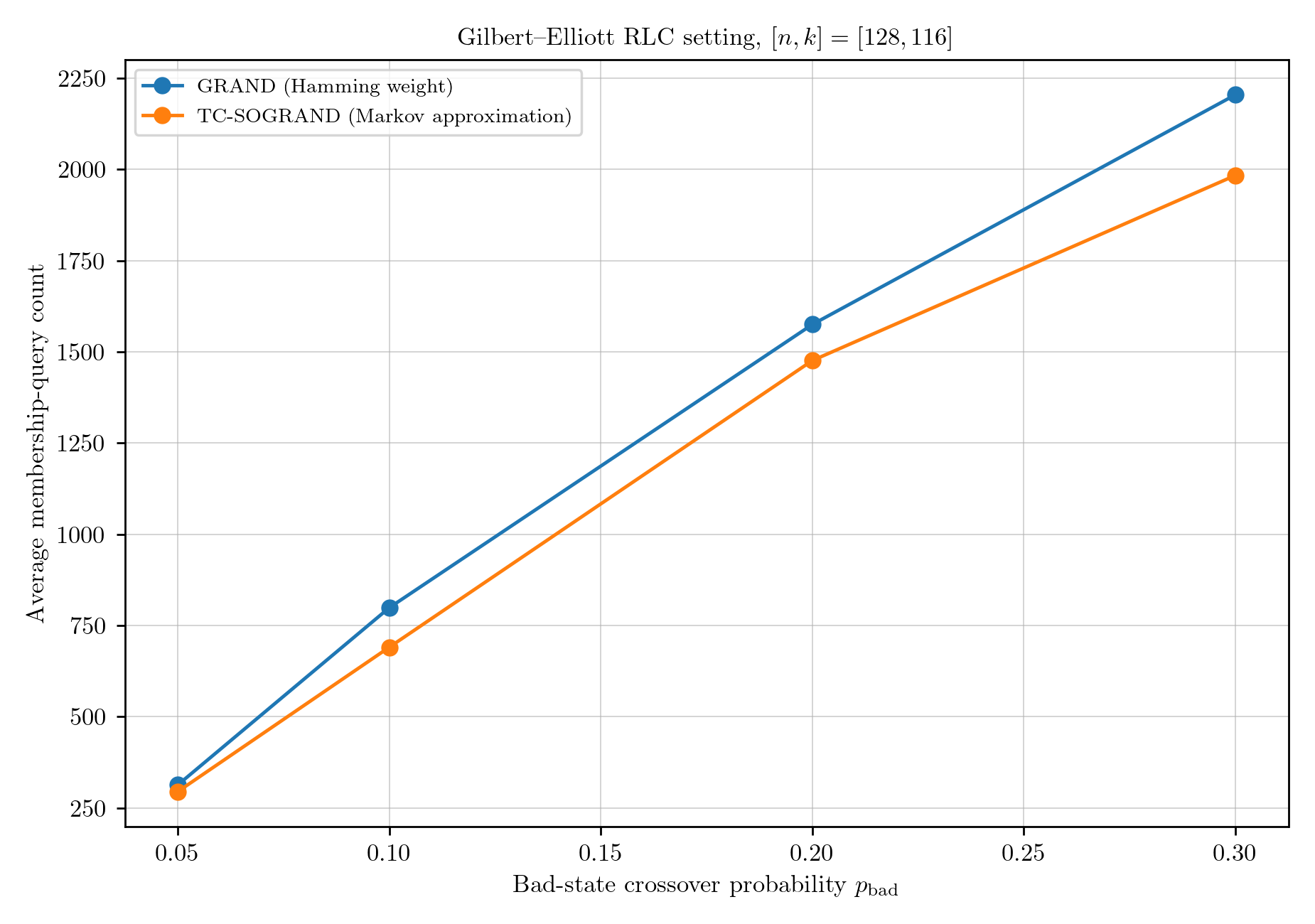}
\end{minipage}
\caption{Gilbert--Elliott hard-decision experiment for the $[128,116]$ RLC with $q_{\max}=20000$ and $5000$ frames per point. The decoder uses a first-order Markov approximation to the marginal binary noise-effect process. Up: BLER. Bottom: average membership-query count.}
\label{fig:ge_all}
\end{figure}

For the Gilbert--Elliott experiment \cite{gilbert1960capacity, elliott1963estimates}, the hidden state takes values $G$ and $B$, with transition probabilities
\[
\Pr\{G\to B\}=0.02, \qquad
\Pr\{B\to G\}=0.30 .
\]
The initial hidden state is drawn from the stationary distribution of this two-state Markov chain. Conditional on the hidden state, the binary noise symbol has crossover probability $10^{-3}$ in state $G$ and $p_{\rm bad}$ in state $B$, where $p_{\rm bad}$ is varied. The decoder uses a first-order Markov approximation to the marginal binary noise-effect process. For each tested value of $p_{\rm bad}$, the transition probabilities of this approximation are estimated from an independent simulated noise sequence of length $2\times10^5$. This decoder does not use the exact hidden-state posterior; it retains only first-order marginal transition statistics of the observed binary noise sequence.

Fig.~\ref{fig:ge_all} and Table~\ref{tab:ge} compare this Markov approximation with Hamming-weight GRAND. For each tested value of $p_{\rm bad}$, the Markov approximation gives a smaller average membership-query count and a lower BLER point estimate, although the BLER difference is very small at $p_{\rm bad}=0.05$. The gains are smaller than in the matched binary Markov experiment, which is consistent with the use of a marginal first-order approximation rather than the exact hidden-state posterior.

\begin{table}[h]
\centering
\caption{Gilbert--Elliott hard-decision experiment for the $[128,116]$ RLC with $q_{\max}=20000$ and $5000$ frames per point. The decoder uses a first-order Markov approximation to the marginal binary
noise-effect process.}
\label{tab:ge}
\resizebox{\linewidth}{!}{
\begin{tabular}{c|cc|cc}
\toprule
& \multicolumn{2}{c|}{Hamming-weight GRAND}
& \multicolumn{2}{c}{Markov approximation} \\
\hline
$p_{\rm bad}$ & BLER & Average query count & BLER & Average query count\\
\midrule
0.05 & 0.0820 & 312.5  & 0.0816 & 294.3\\
0.10 & 0.1950 & 798.4  & 0.1606 & 690.2\\
0.20 & 0.3794 & 1574.9 & 0.3622 & 1476.1\\
0.30 & 0.5270 & 2204.6 & 0.4880 & 1983.6\\
\bottomrule
\end{tabular}}
\end{table}


\section{Conclusion}

We introduced a finite-memory soft-output GRAND decoder whose stopping rule is based on a plug-in estimate of the missing-list probability rather than on a fixed membership-query limit. Under exact nondecreasing-energy enumeration, the first codeword encountered in the query order is an ML codeword for the likelihood model that defines the posterior energy, provided that the search is not abandoned before this codeword is reached. Under the fixed-size random-codebook model, the unqueried codebook-restricted denominator contribution has conditional expectation equal to the ambient posterior tail mass multiplied by the remaining random-codebook occupancy probability. This separates the ambient posterior tail from the codebook-restricted APP denominator and gives the occupancy correction used in the missing-list estimate.
Our formulation extends the random-codebook missing-list denominator calculation of SOGRAND to finite-memory noise-effect posterior models. It also gives blockwise and bitwise APP estimates under the same random-codebook occupancy approximation. The bitwise estimates define APP LLRs for the component decoder. With a priori input LLRs included in the candidate metric, extrinsic LLRs are obtained by subtracting the input LLRs from the resulting APP LLRs. Numerical results in the tested Gauss--Markov random-linear-code setting show lower BLER point estimates and smaller average membership-query counts relative to the tested memoryless and approximate-independence GRAND orderings, while also showing an enumeration-cost tradeoff. Designing lower-complexity enumerators for finite-memory posterior energies and evaluating the resulting soft output inside full iterative product-code and GLDPC
decoders remain directions for future work.

\bibliographystyle{IEEEtran}
\bibliography{references}

\end{document}

%% file: preamble_IEEE.tex
\makeatletter
\@ifundefined{ifisdraft}{\newif\ifisdraft}{}
\isdrafttrue   
\makeatother

\usepackage[utf8]{inputenc} 
\usepackage[T1]{fontenc}    

\usepackage{lipsum}         
\usepackage{url}            
\usepackage{xcolor}         
\usepackage{microtype}      

\definecolor{headercolor}{gray}{0.82}  
\definecolor{zebracolor}{gray}{0.95}   
\definecolor{lightgray}{gray}{0.93}     
\definecolor{darkred}{rgb}{0.6, 0, 0}    

\usepackage{cuted}       
\usepackage{multicol}    
\usepackage{setspace}    
\usepackage{footmisc}    
\usepackage{chngcntr}    
\usepackage{enumitem}    
\usepackage{sansmath}    
\usepackage{comment}
\usepackage{derivative}
\usepackage{ifthen}

\usepackage{amsmath, amssymb, amsfonts, amsbsy} 
\usepackage{mathtools}    
\usepackage{mathrsfs}     
\usepackage{bm}           
\usepackage{dsfont}       
\usepackage{relsize}      
\usepackage{stmaryrd}     
\usepackage{bbm}          

\usepackage{booktabs}     
\usepackage{array}        
\usepackage{tabularx}     
\usepackage{longtable}    
\usepackage{makecell}     

\usepackage{algorithm}
\usepackage{algpseudocode}

\usepackage{graphicx}     
\usepackage{epsfig}       
\usepackage{subcaption}   
\usepackage{caption}      

\usepackage{tikz}
\usetikzlibrary{arrows.meta,positioning,fit,shapes.misc}
\usetikzlibrary{calc}

\usepackage{doi}         
\usepackage{cite}        

\usepackage{hyperref}
\hypersetup{
    colorlinks   = true,
    linkcolor    = {red!50!black},
    citecolor    = {blue!50!black},
    urlcolor     = {black}
}
\usepackage{cleveref}    

\renewcommand\thesection{\arabic{section}}
\renewcommand\thesubsection{\arabic{section}.\arabic{subsection}}
\def\thesectiondis{\thesection.}
\def\thesubsectiondis{\thesectiondis\arabic{subsection}.}

\makeatletter
\def\@seccntformatinl#1{\csname the#1dis\endcsname\hskip 1em\relax}
\makeatother

\usepackage{tocloft}
\usepackage{amsthm}
\newtheorem{theorem}{Theorem}[section]

\newtheorem{remark}[theorem]{Remark}

\def\BibTeX{{\rm B\kern-.05em{\sc i\kern-.025em b}\kern-.08em
    T\kern-.1667em\lower.7ex\hbox{E}\kern-.125emX}}

\newcommand{\orcid}[1]{%
  \href{https://orcid.org/#1}{\includegraphics[width=10pt]{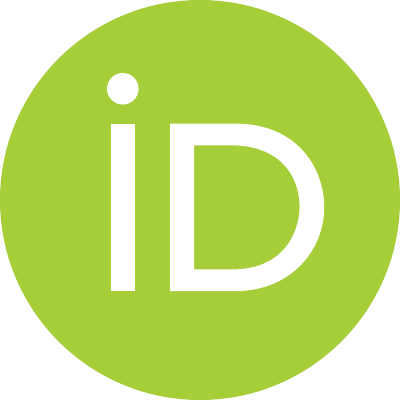}}%
}

\def\markov{\hbox{$\--$}\kern-1.5pt\hbox{$\circ$}\kern-1.5pt\hbox{$\--$}}

\makeatletter
\newcommand*{\centernot}{%
  \mathpalette\@centernot
}
\def\@centernot#1#2{%
  \mathrel{%
    \rlap{%
      \settowidth\dimen@{$\m@th#1{#2}$}%
      \kern.5\dimen@
      \settowidth\dimen@{$\m@th#1=$}%
      \kern-.5\dimen@
      $\m@th#1\not$%
    }%
    {#2}%
  }%
}
\makeatother

\makeatletter
\newcommand*{\rom}[1]{\expandafter\@slowromancap\romannumeral #1@}
\makeatother

